\newtheorem{theorem}{Theorem}[section]
\newtheorem{corollary}{Corollary}[theorem]
\newtheorem{lemma}[theorem]{Lemma}
\newtheorem{definition}[theorem]{Definition}
\newcommand{\st}{\mid} 
\newcommand{\R}{\mathbb{R}} 
\newcommand{\N}{\mathbb{N}} 
\definecolor{warningcol}{rgb}{.99,.1,.1}
\definecolor{todocol}{rgb}{.1,.1,.99}
\definecolor{sketchcol}{rgb}{.2,.7,.35}
\definecolor{outlinecol}{rgb}{.8,.4,.3}
\definecolor{blackcol}{rgb}{.0,.0,.0}
\newcommand{\X}{\mathcal{X}} 
\newcommand{\Noise}{\Omega} 
\newcommand{\noise}{\omega} 
\newcommand{\BdryC}{\textup{\textbf{Boundary Condition}}} 
\newcommand{\InvC}{\textup{\textbf{Invariance Condition}}} 
\newcommand{\RFIS}{\mathcal{R}} 
\newcommand{\mRPI}{\RFIS_s} 
\newcommand{\MRPI}{\RFIS_m} 
\newcommand{\Conv}{\mathrm{Conv}} 
\renewcommand{\P}{\mathcal{P}} 
\newcommand{\eps}{\varepsilon} 
\newcommand{\bdry}{\partial}   
\newcommand{\interior}{\mathrm{int}} 
\newcommand{\Si}{\Delta} 
\newcommand{\Normal}[1]{N_{#1}} 
\newcommand{\V}[1]{\mathrm{Vert}(#1)} 
\newcommand{\SiCo}{\mathcal{K}} 
\newcommand{\SiCoII}{\mathcal{L}} 
\newcommand{\tri}{\mathbb{T}} 
\renewcommand{\star}{\overline{\mathrm{St}}} 
\renewcommand{\S}{\mathbb{S}} 
\newcommand{\Cent}{\S_C} 
\newcommand{\Bary}{\S_B} 
\newcommand{\idx}{\alpha} 
\newcommand{\idxcap}{\mathscr{A}} 
\newcommand{\Vol}{\mathrm{Vol}} 
\renewcommand{\ss}{\delta} 
\newcommand{\invss}{\delta^{-1}} 
\newcommand{\M}{M_{m}} 
\newcommand{\Lm}{l_{m}} 
\newcommand{\B}{\mathcal{B}} 
\newcommand{\TP}{\mathcal{T}} 
\newcommand{\A}{\mathcal{A}} 
\renewcommand{\H}{\phi} 
\newcommand{\CH}{g} 
\renewcommand{\o}{c} 
\newcommand{\decay}{\alpha} 
\newcommand{\NTP}{\mathcal{N}} 
\newcommand{\ray}{\mathcal{B}} 
\newcommand{\SiCoSet}{\mathcal{I}} 
\newcommand{\status}{\textsc{ind}} 
\providecommand{\customgenericname}{}
\newcommand{\newcustomtheorem}[2]{%
  \newenvironment{#1}[1]
  {%
   \renewcommand\customgenericname{#2}%
   \renewcommand\theinnercustomgeneric{##1}%
   \innercustomgeneric
  }
  {\endinnercustomgeneric}
}
\begin{document}

\title{Computing Robust Forward Invariant Sets of Multidimensional Non-linear Systems via Geometric Deformation of Polytopes}

\author{Taha Ameen, Shayok Mukhopadhyay, and Nasser Qaddoumi
    \thanks{The authors are with the Department of Electrical Engineering, American University of Sharjah, Sharjah, United Arab Emirates. Emails: \{b00066555, smukhopadhyay, nqaddoumi\}@aus.edu}
}


\maketitle


\begin{abstract}
This paper develops and implements an algorithm to compute sequences of polytopic Robust Forward Invariant Sets (RFIS) that can parametrically vary in size between the maximal and minimal RFIS of a nonlinear dynamical system. This is done through a novel computational approach that geometrically deforms a polytope into an invariant set using a sequence of homeomorphishms, based on an invariance condition that only needs to be satisfied at a finite set of test points. For achieving this, a fast computational test is developed to establish if a given polytopic set is an RFIS. The geometric nature of the proposed approach makes it applicable for arbitrary Lipschitz continuous nonlinear systems in the presence of bounded additive disturbances. The versatility of the proposed approach is presented through simulation results on a variety of nonlinear dynamical systems in two and three dimensions, for which, sequences of invariant sets are computed.
\end{abstract}

\begin{IEEEkeywords}
Invariant Sets, Robust Forward Invariance, Nonlinear dynamical systems, Computational Topology. 
\end{IEEEkeywords}

\IEEEpeerreviewmaketitle

\section{Introduction} \label{S: Introduction}


\IEEEPARstart{A} Forward Invariant Set (FIS) for a dynamical system is a subset of the state space, from which state vector trajectories never escape as time runs forward. Invariant sets are fundamental objects in dynamical systems, and find applications in reachability analysis, as well as characterization of system stability and robustness. The latter is particularly useful in applications with strict performance and safety requirements, as the FIS provides bounds on system states. This is more valuable when such bounds can be guaranteed in the presence of disturbances through finding Robust Forward Invariant Sets (RFIS) for the system. Because any system trajectory that begins in an FIS stays in it, finding an invariant set that is disjoint from an unsafe set of states can ensure that system trajectories never reach an undesirable subset of the state space, if the initial conditions are picked to start within such a computed FIS. Further, an RFIS of the appropriate size can guarantee safety and robustness without imposing over-conservative or over-aggressive bounds on system trajectories. Thus motivated, this work focuses on computing RFISs in an $n$-dimensional setting. The two extremes, which are useful for performance analysis and controller synthesis~\cite{Rakovic_TAC_2005}, are the minimal RFIS, $\mRPI$ and the maximal RFIS $\MRPI$. 

Invariant sets also find application in Model Predictive Control (MPC), where they are used as target sets, and can be used to provide domains of attraction for MPC-based controllers~\cite{Bravo_Automatica_2005}. Of special interest are polytopic FIS, which guarantee stability without compromising any loss of performance~\cite{Rubin_ECC_2018, Cannon_Automatica_2003}. It is therefore no surprise that the characterization and computation of robust forward invariant sets for various systems is an active area of research, as seen in the following review of literature.


\par
Traditional approaches for computing estimates of the invariant set include Carleman Linearizations~\cite{Loparo_TAC_1978} and PDE-based approaches such as Zubov's method~\cite{Genesio_TAC_1985}. However, advances in computational processing have shifted the paradigm towards \emph{polytopic} invariant sets, which are non-conservative but lead to computationally intensive algorithms~\cite{Blanchini_Automatica_1999}. Subsequently, the vast majority of recent approaches involve optimization of objective functions that are derived from invariance conditions, over polytopic constraint sets. These invariance conditions are usually algebraic constraints that candidate sets must satisfy for being invariant, and have various characterizations depending on the linearity of the system dynamics, as well as whether they are continuous or discrete in time~\cite{Blanchini_Automatica_1999}.  

Invariant sets for linear systems have been extensively studied. For instance, theoretical considerations for their existence are outlined in~\cite{Castelan_TAC_1993}. Existing approaches for calculation of invariant polytopes for linear systems include constraint tightening~\cite{Ghaemi_CDC_2008}, finite-time Aumann integrals~\cite{Rakovic_CDC_2007}, MPC schemes~\cite{Rohal-Ilkiv_ARC_2004}, state prediction trees~\cite{Pluymers_ACC_2005} and semi-definite programs~\cite{Blanco_IJC_2010} to name a few. Despite this progress, invariant set computation for linear systems continues to be an active field of research, with recent works focusing on robust invariance~\cite{Seron_ANZCC_2019} and set invariance in the presence of a control action (control invariance)~\cite{Gupta_Automatica_2019, Liu_IJRNC_2019, Anevlavis_CDC_2019, Anevlavis_ICHSCC_2020, Yu_IJCAS_2018}. Among linear systems, the class of discrete time systems has received considerable attention, following the work in~\cite{Kolmanovsky_MPE_1998}. Approaches include geometric construction using zonotopic bounds~\cite{Stoican_IFAC_2011}, Linear Matrix Inequality (LMI)-based algorithms~\cite{Rubin_ECC_2018, Tahir_TAC_2015}, Minkowski partial sums-based approximations~\cite{ Ong_CDC_2005, Ong_Automatica_2006}, linear programming~\cite{Trodden_TAC_2016}, and control invariance using convex optimization~\cite{Rakovic_TAC_2010}. Some of these approaches have also been extended to piecewise affine systems, such as the works in~\cite{Vasak_ECC_2007, Rakovic_CDC_2004, Alamo_AFD_2007, Dan_ICARCV_2006}. Finally, computation of invariant sets for Linear Parameter-Varying (LPV) systems has also been studied, for instance in~\cite{Miani_CDC_2005, Pluymers_ACC_2005b, Klintberg_CSL_2020}. Recently, viability theoretic approaches~\cite{Aubin_Viability_2009} to modeling disturbances in systems have gained attraction. For example, the authors in~\cite{Fiacchini_SCL_2012} model discrete time systems with disturbances as convex difference inclusions and derive invariance conditions. Another example is~\cite{Fiacchini_ACC_2011}, where the authors use differential inclusions to derive control invariant polytopic sets. Similar approaches are used in~\cite{Kouramas_CDC_2005, Rakovic_Automatica_2005} to model additive state disturbances for computing the minimal RFIS of linear systems.

In contrast, the body of literature on nonlinear systems is relatively limited, where most of the work focuses on discrete-time systems or is restricted to certain classes of nonlinear systems. Examples include~\cite{Artstein_Automatica_2007}, where the authors study robust invariance in discrete-time nonlinear systems by lifting the feedback operation to the space of sets, and~\cite{Fiacchini_Automatica_2010, Fiacchini_CDC_2007}, where the authors compute control invariant sets for such systems using differences of convex functions. For continuous-time nonlinear systems, the focus has primarily been on polynomial dynamics. For example, the authors in~\cite{Iannelli_IFAC_2018} use sum of squares relaxations to solve an optimization problem that estimates the region of attraction (ROA) of polynomial systems. Similarly, the authors in~\cite{Henrion_TAC_2014, Korda_NOLCOS_2013} formulate this ROA estimation problem as an infinite-dimensional linear program. Yet another approach involves Lyapunov type methods such as the work in~\cite{Valmorbida_ACC_2014}, and the work in~\cite{Topcu_TAC_2010}, where parameter independent Lyapunov functions are used to characterize invariant sets of polynomial systems with bounded disturbances. Of particular relevance to our work is the work in~\cite{Sassi_Automatica_2012, Sassi_CDC_2014}, where a test for invariance is developed based on the sub-tangentiality condition, which compares the angle between the system dynamics and the tangent cone at the boundary of a candidate invariant set~\cite{Blanchini_Automatica_1999}. However, due to the computational infeasibility of testing for the condition at infinitely many points, the authors limit the scope of their work to dynamical systems with polynomial vector fields, for which the problem of testing the sub-tangentiality condition is overcome using optimization-based approaches. 

All these works use algebraic rather than geometric formulations. This is motivated by the algebraic nature of set invariance conditions, and the concise representation of polytopes as LMIs, which provide convenient formulations for optimization-based approaches. However, no general algebraic framework exists for computing invariant sets for arbitrary nonlinear systems. Subsequently, geometric and computational approaches have attracted some attention. Even so, the majority of computational approaches deal with linear and polynomial systems. Examples can be found in~\cite{Lazar_ACC_2006, Rakovic_IFAC_2008, Alessio_Automatica_2007}, where polytopic FIS are constructed between ellipsoidal sets, and~\cite{Schreier_TCS_1997}, where geometric methods are used to search for a convex FIS. Other approaches include~\cite{Blanco_ECC_2007}, where the authors determine symmetrical polytopic invariant sets with state, input and rate constraints for systems without disturbances, and~\cite{Schaich_CDC_2015}, where authors derive the maximal robust positively invariant set for linear systems with additive disturbances. Finally, a recent line of computational approaches use data-driven techniques, such as~\cite{Chakrabarty_CDC_2018}, where the authors sample state trajectories starting from a variety of initial conditions to infer invariant sets for discrete time linear systems, and~\cite{Chen_CDC_2018}, where the authors use machine learning methods for systems whose dynamics are not explicitly modeled with equations.

The most significant advantage of geometric approaches is their ability to generalize to wide classes of systems due to their non-reliance on the explicit algebraic representation of system dynamics. This is evident from some recent works such as~\cite{Mukhopadhyay_Thesis_2014, Mukhopadhyay_ACC_2014, Mukhopadhyay_AJC_2020}, where closed polytopes are constructed using path planning algorithms on radial graphs for arbitrary nonlinear systems in continuous time, but limited to systems in $\R^2$. Another illustration of this can be found in~\cite{Varnell_CDC_2016, Varnell_ACC_2018}, where simplex-based approaches are used for arbitrary nonlinear systems. Besides these limited works, few computation-oriented results are available for generic nonlinear systems~\cite{Fiacchini_ACC_2011}.


This work focuses on Lipschitz continuous nonlinear systems in continuous time, and computes sequences of polytopic robust forward invariant sets in the presence of bounded additive disturbances. We adopt a geometric approach, using concepts from viability theory~\cite{Aubin_Viability_2009} and computational topology~\cite{Edelsbrunner_Book_2010}. Doing so allows us to develop a framework that does not impose any restrictions on the system dynamics, besides the usual conditions for existence and uniqueness of solutions~\cite{Khalil_Nonlinear_2002}. Further, no additional assumptions on the nature of disturbances are made besides their being bounded and additive. Thus, our approach is valid for a large class of nonlinear systems. To the best of our knowledge, this is the first work that develops an implementable computer algorithm for computing an RFIS of continuous-time nonlinear systems in the presence of disturbances without any further restrictions on state vector dimensions and trajectories. Our strategy is to geometrically deform a polytope into an RFIS using a sequence of homeomorphisms that are guided by an invariance condition. In contrast to previous work, we propose an explicit set of test points and prove that the satisfaction of an invariance condition at these test points is sufficient to conclude that a polytopic set is an RFIS for the system. Thus, the contribution of our work is two-fold: 
\begin{itemize}
    \item We propose a test for invariance of a polytopic RFIS by explicitly identifying test points on the boundary of the polytope.
    \item We invoke this invariance test to develop an algorithm that is easily implementable, and can generate sequences of invariant sets that vary in size between the maximal and minimal invariant sets. for $n$-dimensional Lipschitz continuous nonlinear systems with bounded additive disturbances. 
\end{itemize} 

The rest of this paper is organized as follows. Section~\ref{S: Mathematical Preliminaries} reviews the necessary mathematical background, and Section~\ref{S: Problem Formulation} formulates the problem and outlines the approach. In Section~\ref{S: Invariance Test}, an algorithmic test for invariance is developed. It is used as a tool in Section~\ref{S: Algorithm} to develop a computational algorithm for generating sequences of RFISs. Section~\ref{S: Simulations} presents the results of implementing this algorithm on a variety of nonlinear systems, and Section~\ref{S: Conclusion} draws conclusions.

\section{Mathematical Background} \label{S: Mathematical Preliminaries}


This section reviews mathematical background used in the rest of the work. Section~\ref{SS: Computational_Topology} introduces set-valued maps and Hausdorff distance, which are useful in modeling nonlinear systems with disturbances. Next, it reviews polyhedral sets, simplices and simplicial complexes - which allow modeling of the boundary of a candidate polytopic RFIS. Finally, it presents vertex maps and their induced homeomorphisms on simplicial complexes, which allow structure-preserving deformations of polytopes. Section~\ref{SS: Set Invariance} formalizes the notion of set invariance.

\subsection{Computational Topology} \label{SS: Computational_Topology}


\begin{definition}[Set-Valued Map]
Let $X$ and $Y$ be topological spaces. $F(\cdot)$ is said to be a set-valued map from $X$ to $Y$ if $F(x) \subset Y \ \forall x \in X$.
\end{definition}

Set-valued maps are used in defining differential inclusions~\cite{Aubin_Inclusions_2012}, which are of the form
\begin{equation}
    \dot{x}(t) \in F(x(t)),
\end{equation}
where $F(\cdot)$ is a set-valued map. This work uses differential inclusions to model nonlinear system dynamics in the presence of disturbances, in Section~\ref{S: Problem Formulation}. Since such sets are fundamental objects for our purposes, it is convenient to review some relevant definitions. We begin with the Hausdorff metric, which is a measure of set-to-set distance.

\begin{definition}[Hausdorff Distance]
The Hausdorff distance between two sets $X$ and $Y$ is denoted $d_H(X,Y)$ and calculated as
\begin{equation} \label{Eq: Hausdorff_Distance}
        \! d_H(X,Y) = \max \! \left( \sup_{x \in X } \inf_{y \in Y} \| x\!-\!y \|, \sup_{y \in Y} \inf_{x \in X} \| x\!-\!y \| \!\right).
\end{equation}
\end{definition}

Next, we review basics about polyhedra, simplices, simplicial complexes and triangulation using simplices.

\begin{definition}[Polyhedron]
    A polyhedron $\P \subset \R^n$ is defined as the intersection of finitely many half-spaces. Thus,
    \begin{align}
        \P = \{x \in \R^n \st Ax \leq b\},
    \end{align}
    where $A \in \R^{m \times n}$, $b \in \R^m$ and $\leq$ denotes componentwise inequality.
\end{definition}

Bounded polyhedra are called polytopes. A polytope which is also a convex set is said to be a convex polytope. This definition is standard in literature, and convenient for convex optimization based approaches. A convex polytope can be equivalently defined as the convex hull of its vertices~\cite{Boyd_Book_2004}: 
\begin{align}
    \P = \Conv(\{v_1, \cdots, v_m\}),
\end{align}
where $\Conv(\cdot)$ represents the convex hull of the set of points $v_1,\cdots,v_m \in \R^n$. 

\begin{figure}[t]
    \centering
    \includegraphics[width = 0.49 \textwidth]{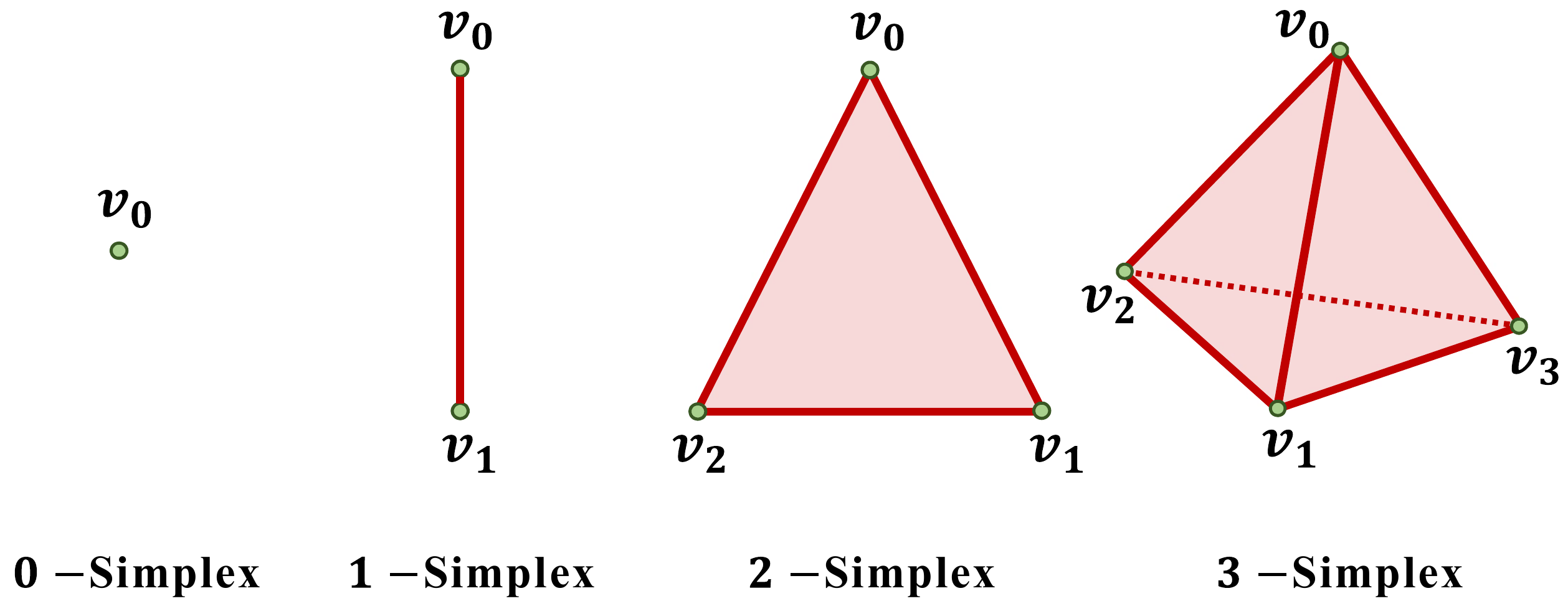}
    \caption{Examples of Simplices}
    \label{fig: Simplices_Examples}
\end{figure}

Simplices are generalization of triangles to higher dimensions, and are fundamental objects in algebraic topology, where they are used as building blocks for creating topological manifolds~\cite{Edelsbrunner_Book_2010}. 


\begin{definition}[Simplex]
    A simplex, $\Si \subset \R^n$ of dimension $k$ (also called a $k$-simplex), with $0 \leq k \leq n$ is the convex hull of a set of $(k+1)$ vertex vectors, $\V{\Si} = \{v_0, \cdots, v_k\} \subset \R^n$, such that the matrix
    \begin{align} \label{Eq: B_Delta}
    B_{\Si} = \big[ (v_1 - v_0), \cdots, (v_k-v_0) \big] \in \R^{n\times k}
    \end{align}
    has linearly independent columns.
\end{definition}

Since the vertices uniquely define a simplex, we identify $\Si$ with a matrix, $L_\Si \in \R^{(k+1) \times n}$, where row $i$ of $L_{\Si}$ is $v_{i+1}^T$:
\begin{align} \label{Eq: L_Delta}
    L_{\Si} = \begin{bmatrix}  v_0 , v_1, \cdots, v_{k} \end{bmatrix}^T
\end{align}
Each face of $\Si$ is defined as the convex hull of a proper subset of $\V{\Si}$. Note that an $n$-simplex, $\Si$, itself contains $k$-simplices where $k < n$. We refer to each of these \enquote{sub-simplices} as $k$-\emph{faces} of $\Si$. Figure~\ref{fig: Simplices_Examples} shows some examples.    

We will be interested in $(n-1)$-simplices embedded in $\R^n$, and will construct invariant sets with boundaries made up of these simplices. A simplex can have one of two possible orientations, as determined by the ordering of its vertices. For an $(n-1)$-simplex, $\Si \subset \R^n$, this orientation is equivalently characterized by the direction of the normal vector to $\Si$,
\begin{align} \label{Eq: Normal_Of_Simplex}
    (\Normal{\Si})_i = (-1)^{n+i} \det (B_{\setminus i}),
\end{align}
where $(\Normal{\Si})_i$ is the $i$-th component of the normal vector, and $B_{\setminus i} \in \R^{(n-1)\times (n-1)}$ is the matrix obtained by deleting the $i$-th row of $B_\Si$. Further, an $n$-simplex, $\Si \subset \R^n$ has a volume,
\begin{align}\label{Eq: Volume_Of_Simplex}
    \Vol(\Si) = \left| \frac{1}{n!} \det(B_{\Si}) \right|.
\end{align}


\begin{definition}[Simplicial Complex]
A geometric simplicial complex $\SiCo$ is a set of simplices such that
\begin{enumerate}
    \item If $\Si \in \SiCo$, then for any face $\Si'$ of $\Si$, we have $\Si ' \in \SiCo$.
    \item If $\Si_1$, $\Si_2 \in \SiCo$, then $\Si_1 \cap \Si_2$ is either the empty set or a face of both $\Si_1$ and $\Si_2$.
\end{enumerate}
\end{definition}
 \begin{figure}[t]
	\centering
	\subfigure[$\P$.]
	{
		\includegraphics[width=0.145 \textwidth]{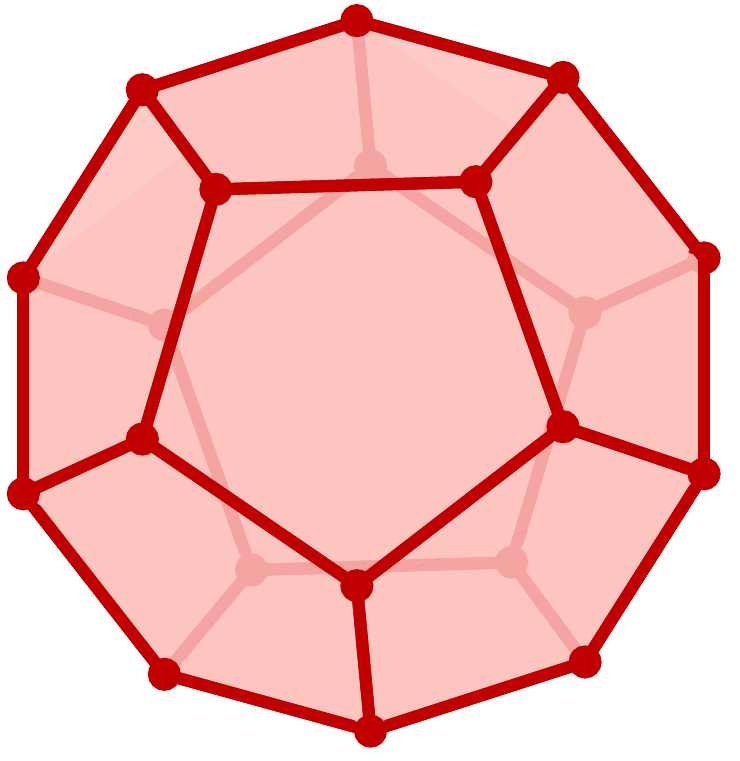}
		\label{fig: Dodecahedron}
	}
	\hfil
	\subfigure[$\tri(\bdry \P)$.]
	{
        \includegraphics[width=0.145 \textwidth]{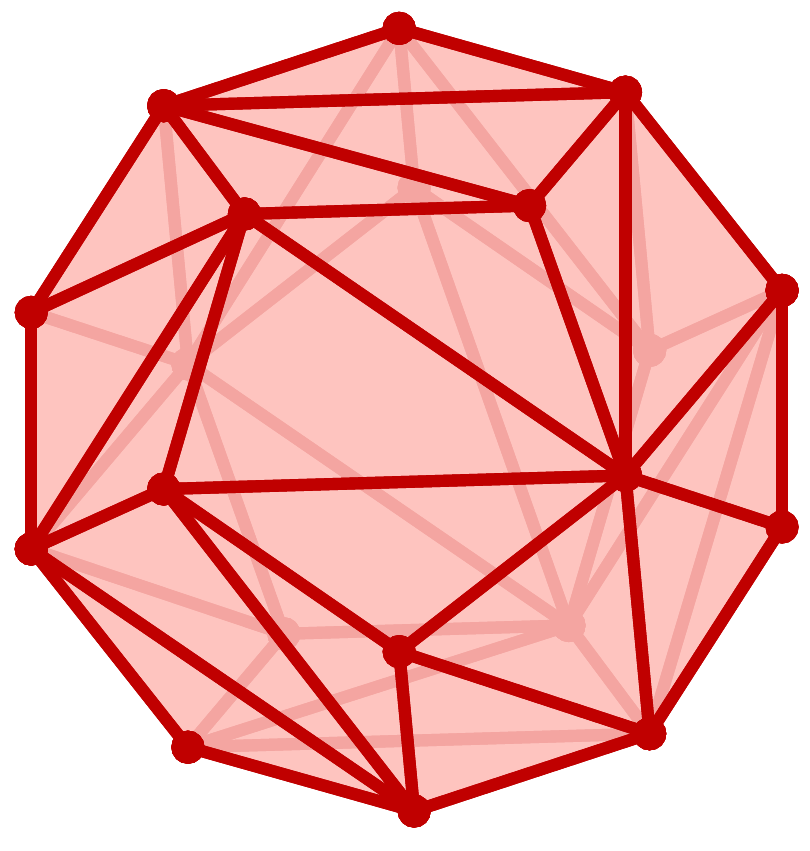}
		\label{fig: Dodecahedron_Triangulated}
	}
	\hfil
	\subfigure[$\star_{\SiCo}(v) \subset \tri(\bdry \P)$.]
	{
    \includegraphics[width=0.145 \textwidth]{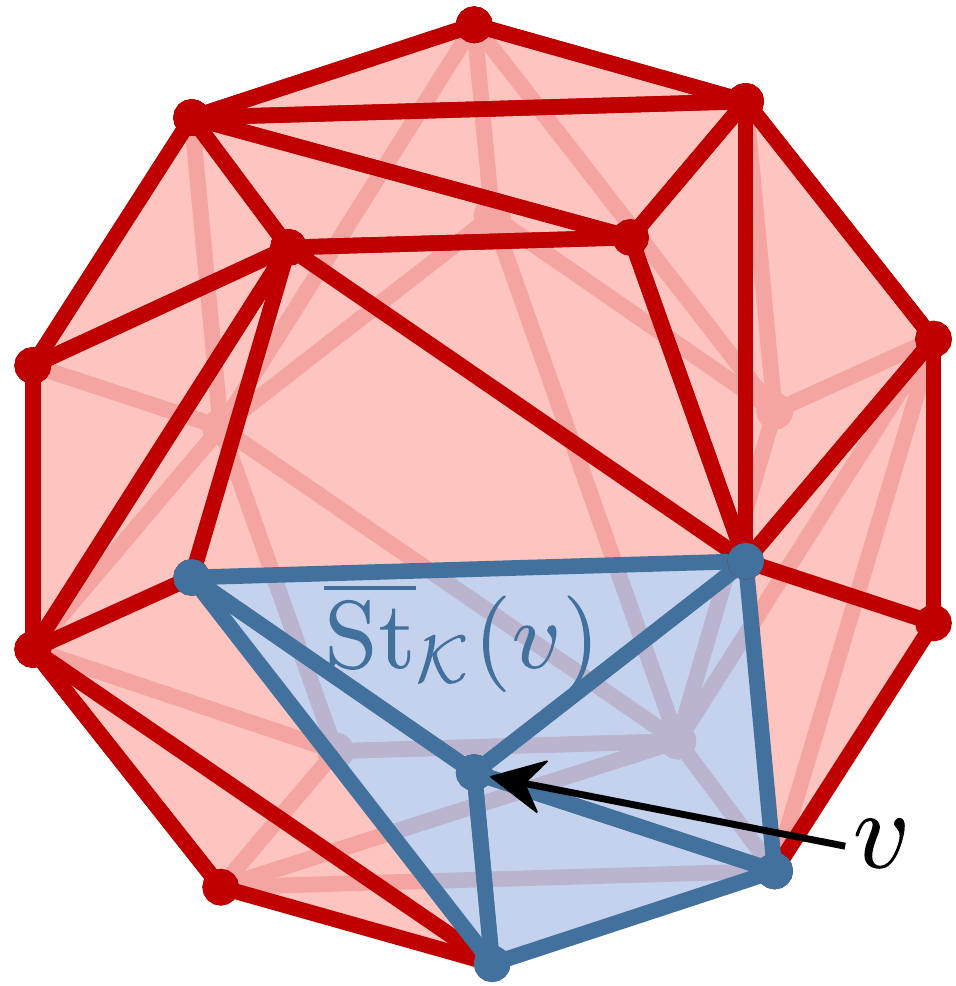}
	\label{fig: Star_of_Vertex}
	}
	\caption{Example of Triangulating the Boundary of a Convex Polytope.}
	\label{fig: Triangulation}
\end{figure}

A simplicial $k$-complex is a simplicial complex with the additional property that the largest dimension of any simplex in $\SiCo$ equals $k$. Finally, a \emph{homogeneous simplicial $k$-complex} is a simplicial complex where every simplex of dimension less than $k$ is a face of a $k$-simplex~\cite{Edelsbrunner_Book_2010}. In the remainder of this work, we will take a simplicial complex to mean a homogeneous simplicial $(n-1)$-complex in $\R^n$. These simplicial complexes are particularly relevant to our work because they triangulate convex polytopes, which will serve as a starting point for our algorithm. The triangulation of the boundary of a polytope $\P$, denoted $\tri(\bdry \P)$, is a set of simplices that partition $\bdry \P$, such that their union is the boundary of the polytope and the intersection of the \emph{interiors} of any two simplices in $\tri(\bdry \P)$ is empty. This is illustrated for a dodecahedron in Fig.~\ref{fig: Triangulation}. Figure~\ref{fig: Dodecahedron} shows the polytope $\P$, whereas Fig.~\ref{fig: Dodecahedron_Triangulated} shows the polytope with triangulated boundary. Each triangle on the boundary is a simplex, and the set of all simplices that form the boundary is a simplicial complex, $\SiCo = \tri(\bdry \P)$. The set of all $0$-faces of $\SiCo$ is then given by $\V{\SiCo} = \bigcup_{\Si \in \SiCo} \V{\Si}$.
Given a vertex $v \in \V{\SiCo}$, we denote the closed star of $v$ as $\star_{\SiCo}(v)$, which is the set of all simplices in $\SiCo$ that contain $v$ as a $0$-face. For a homogeneous simplicial $(n-1)$-complex embedded in $\R^n$, this is easily visualized, as shown in Fig.~\ref{fig: Star_of_Vertex}. Observe that $\star_{\SiCo}(v)$ is itself a simplicial complex.

We also remark that a convex polytope $\P$, being a convex compact subset of $\R^n$ forms a topological $n$-manifold with boundary~\cite{Bredon_Topology_2013}. Therefore, $\bdry \P$ forms an $(n-1)$-manifold~\cite{Lee_Manifolds_2010}. Following standard notation, we denote the geometric realization of $\SiCo$ by $|\SiCo|$, which is the topological space obtained by gluing all the simplices together, as determined by their faces. Since $\tri(\bdry\P)$ and $\SiCo$ are the same set, it follows that $\bdry\P$ and $|\SiCo|$ are homeomorphic topological manifolds~\cite{Edelsbrunner_Book_2010}. Recall that two topological manifolds $X$ and $Y$ are homeomorphic if $\exists \ \CH: X \rightarrow Y$ such that $\CH$ is invertible, and both $\CH$ and $\CH^{-1}$ are continuous.

\begin{figure}[t]
	\centering
	\subfigure[$\SiCo$.]
	{
		\includegraphics[width=0.227 \textwidth]{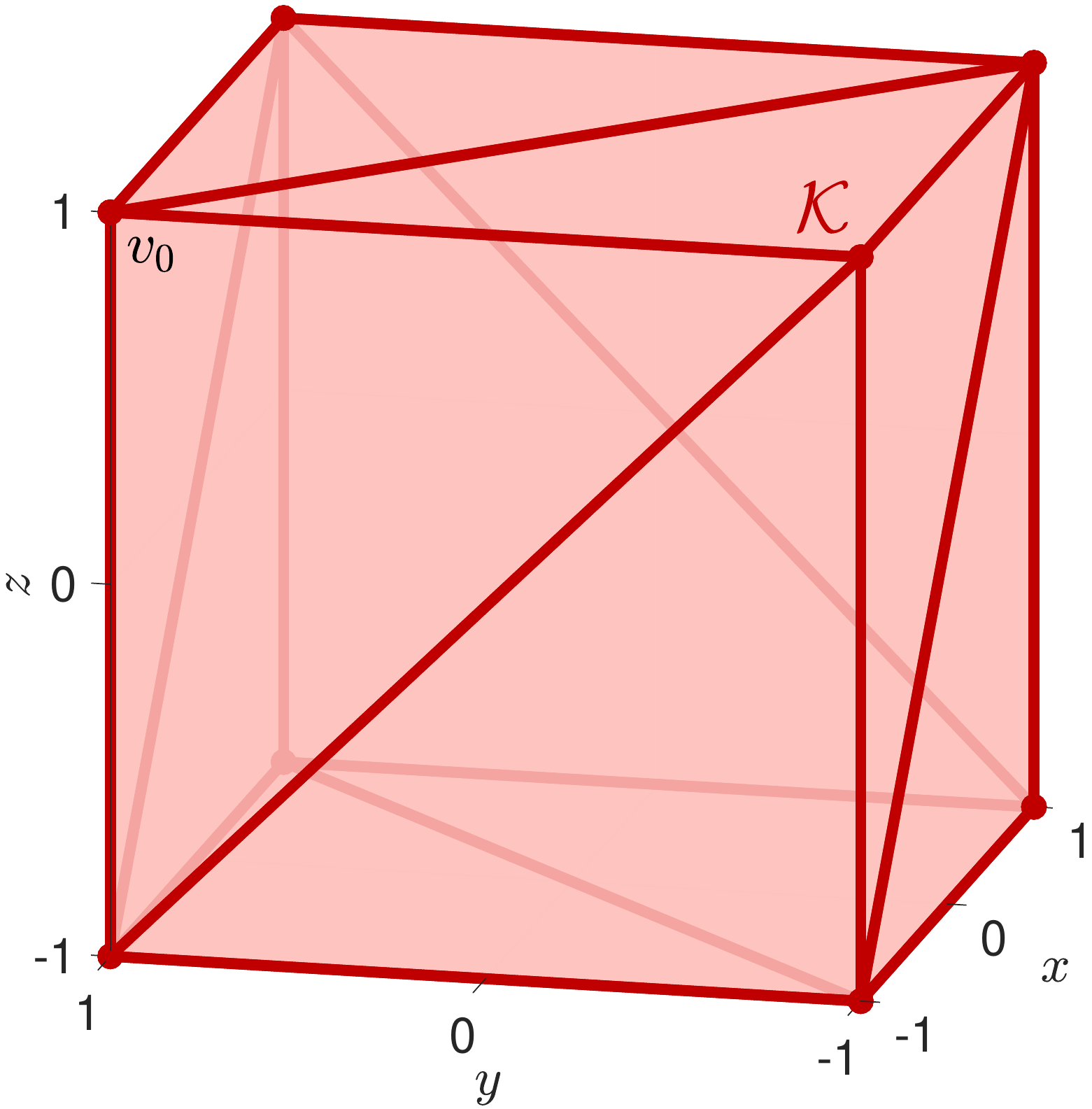}
		\label{fig: Cube_K}
	}
	\hfil
	\subfigure[$\SiCoII$.]
	{
        \includegraphics[width=0.227 \textwidth]{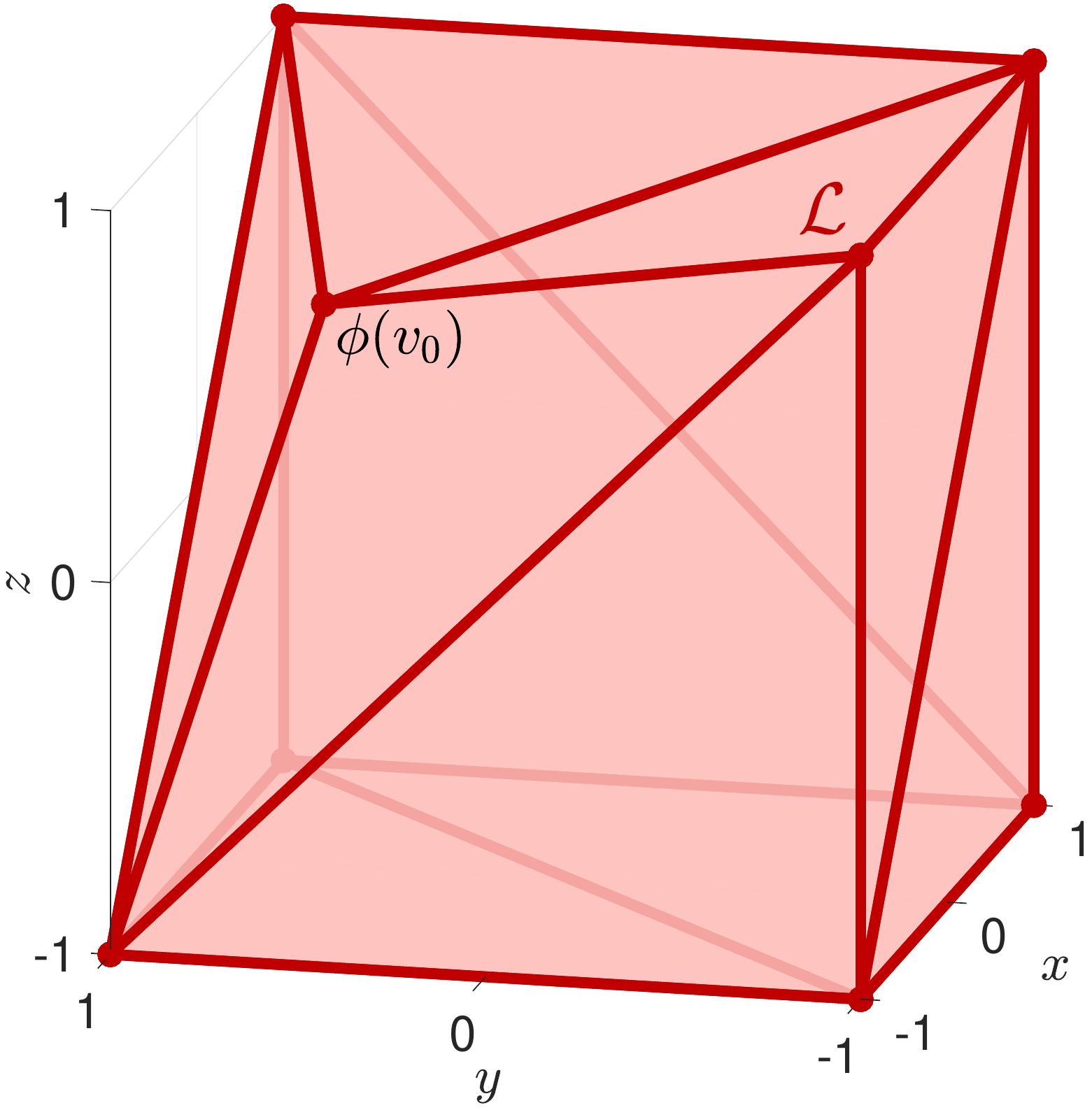}
		\label{fig: Cube_L}
	}
	\caption{Example of an induced homeomorphism from $\SiCo$ to $\SiCoII$.}
	\label{fig: Vertex_Map}
\end{figure}

In order to deform polytopes, it is convenient to study homeomorphisms between them. Homeomorphisms will be used in our work to deform polytopes into robust forward invariant sets. These homeomorphisms are induced from simplicial maps with the help of barycentric coordinates as explained in what follows. Simplicial maps are mappings between simplicial complexes that are the natural equivalent of continuous maps between topological spaces. A \emph{vertex map} between two simplicial complexes $\SiCo$ and $\SiCoII$ is a function $\H : \V{\SiCo} \rightarrow \V{\SiCoII}$ such that the vertices of every simplex in $\SiCo$ map to the vertices of a simplex in $\SiCoII$. $\H$ can be extended to a continuous map $\CH: |\SiCo| \rightarrow |\SiCoII|$, as
\begin{align} \label{Eq: Cont_Homeo}
    \CH (x) = \sum_{i = 0}^{N}b_i(x) \H (v_i),
\end{align}
 where $b_i(x)$ are the barycentric coordinates of $x$, defined in the next paragraph. The continuous map $\CH$ is called the \enquote{induced map} due to $\H$, and is unique for a given vertex map. Further, if $\H$ is bijective, and $\H^{-1}: \V{\SiCoII} \rightarrow \V{\SiCo}$ is also a vertex map, then $\CH$ is a homeomorphism between $|\SiCo|$ and $|\SiCoII|$~\cite{Edelsbrunner_Book_2010}. An example is illustrated in Fig.~\ref{fig: Vertex_Map}, where $\SiCo$ triangulates a cube with vertices $(\pm 1, \pm 1, \pm 1)$ as shown in Fig.~\ref{fig: Cube_K}. The vertex map $\phi: \V{\SiCo} \rightarrow \V{\SiCoII}$ maps $v_0 = (-1,1,1)$ to $\H(v_0) = (-0.7,0.5,0.7)$ and all other vertices to themselves. The resulting simplicial complex, $\SiCoII$ is shown in Fig.~\ref{fig: Cube_L}.

The barycentric coordinates of a point $x$ may be found as follows. Let $\SiCo$ be a simplicial complex with $\V{\SiCo} = \{v_0, v_1, \cdots, v_N\}$. Let the set of indices $\idxcap = \{\idx_0, \idx_1, \cdots, \idx_{n-1} \} \subset \{0,1,\cdots,N\}$, so that a given simplex, $\Si \in \SiCo$ has $\V{\Si} = \{v_{\idx_0}, \cdots, v_{\idx_{n-1}}\}$. Thus, any point $x \in \Si$ can be written as $x = \sum_{i \in \idxcap} \lambda_i v_{i}$, with $\sum_{i \in \idxcap}\lambda_i = 1$ and $\lambda_i \geq 0 \ \forall i$. The \emph{barycentric coordinates} of a point $x \in \Si$ are obtained by setting $b_i(x) = \lambda_i \ \forall i \in \idxcap$ and $b_i(x) = 0 \ \forall i \notin \idxcap$. Note that a point in a homogeneous simplicial $(n-1)$-complex can be identified by its barycentric coordinates, as $b_i(x)$ is unique even if the point lies in the intersection of multiple simplices of the complex.

Another concept from computational topology that we will use is simplex subdivisioning. A simplex subdivision of $\Si$, denoted $\S(\Si)$ is a simplicial complex obtained by introducing new vertices in the interior of the simplex, $\interior(\Si)$, and constructing a triangulation of $\Si$. We overload notation and use $\S(\SiCo)$ to represent the simplicial complex obtained as
\begin{align}
\S(\SiCo) = \bigcup_{\Si \in \SiCo} \S(\Si).
\end{align}
There are two key properties: First, the vertices of the simplices in $\S(\Si)$ can be ordered so that the normal direction for each $s \in \S(\Si)$ is the same as that of $\Si$. Second, the geometric realization of the subdivision is the same as that of the original complex, i.e. $|\S(\SiCo)| = |\SiCo|$~\cite{Edelsbrunner_Book_2010}. Thus, subdivisioning does not change the normal directions or the geometric realization.

There are several ways to introduce new vertices in the interior of the simplex for subdivisioning. Examples include centrodial and barycentric subdivisions, which are illustrated in Fig.~\ref{fig: Simplex_Subdivisions}. We will denote the resulting simplicial complexes with $\Cent(\cdot)$ and $\Bary(\cdot)$ respectively. Given a $k$-simplex, $\Si$, we can construct $\Cent(\Si)$ and $\Bary(\Si)$ knowing only $\V{\Si}$. In particular, $\Cent(\Si)$ subdivides a $k$-simplex into three $k$-simplices by introducing a new vertex at the centroid of $\Si$, (the coordinate of which can easily be calculated as $v_b = \frac{1}{k+1}\sum_{i=0}^{k}v_i$), and connecting each $v_b$ to $v_i \ \forall i$. Similarly, $\Bary(\Si)$ introduces new coordinates at the barycentre of all faces of $\Si$. The vertices of $\Si$ are connected to the new vertices iteratively (see~\cite{Edelsbrunner_Book_2010} for procedure). 

\begin{figure}[t]
	\centering
	\subfigure[$\Si$.]
	{
		\includegraphics[width=0.145 \textwidth]{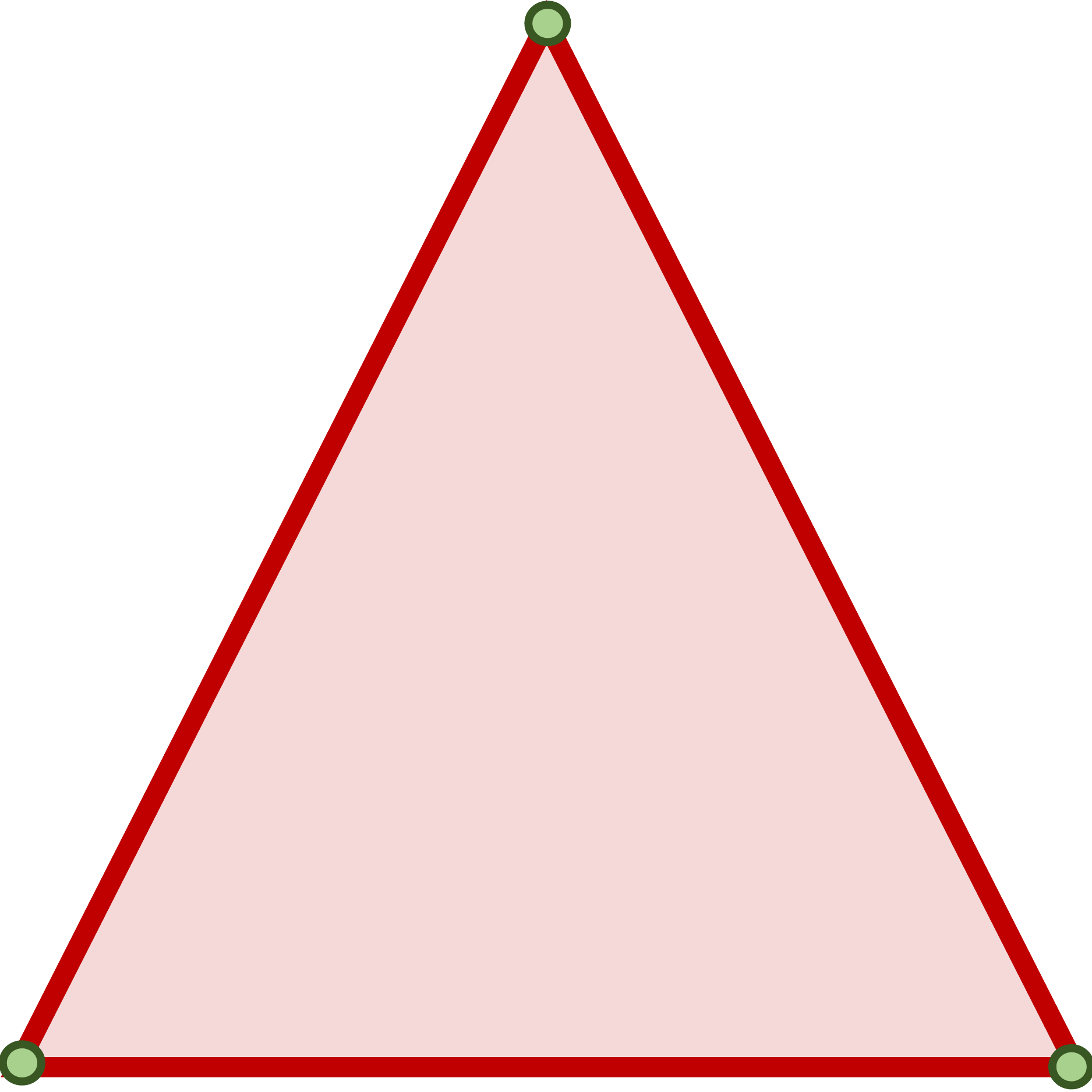}
		\label{fig: Simplex_Without_Subdivision}
	}
	\hfil
		\subfigure[$\Cent(\Si)$.]
	{
        \includegraphics[width=0.145 \textwidth]{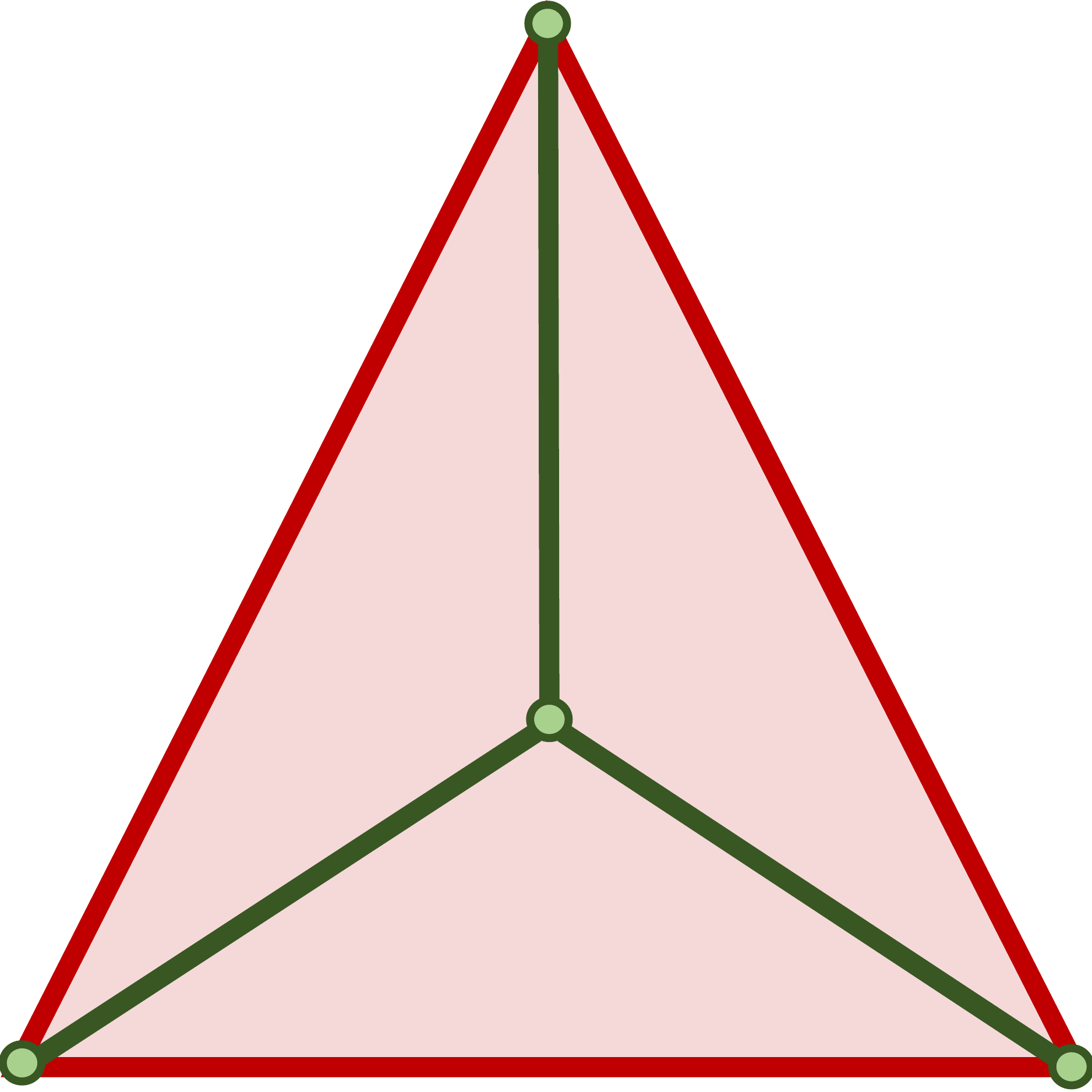}
		\label{fig: Centroidal_Subdivision}
	}
	\hfil
	\subfigure[$\Bary(\Si)$.]
	{
        \includegraphics[width=0.145 \textwidth]{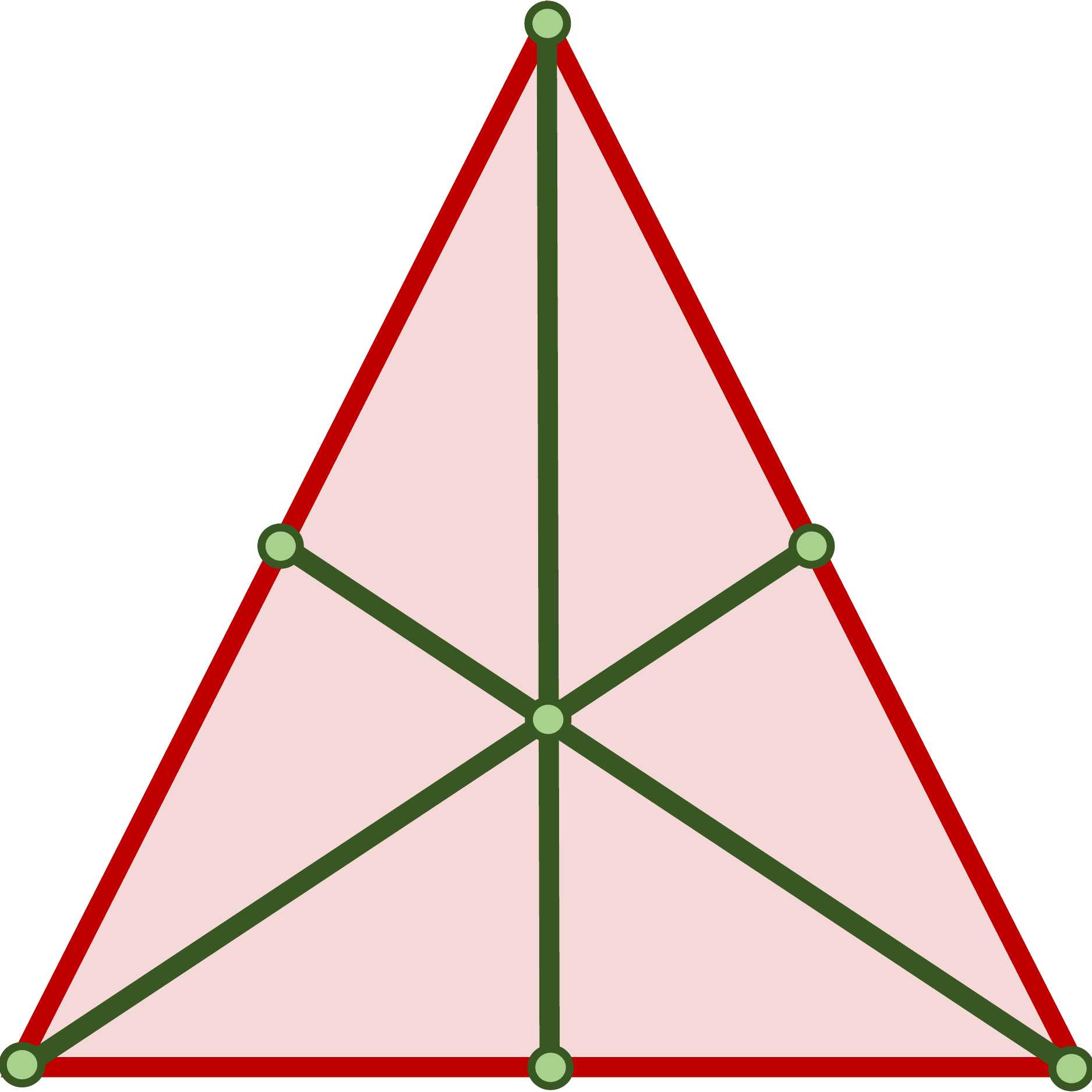}
		\label{fig: Barycentric_Subdivision}
	}\\
	\caption{Subdivisioning a Simplex.}
	\label{fig: Simplex_Subdivisions}
\end{figure}

\subsection{Robust Forward Set Invariance} \label{SS: Set Invariance}

The performance and robustness of a dynamical system can be characterized by forward invariance. A subset of the system's state space is said to be a forward invariant set (FIS) if no system trajectory that begins in the set leaves it at any future time. Further, the set is said to be a Robust FIS (RFIS) if it is an FIS in the presence of disturbances.

\begin{definition}[Robust Forward Invariant Set]
Let $S \subset \R^n$ represent the state space of a given dynamical system, $\dot{x}(t) = f(x(t),\noise(t))$, where $\noise(t): [0,\infty) \rightarrow \Noise$ is a bounded disturbance function that takes values in $\Noise$. Further, let $\X$ represent the set of solutions for this system. A set $\mathcal{R} \subset S$ is said to be an RFIS if $\forall x \in \X$, $x(t_0) \in \mathcal{R} \implies x(t) \in \mathcal{R} \ \forall t \geq t_0$.
\end{definition}

The \emph{minimal} RFIS is a set $\mRPI$ such that no proper subset of $\mRPI$ is an RFIS for the system. Conversely, the maximal RFIS, $\MRPI$ is not a proper subset of any RFIS. 


\section{Problem Formulation} \label{S: Problem Formulation}


Consider a non-linear system with bounded additive disturbances modeled by a differential inclusion $\dot{x}(t) \in F(x(t))$, where $F(\cdot)$ is a set-valued map. Specifically,
\begin{subequations} \label{Eq: F_As_A_Set} 
\begin{align} 
    \dot{x}(t) &\in F(x(t)) \\
    F(x(t)) &\!=\! \big\{f(x(t)) \!+\! \noise(t) \st \noise(t) \in \Noise \ \forall t \in [0,\infty) \big\}.
\end{align}
\end{subequations}
Here, $f: S \rightarrow \R^n$ models the dynamical system with state space $S \subset \R^n$, and $\noise: [0,\infty) \rightarrow \Noise$ is the disturbance function, where $\Noise \subset \R^n$ is the bounded noise space. If $\Noise = \{0\}$, the dynamics reduce to the usual $\dot{x}(t) = f(x(t))$. The standard conditions for existence and uniqueness of solutions~\cite{Khalil_Nonlinear_2002} are assumed to be met. Thus, $F(\cdot)$ is assumed to be Lipschitz continuous in the Hausdorff metric, so that
\begin{align} \label{Eq: Lipschitz_Condition}
    d_H(F(x), F(y)) \leq \ell \| x-y \| \ \forall x,y \in \X,
\end{align}
where $\ell$ is the Lipschitz constant for the system. Starting with a polytope $\P \subset \X$ and a simplicial complex $\SiCo$ that triangulates $\bdry \P$, the objective is to find families of polytopic RFISs for the system in~\eqref{Eq: F_As_A_Set}. 

The proposed strategy begins with a simplicial complex $\SiCo$ such that $|\SiCo| = \bdry  \P$, where $\P$ is a convex polytope. Since $\P$ is a topological manifold that can be identified uniquely by its boundary, we apply a sequence of homeomorphisms on $|\SiCo|$ to geometrically deform $\P$ so that it sequentially approximates either $\mRPI$ or $\MRPI$. These deformations are vertex maps on $0$-faces in $\SiCo$ and its subdivisions, such that all simplices satisfy a \BdryC. We first develop this condition and show that if all simplices in $\SiCo$ satisfy the \BdryC, then $|\SiCo|$ is the boundary of an RFIS. This serves as a computational invariance test that decides whether a deformation is to be kept or discarded. The proposed algorithm can be easily implemented knowing only the system dynamics and the coordinates of the $0$-faces in $\SiCo$.

\section{Invariance Test} \label{S: Invariance Test}


Let $\SiCo$ be a simplicial complex that triangulates the boundary of a polytope $\P \subset \R^n$, and $\Si$ be a simplex in $\SiCo$. An example is the dodecahedron in Fig.~\ref{fig: Triangulation}. Let $x \in \Si$ be any point on the simplex. Note that $|\SiCo| =  \bdry \P$, and $x \in \Si \implies x \in \bdry \P$. In this section, we develop a computational tool to decide whether $\SiCo$ triangulates the boundary of an RFIS, for a system with dynamics given in~\eqref{Eq: F_As_A_Set}. 


Let set $A \subset \R^n$ and vector $b \in \R^n$. Let $\langle A,b \rangle$ denote $\sup_{\substack{a \in A}} \langle a,b \rangle$, where $\langle \cdot, \cdot \rangle$ is the standard inner product.

\begin{definition}[\InvC]
A point $x \in \Si \in \SiCo$ is said to satisfy the \InvC~if 
\begin{align}
    \langle F(x), \Normal{\Si} \rangle \leq 0,
\end{align}
where $\Normal{\Si}$ is the outward unit normal vector to $\Si$.
\end{definition}

Note that the \InvC~is related to the sub-tangentiality condition~\cite{Blanchini_Automatica_1999}. It is a condition on the angle between the extreme ray of the tangent cone to the dynamics $F$ given in~\eqref{Eq: F_As_A_Set}, and the normal vector to the simplex $\Si$, at $x$. This is illustrated in Fig.~\ref{fig: Normal_Dynamics} where two boundary points are shown. Here, $ x(t_0)$ does not satisfy the \InvC, but $x(t_1)$ does. If a system trajectory finds itself at $x(t_0)$, then it may escape the set through the boundary point $x(t_0)$, i.e. for some disturbance $\noise$, $\exists \ \eps$ such that $x(t_0 - \eps) \in \P$ but $x(t_0+\eps) \notin \P$. However, no disturbance $\noise \in \Noise$ can result in a system trajectory escaping the set through $x(t_1)$. Similar boundary conditions have been used in works such as~\cite{Varnell_CDC_2016, Varnell_ACC_2018, Mukhopadhyay_ACC_2014, Mukhopadhyay_AJC_2020, Sassi_CDC_2014} for arbitrary sets, but the novelty of this work is in the development of an implementable invariance test and its invocation to perform geometric deformations. The above intuition that the \InvC~being satisfied at each point on the set implies that system trajectories cannot escape the set is proved in the result below.

\begin{theorem} \label{Thm: Bdry_Condition_General}
    If $x_0 \in \Si \in \SiCo$ satisfies the \InvC, then no system trajectory can escape the set $\P$ through $x_0$.   
\end{theorem}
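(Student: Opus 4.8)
The plan is to reduce the claim to a one-dimensional statement about the normal component of velocity along a hypothetical escaping trajectory. First I would introduce the signed normal-displacement function $g(t) = \langle x(t) - x_0, \Normal{\Si} \rangle$, which measures how far the trajectory has moved past the supporting hyperplane of $\Si$ in the outward direction. For $x_0$ in the relative interior of $\Si$, the set $\P$ locally agrees with the half-space $\{x : \langle x - x_0, \Normal{\Si}\rangle \le 0\}$, so near $x_0$ membership in $\P$ is equivalent to $g \le 0$, and leaving $\P$ through this facet is equivalent to $g > 0$. Any escape through $x_0$ at a time $t_0$ (so that $x(t_0) = x_0$ and $g(t_0) = 0$) must then take $g$ from nonpositive values for $t < t_0$ to strictly positive values for $t > t_0$.

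Next I would differentiate. Since $x(\cdot)$ solves the differential inclusion, $\dot{x}(t) \in F(x(t))$ for a.e.\ $t$, and $\Normal{\Si}$ is constant, so $\dot{g}(t) = \langle \dot{x}(t), \Normal{\Si}\rangle$. Evaluating at the exit instant and using the definition of $\langle F(x_0), \Normal{\Si}\rangle$ as a supremum together with the \InvC\ gives
\[
\dot{g}(t_0) = \langle \dot{x}(t_0), \Normal{\Si}\rangle \le \langle F(x_0), \Normal{\Si}\rangle \le 0.
\]
Hence the outward normal velocity at $x_0$ is nonpositive for every admissible disturbance $\noise \in \Noise$: the velocity field points into $\P$ or tangent to $\bdry\P$, never strictly outward. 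This already contradicts a transversal escape, for which $g$ would have to be strictly increasing through $t_0$, i.e.\ $\dot{g}(t_0) > 0$.

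The main obstacle is the degenerate tangential case $\dot{g}(t_0) = 0$, in which the first-order information is inconclusive and a trajectory could conceivably curve outward with $g(t)$ of order $(t - t_0)^2$ or higher. To rule this out I would upgrade the pointwise bound to a one-sided differential inequality on a neighborhood, using the Lipschitz continuity of $F$ in the Hausdorff metric from~\eqref{Eq: Lipschitz_Condition}. Because the support function in the fixed unit direction $\Normal{\Si}$ is $1$-Lipschitz with respect to $d_H$, one obtains $\langle F(x(t)), \Normal{\Si}\rangle \le \langle F(x_0), \Normal{\Si}\rangle + \ell\,\|x(t) - x_0\| \le \ell\,\|x(t) - x_0\|$, and therefore $\dot{g}(t) \le \ell\,\|x(t) - x_0\|$ for $t$ near $t_0$; the a priori bound $\|x(t) - x_0\| \le C\,|t - t_0|$ (from boundedness of $F$ near $x_0$, as $\Noise$ is bounded and $f$ is continuous) then controls how fast $g$ can grow. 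Turning this estimate into an actual no-escape conclusion requires the \InvC\ to hold not merely at the isolated point $x_0$ but throughout the nearby portion of $\bdry\P$, so that $\dot{g} \le 0$ on an interval and $g(t) \le g(t_0) = 0$ follows by integrating $g(t) = \int_{t_0}^{t} \langle \dot{x}(s), \Normal{\Si}\rangle \, ds$. I expect this localization—quantifying how much of the boundary must satisfy the condition, and handling the merely a.e.\ differentiability of $x(\cdot)$ through the integral form—to be the technical heart of the proof, and it is exactly the neighborhood guarantee that the finite set of test points in the subsequent invariance test is built to certify.
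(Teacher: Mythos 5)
Your argument is essentially the paper's: the paper proves the contrapositive by Taylor-expanding $x(t_0+\delta)$ and observing that the outward displacement from $x(t_0)$ must make an acute angle with $\Normal{\Si}$, which is exactly your computation $\dot g(t_0)=\langle \dot x(t_0),\Normal{\Si}\rangle \le \langle F(x_0),\Normal{\Si}\rangle \le 0$ read in the contrapositive direction. The tangential case $\dot g(t_0)=0$ that you flag is passed over silently in the paper (it lets the strict inequality $\langle y(\delta),\Normal{\Si}\rangle>0$ survive the limit $\delta\to 0$ as a strict inequality), and your remark that closing this gap requires the \InvC~on a neighborhood rather than at the single point $x_0$ is precisely the role played later by Theorem~\ref{Thm: Lipschitz_Means_Finite} and the \BdryC.
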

\begin{proof}
    \begin{figure}[t]
    \centering
    \includegraphics[width = 0.495 \textwidth]{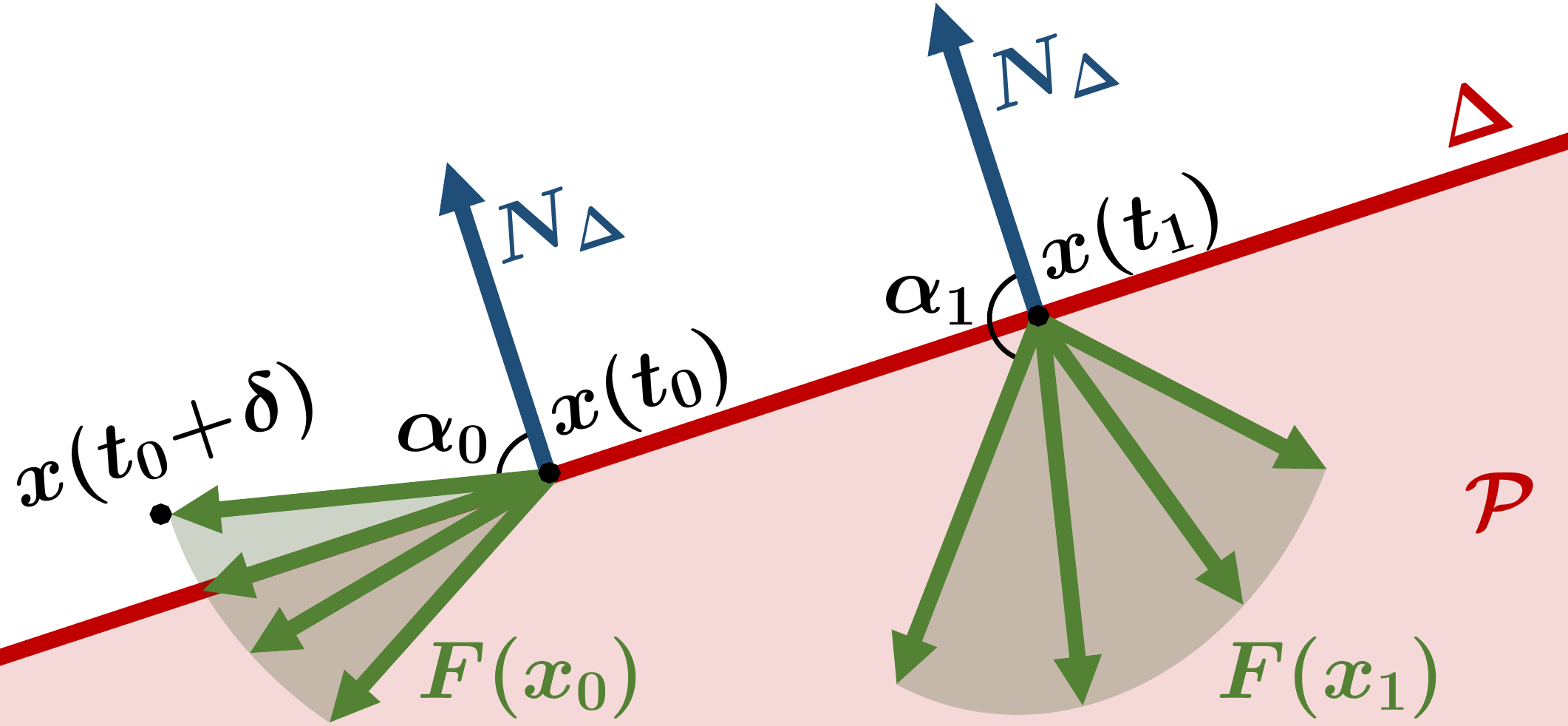}
    \caption{Angle between normal vector and extreme ray of system dynamics}
    \label{fig: Normal_Dynamics}
    \end{figure}
    We prove the contrapositive of this statement, i.e. If a system trajectory escapes the set $\P$ through $x(t_0) \in \Si$, then $\langle F(x(t_0)), \Normal{\Si} \rangle > 0$. Let $\noise(t): [0,\infty) \rightarrow \Noise$ be the disturbance function that resulted in this trajectory, so that $\dot{x}(t) = f(x(t)) + \noise(t)$, and let $x(t_0) \in \bdry \P$. Since the system trajectory escapes $\P$ through $x(t_0)$, $\exists \ \delta > 0$ such that $x(t_0 + \delta) \notin \P$. However,
    \begin{align}
        x(t_0+\delta) &= x(t_0) + \delta \times \dot{x}(t_0) + O(\delta^2) \\
                      &= x(t_0) + \delta \times \left( f(x(t_0)) + \noise(t_0)  \right) + O(\delta^2).
    \end{align}
    As $\delta \rightarrow 0$, 
    \begin{align} \label{Eq: y_delta}
        x(t_0 + \delta) -  x(t_0) \rightarrow \delta \times \left( f(x(t_0)) + \noise(t_0)  \right). 
    \end{align}
    Let $y(\delta) = x(t_0 + \delta) -  x(t_0)$ for convenience, Since $\Normal{\Si}$ is normal to the simplex by definition, it follows that the angle between $\Normal{\Si}$ and $y(\delta) < \frac{\pi}{2}$, whenever $\delta > 0$. This is because $y(\delta)$ is simply the vector from the point on $\bdry \P$ to the point on the trajectory at time $t_0+\delta$, which lies \emph{outside} $\P$. 
    It then follows from~\eqref{Eq: y_delta} that
    \begin{align}
        \alpha_0 &= \cos^{-1} \left( \frac{\langle \delta \times \left( f(x(t_0)) + \noise(t_0) \right), \Normal{\Si} \rangle}{\| \delta \times \left( f(x(t_0)) + \noise(t_0) \right) \| \times \|\Normal{\Si} \|} \right) < \frac{\pi}{2}.
    \end{align}

    Cancelling $\delta$, and taking the cosine of both sides due to it being monotone decreasing over $[0, \frac{\pi}{2})$, it follows that
    \begin{align}
        &\langle f(x(t_0)) + \noise(t_0), \Normal{\Si} \rangle > 0 \\
        \implies &\sup_{\substack{\nu(t_0) \in \Noise}} \langle f(x(t_0)) + \nu(t_0), \Normal{\Si} \rangle > 0 \\
        \implies &\langle F(x(t_0)), \Normal{\Si} \rangle > 0,
    \end{align}
    which proves the desired statement.
\end{proof}

\begin{corollary} \label{Cor: No_Escape_from_Simplex}
    If every point $x \in \Si$ satisfies the \InvC, then no system trajectory can escape $\P$ through $\Si \in \tri(\bdry \P)$. We call such a $\Si$ as an \enquote{invariant simplex}.
\end{corollary}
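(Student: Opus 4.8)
The plan is to derive this statement directly from Theorem~\ref{Thm: Bdry_Condition_General} by promoting its pointwise conclusion through a universal quantifier over the simplex $\Si$. The theorem already rules out escape through any \emph{individual} boundary point satisfying the \InvC; the corollary simply asserts the same for the whole simplex once the condition holds everywhere on it. So the entire argument is a quantifier manipulation layered on top of the theorem, most cleanly phrased as a contradiction.

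First I would make precise what it means for a trajectory to \enquote{escape $\P$ through $\Si$}. Following the escape notion used in the proof of Theorem~\ref{Thm: Bdry_Condition_General}, this means there exist a solution $x(\cdot) \in \X$, a time $t_0$, and a crossing point $x_0 = x(t_0) \in \Si \subset \bdry\P$ such that $x(t_0 + \eps) \notin \P$ for arbitrarily small $\eps > 0$. The key structural observation is that any escape through $\Si$ is necessarily pinned to some specific point $x_0$ lying in $\Si$. Next I would invoke the hypothesis that \emph{every} point of $\Si$ satisfies the \InvC, so that in particular the escape point $x_0$ satisfies $\langle F(x_0), \Normal{\Si} \rangle \leq 0$. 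But Theorem~\ref{Thm: Bdry_Condition_General}, applied to $x_0$, asserts that no system trajectory can escape $\P$ through $x_0$, directly contradicting the existence of the escaping trajectory. Hence no escape through $\Si$ can occur, which is the claim.

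I expect the analytic core to be essentially immediate, since it is a one-line application of the theorem; the only point requiring genuine care is definitional rather than computational. Specifically, I would ensure that the notion of \enquote{escape through $\Si$} is localized to a single boundary point at which the hypothesis can be applied. In particular, I need not treat separately those crossing points that lie only on lower-dimensional faces shared with neighboring simplices, because the hypothesis quantifies over \emph{all} $x \in \Si$, and $\Si$ (as the convex hull of its vertices) contains all of its faces. Thus the corollary reduces cleanly to promoting the conclusion of Theorem~\ref{Thm: Bdry_Condition_General} from a single point to the full simplex.
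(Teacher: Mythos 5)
Your argument is correct and matches the paper's (implicit) reasoning exactly: the corollary is an immediate consequence of Theorem~\ref{Thm: Bdry_Condition_General}, obtained by noting that any escape through $\Si$ occurs through some point $x_0 \in \Si$ to which the theorem applies. The paper leaves this proof to the reader, and your careful handling of the definition of \enquote{escape through $\Si$} and of points on shared lower-dimensional faces is a sound, if slightly more explicit, version of the same one-line deduction.
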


\begin{corollary}
    If every $\Si \in \SiCo$ is an invariant simplex, then $\P$ is an RFIS.
\end{corollary}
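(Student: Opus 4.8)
The plan is to argue by contradiction, reducing the global statement about $\P$ to the per-simplex statement already established in Corollary~\ref{Cor: No_Escape_from_Simplex}. Suppose $\P$ were not an RFIS. Then, by the definition of an RFIS, there would be a solution $x(\cdot) \in \X$ and times $t_0 < t_1$ with $x(t_0) \in \P$ but $x(t_1) \notin \P$. The goal is to pin down a single boundary point through which this trajectory leaves $\P$, and then invoke invariance of the simplex containing that point.

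First I would exploit that $\P$, being a compact (hence closed) polytope, and that $x(\cdot)$ is continuous by the standing existence--uniqueness and Lipschitz assumptions. Define the first-exit time
\[
t^* = \inf\{\, t > t_0 : x(t) \notin \P \,\},
\]
which is well defined and satisfies $t_0 \le t^* \le t_1$ since $x(t_1) \notin \P$. By definition of the infimum, $x(t) \in \P$ for every $t \in [t_0, t^*)$, so continuity of the trajectory together with closedness of $\P$ gives $x(t^*) = \lim_{t \uparrow t^*} x(t) \in \P$. Moreover $x(t^*)$ cannot lie in $\interior \P$: otherwise $x(t)$ would remain in $\interior \P \subset \P$ on a whole neighborhood of $t^*$, contradicting the existence of exit times arbitrarily close to $t^*$ from above. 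Hence $x(t^*) \in \bdry \P$, and since there are times $t > t^*$ arbitrarily close to $t^*$ with $x(t) \notin \P$, the trajectory escapes $\P$ through $x(t^*)$ in exactly the sense used in Theorem~\ref{Thm: Bdry_Condition_General}.

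Next I would use that $\SiCo$ triangulates $\bdry \P$, i.e. $\bdry \P = |\SiCo| = \bigcup_{\Si \in \SiCo} \Si$. Therefore $x(t^*) \in \Si$ for some $\Si \in \SiCo$. By hypothesis every such $\Si$ is an invariant simplex, so Corollary~\ref{Cor: No_Escape_from_Simplex} asserts that no system trajectory can escape $\P$ through $\Si$, and in particular not through the point $x(t^*) \in \Si$. This contradicts the previous paragraph, so no escaping solution can exist and $\P$ is an RFIS.

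The main obstacle I expect is not the topological reduction but making the first-exit argument airtight: one must ensure that the infimum $t^*$ really yields a boundary point at which the trajectory genuinely leaves (rather than merely grazes) $\P$, and that this matches the pointwise notion of \enquote{escape through $x_0$} on which Theorem~\ref{Thm: Bdry_Condition_General} and Corollary~\ref{Cor: No_Escape_from_Simplex} are built. This hinges precisely on compactness and closedness of the polytope and on continuity of solutions. A minor point worth stating explicitly is that $x(t^*)$ may lie on a face shared by several simplices; this causes no difficulty, since it then belongs to each of those simplices, all of which are invariant by assumption.
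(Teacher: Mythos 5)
Your argument is correct and follows essentially the same route as the paper's proof: the boundary of $\P$ is the union of the simplices in $\SiCo$, no trajectory can escape through any invariant simplex by Corollary~\ref{Cor: No_Escape_from_Simplex}, hence no trajectory can leave $\P$. The paper states this in one line and leaves implicit the reduction you spell out via the first-exit time $t^*$; your added detail merely makes rigorous the step that any escaping trajectory must escape through some boundary point, which is the sense of \enquote{escape} already used in Theorem~\ref{Thm: Bdry_Condition_General}.
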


\begin{proof}
    $x \in \bdry \P \implies x \in \bigcup_{\Si \in \SiCo} \Si$. But, no system trajectory can escape $\P$ through any such $\Si$ (by Corollary~\ref{Cor: No_Escape_from_Simplex}). Thus, all system trajectories stay in $\P$, making it an RFIS. 
\end{proof}

Note that if $x \in \Si_1 \cap \Si_2$, it must satisfy the \InvC~with respect to both simplices. Hence, Theorem~\ref{Thm: Bdry_Condition_General} is a test for invariance, as the inner product between the normal vector and the system dynamics can be easily computed. The only caveat is that for a given polytope $\P$ to be invariant, every point on $\bdry \P$ must satisfy the \InvC, and it is computationally impossible to test infinitely many points. However, when $F(\cdot)$ is Lipschitz continuous, this constraint can be overcome as follows. 

\begin{theorem} \label{Thm: Lipschitz_Means_Finite}
    Let $F$ be $\ell$-Lipschitz with respect to the Hausdorff metric. Let $x_0 \in \Si$, where $\Si$ is a simplex with normal vector, $\Normal{\Si}$. If $\exists \ \eps > 0$ such that $\langle F(x_0), \Normal{\Si} \rangle \leq - \eps$, then $\langle F(x), \Normal{\Si} \rangle \leq 0 \ \forall x $ such that $\| x-x_0 \| \leq \frac{\eps}{\ell}$.
\end{theorem}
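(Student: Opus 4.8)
The plan is to recognize the quantity $\langle F(x), \Normal{\Si} \rangle = \sup_{a \in F(x)} \langle a, \Normal{\Si} \rangle$ as the support function of the set $F(x)$ evaluated in the fixed \emph{unit} direction $\Normal{\Si}$, and then to exploit the fact that, for a unit direction, this support function is $1$-Lipschitz with respect to the Hausdorff metric on sets. Combining this observation with the $\ell$-Lipschitz hypothesis~\eqref{Eq: Lipschitz_Condition} on $F$ will immediately deliver the claimed bound.

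First I would establish the auxiliary inequality
\aln{
\langle F(x), \Normal{\Si} \rangle \leq \langle F(x_0), \Normal{\Si} \rangle + d_H(F(x), F(x_0)),
}
which holds precisely because $\|\Normal{\Si}\| = 1$. To prove it, I would fix an arbitrary $a \in F(x)$. By the definition of the Hausdorff distance in~\eqref{Eq: Hausdorff_Distance}, the term $\sup_{a' \in F(x)} \inf_{b \in F(x_0)} \|a' - b\|$ is bounded above by $d_H(F(x), F(x_0))$, so for every tolerance $\eta > 0$ there is a $b \in F(x_0)$ with $\|a - b\| \leq d_H(F(x), F(x_0)) + \eta$. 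The Cauchy--Schwarz inequality then gives $\langle a, \Normal{\Si} \rangle = \langle b, \Normal{\Si} \rangle + \langle a - b, \Normal{\Si} \rangle \leq \langle F(x_0), \Normal{\Si} \rangle + \|a - b\|$. Taking the supremum over $a \in F(x)$ and letting $\eta \to 0$ yields the displayed inequality.

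With this lemma in hand, the conclusion follows by a direct chain of estimates:
\aln{
\langle F(x), \Normal{\Si} \rangle &\leq \langle F(x_0), \Normal{\Si} \rangle + d_H(F(x), F(x_0)) \\
&\leq -\eps + \ell \| x - x_0 \| \\
&\leq -\eps + \ell \cdot \tfrac{\eps}{\ell} = 0,
}
where the second line invokes the hypothesis $\langle F(x_0), \Normal{\Si} \rangle \leq -\eps$ together with~\eqref{Eq: Lipschitz_Condition}, and the third uses the standing assumption $\|x - x_0\| \leq \eps/\ell$. This is exactly the assertion of the theorem.

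The only genuinely non-routine step will be the support-function inequality, and its subtlety lies entirely in the nested suprema and infima concealed in the Hausdorff distance: one must choose the approximating point $b \in F(x_0)$ uniformly well for each $a \in F(x)$, which is why the argument is phrased with an arbitrary tolerance $\eta$ that is sent to zero at the end. Everything else is a mechanical combination of this bound with the Lipschitz hypothesis. I would also emphasize that the unit-normalization $\|\Normal{\Si}\| = 1$ is essential, since it is what collapses the Cauchy--Schwarz factor $\|a - b\|\,\|\Normal{\Si}\|$ into the clean term $\|a - b\|$; without it an extra factor of $\|\Normal{\Si}\|$ would appear in the admissible radius $\eps/\ell$.
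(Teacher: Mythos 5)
Your proof is correct. Note, however, that the paper does not actually prove this theorem in-house: its entire ``proof'' is a citation to Proposition~3 of an external reference, so your argument supplies details the paper deliberately omits. What you give is the standard and, I think, the right argument: the observation that $K \mapsto \sup_{a \in K}\langle a, u\rangle$ is $1$-Lipschitz in the Hausdorff metric for a fixed unit vector $u$, established by picking, for each $a \in F(x)$, an $\eta$-almost-nearest $b \in F(x_0)$ and applying Cauchy--Schwarz, and then the two-line chain $\langle F(x), \Normal{\Si}\rangle \leq \langle F(x_0), \Normal{\Si}\rangle + d_H(F(x),F(x_0)) \leq -\eps + \ell\|x-x_0\| \leq 0$. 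Your use of only the first of the two suprema in the definition of $d_H$ is legitimate, since that term alone bounds $\inf_{b \in F(x_0)}\|a-b\|$ for each fixed $a$. Your closing caveat about the normalization $\|\Normal{\Si}\|=1$ is also well taken and worth keeping: the paper's formula~\eqref{Eq: Normal_Of_Simplex} produces an unnormalized normal, and the theorem as stated is only correct under the unit-normal convention adopted in the definition of the \InvC; otherwise the admissible radius degrades to $\eps/(\ell\,\|\Normal{\Si}\|)$, which matters for the constant $r\ss_m\ell$ used later in Theorem~\ref{Thm: Bdry_Condition_Discretized}.
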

\begin{proof}
    This follows from Proposition $3$ in~\cite{Varnell_CDC_2016}. 
\end{proof}

Based on Theorems~\ref{Thm: Bdry_Condition_General}~and~\ref{Thm: Lipschitz_Means_Finite}, we develop an algorithm to check if $\Si$ is invariant by testing for a variant of the \InvC~at finitely many points. We call this variant the \BdryC, which requires the inner product to not just be negative but sufficiently negative to ensure that points in an appropriate neighborhood of these test points satisfy the \InvC.
\subsection{Test Point Generation} \label{SS: Test Points}

We present an implementable method to generate a finite set of test points, $\TP_m \subset \Si$, and show how a simple calculation at these points can decide whether $\Si$ is an invariant simplex. To the best of our knowledge, no explicit set of test points has been proposed in previous work - we prove that testing for a variant of the \InvC~at these test points can circumvent the problem of testing at infinitely many points. The set, $\TP_m$ is a lattice, generated a priori on a reference simplex and then mapped to $\Si$ using a linear transformation. Although lattices on simplices are used in other applications like the study of mixtures~\cite{Gorman_ASQ_1962}, the method proposed here generates the lattice using only $\V{\Si}$. 

\subsubsection{Generating Lattice on Reference Simplex} \label{SS: Lattice On Reference Simplex}
\begin{figure}[t]
    \centering
    \includegraphics[width = 0.495 \textwidth]{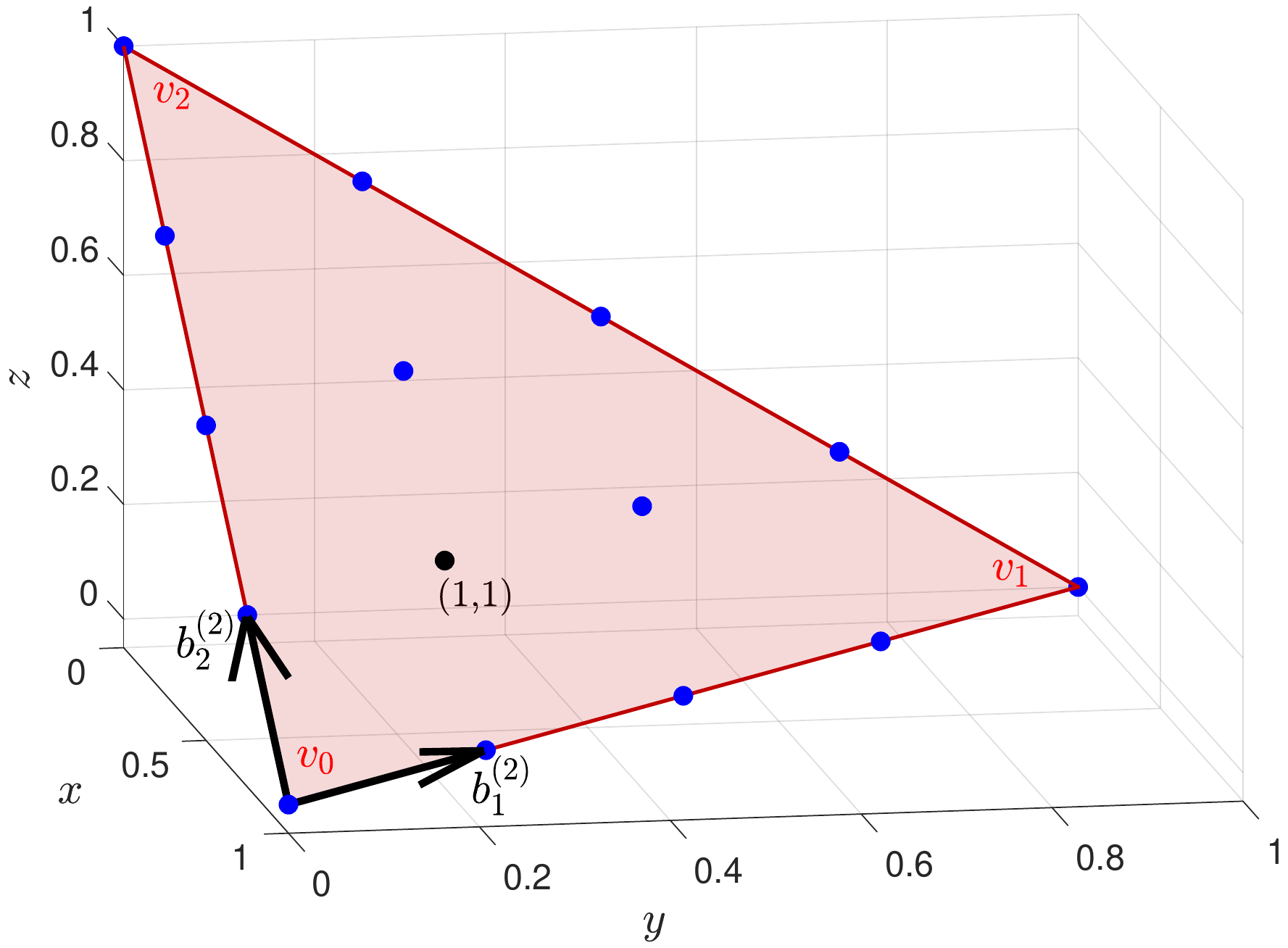}
    \caption{Standard Simplex with Coordinate System}
    \label{fig: Ref_Simplex_1}
\end{figure}

Let $\Lambda$ be the standard simplex in $\R^n$, defined as
\begin{align}
    \Lambda = \{\lambda \in \R^n \st \sum_{i = 1}^{n} \lambda_i = 1 , \lambda_i \geq 0\}.
\end{align}
To control the coarseness of the lattice, a parameter $\ss_m \in (0,1]$ is introduced. Here, $m$ is the iteration number in the algorithm to test for invariance. Since we would like the lattice to get exponentially finer, we set
\begin{align}
    \ss_m = 2^{-m}, \quad m = 0,1,\cdots,m_{\max}, \label{Eq: ss_m}
\end{align}
where $m_{\max}$ is the maximum allowed iteration number for the algorithm. In order to identify test points on the simplex, we introduce a new coordinate system for each value of $m$ and introduce a condition that guarantees that a given point in this coordinate system lies on the simplex. The change of basis allows us to easily identify the test points.

\begin{lemma}
    For a fixed $m$ such that $0 \leq m \leq m_{\max}$, consider a coordinate system with $v_0$ as the origin, and $(n-1)$ basis vectors given by 
    \begin{align}
        b_j^{(m)} = v_0 + \ss_m \left( v_j - v_0 \right),\quad j = 1,\cdots,n-1.
    \end{align}
    In this coordinate system, a point $a = (a_1,\cdots,a_{n-1})$ is in $\Lambda$ if and only if $\sum_{i = 1}^{n-1}a_i \in [0,\invss_m]$.
\end{lemma}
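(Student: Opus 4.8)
The plan is to recognize the proposed coordinate system as an affine reparametrization of $\Lambda$ and to read off simplex membership directly from barycentric coordinates. First I would make the coordinate map explicit: since $v_0$ is the origin and $\ss_m(v_j - v_0)$ is the $j$-th basis vector, the point carrying coordinates $a=(a_1,\dots,a_{n-1})$ is
\[
    P(a) = v_0 + \ss_m \sum_{j=1}^{n-1} a_j\,(v_j - v_0).
\]
Regrouping the $v_0$ terms rewrites this as an affine combination of the vertices of $\Lambda$,
\[
    P(a) = \Bigl(1 - \ss_m \sum_{j=1}^{n-1} a_j\Bigr) v_0 + \sum_{j=1}^{n-1} \ss_m a_j\, v_j,
\]
whose coefficients manifestly sum to $1$. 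This expression is the workhorse of the argument: it simultaneously encodes the equality and the inequality constraints defining $\Lambda$.

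The second step is to dispatch the equality constraint $\sum_i \lambda_i = 1$. Because the affine coefficients above sum to $1$ and every vertex $v_i$ lies on the hyperplane $\{\lambda : \sum_i \lambda_i = 1\}$, the point $P(a)$ automatically satisfies $\sum_i (P(a))_i = 1$; equivalently, each displacement $v_j - v_0$ is tangent to that hyperplane. Hence this constraint never restricts $a$, and membership in $\Lambda$ is decided \emph{solely} by the nonnegativity of the coordinates of $P(a)$.

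The third step is to convert those nonnegativity constraints into conditions on $a$. Since the vertices of $\Lambda$ are the standard basis vectors, the affine coefficient multiplying each $v_i$ is exactly one Cartesian coordinate of $P(a)$, so $P(a)\ge 0$ componentwise if and only if all coefficients are nonnegative: $\ss_m a_j \ge 0$ for $j=1,\dots,n-1$ together with $1 - \ss_m \sum_j a_j \ge 0$. As $\ss_m = 2^{-m} > 0$, the first family says $a_j \ge 0$ and the second says $\sum_j a_j \le \invss_m$. Because the lattice coordinates generated in Section~\ref{SS: Lattice On Reference Simplex} are nonnegative multiples, $a_j \ge 0$ holds by construction and forces $\sum_j a_j \ge 0$, so the binding inequalities collapse to the single stated condition $\sum_{i=1}^{n-1} a_i \in [0,\invss_m]$.

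I expect the only delicate point to be the bookkeeping in the third step. One must be explicit that the individual sign conditions $a_j \ge 0$ are what supply the lower endpoint $0$, and that, absent the lattice's nonnegativity, the sum condition alone would not characterize membership (for instance $a=(-1,1,0,\dots,0)$ has vanishing sum yet exits $\Lambda$ through the face $\ss_m a_1 < 0$). Making the identification precise — that each Cartesian coordinate of $P(a)$ equals exactly one affine coefficient, so the hyperplane constraint is free and only the signs matter — is what keeps this accounting honest, and it is where I would concentrate the care.
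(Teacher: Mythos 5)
Your proof is correct and takes essentially the same route as the paper's: write the point as a combination of the vertices of $\Lambda$ with coefficients $\ss_m a_1,\dots,\ss_m a_{n-1}$ and $1-\ss_m\sum_j a_j$, note that these sum to one, and reduce membership to nonnegativity of the coefficients. You are in fact more careful than the paper, which expands the \emph{linear} combination $\sum_i a_i b_i^{(m)}$ of the basis points (yielding a $v_0$-coefficient of $(1-\ss_m)\sum_i a_i$, consistent with its own final formula only when $\sum_i a_i = 1$) rather than the affine expression $v_0+\ss_m\sum_j a_j(v_j-v_0)$ that you use, and which establishes only the ``if'' direction under the stronger hypothesis that each $a_i\in[0,\invss_m]$. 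Your counterexample $a=(-1,1,0,\dots,0)$ correctly shows that the stated equivalence requires the individual sign conditions $a_j\ge 0$; since the lemma is only invoked for lattice points with $a_i\in\N_0$, nothing downstream is affected, but this is a genuine gap in the lemma as stated rather than in your argument.
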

\begin{proof}
Writing out $(a_1,
\cdots a_{n-1})$ using the basis vectors gives
\begin{align}
    \sum_{i = 1}^{n-1} a_i b_i^{(m)} &= \sum_{i = 1}^{n-1}  a_i \left(v_0 + \ss_m (v_i - v_0) \right) \\
    & = \left(\sum_{i = 1}^{n-1} \ss_m a_i v_i \right)+ (1 - \ss_m)v_o \sum_{i=1}^{n-1}a_i
\end{align}
Let $a_0 = \frac{1-\ss_m \sum_{i=1}^{n-1}a_i}{\ss_m}$, so that the point $a$ can be written as $\sum_{i= 0}^{n-1} \ss_m a_i v_i$ with respect to the standard basis. Further, $\sum_{i=0}^{n-1} \ss_m a_i = 1$ and $0 \leq \ss_m a_i \leq 1 \ \forall i$, whenever $a_i \in [0,\invss_m]$. Hence, the point is written as a convex combination of the vertices of the simplex, and therefore lies in it.
\end{proof}

Next, we define a lattice of test points on the standard simplex as follows. This lattice will later be mapped to the test simplex through a linear transformation.
\begin{definition}[Lattice Test Point]
$a = (a_1,\cdots,a_{n-1})$ is a lattice test point if $a \in \Si$ and $a_i \in \N_{0} = \{ 0,1,2,\cdots\} \ \forall 1 \leq i \leq n-1$.
\end{definition}

For instance, the $2$-simplex $\Lambda$ in $\R^3$ with vertices $v_0$, $v_1$ and $v_2$ is shown in Fig.~\ref{fig: Ref_Simplex_1} for $m = 2$, along with the basis vectors $b_1^{(2)}$ and $b_2^{(2)}$, and the test point $(a_1,a_2) = (1,1)$. Note that the vertices of $\Lambda$ are always included in the lattice for all $m$.

Thus, all lattice points can be computed a priori in this coordinate system and then represented in the standard coordinate system for all values of $0 \leq m \leq m_{\max}$. It follows from the construction that the number of test points generated in iteration $m$ for an $(n-1)$-simplex in $\R^{n}$ with parameter $\ss_m$ is
\begin{align}
    \Lm = \binom{\invss_m + n - 1}{\invss_m} = \frac{ (\invss_m + n - 1)!}{\invss_m! \times (n-1)!}.
\end{align}

\begin{figure}[t]
    \subfigure[$\TP_1$]{
    \centering
    \includegraphics[width = 0.225 \textwidth]{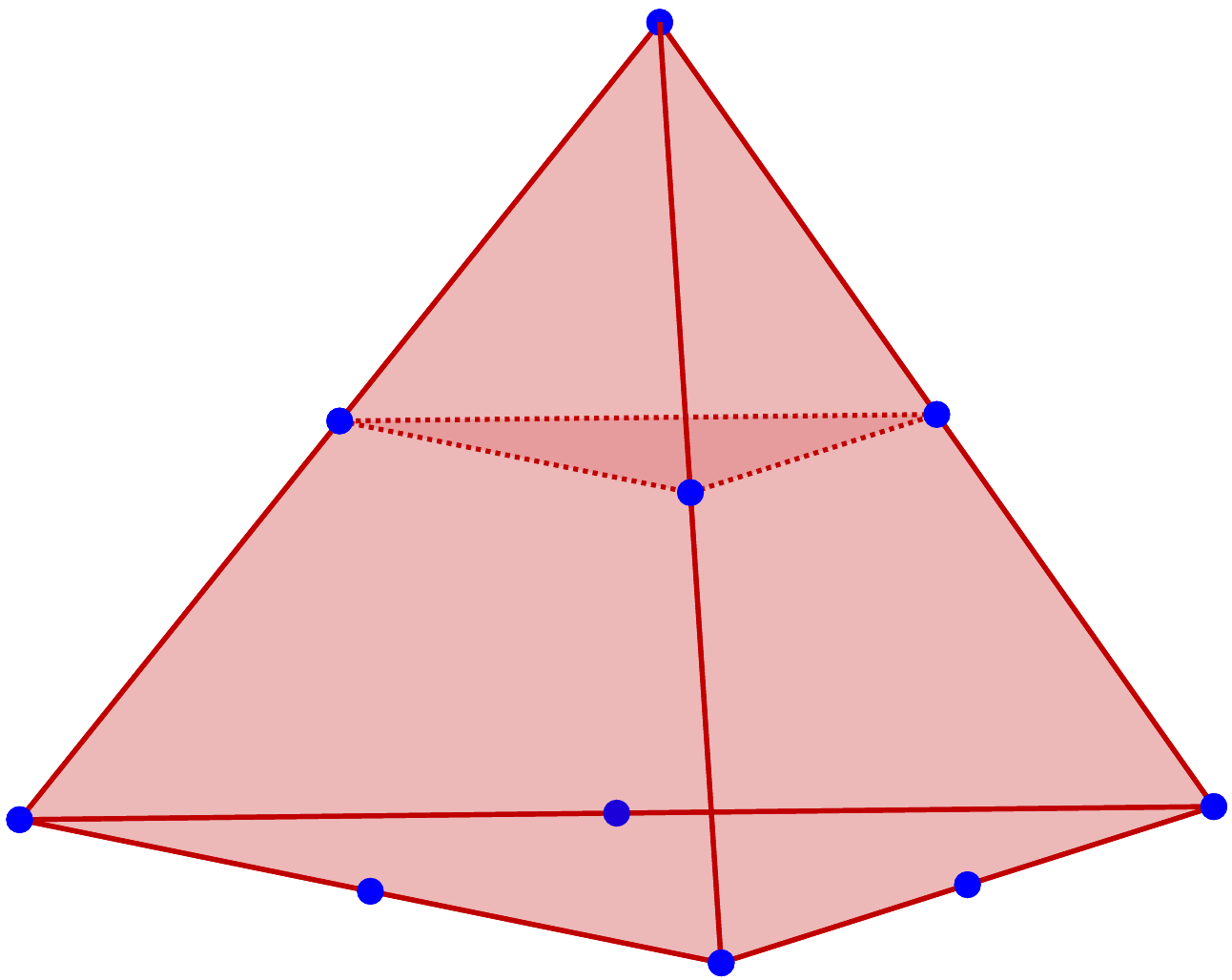}
    \label{fig: Tetrahedron_A}
    } 
    \hfil
    \subfigure[$\TP_2$]{
    \centering
    \includegraphics[width = 0.225 \textwidth]{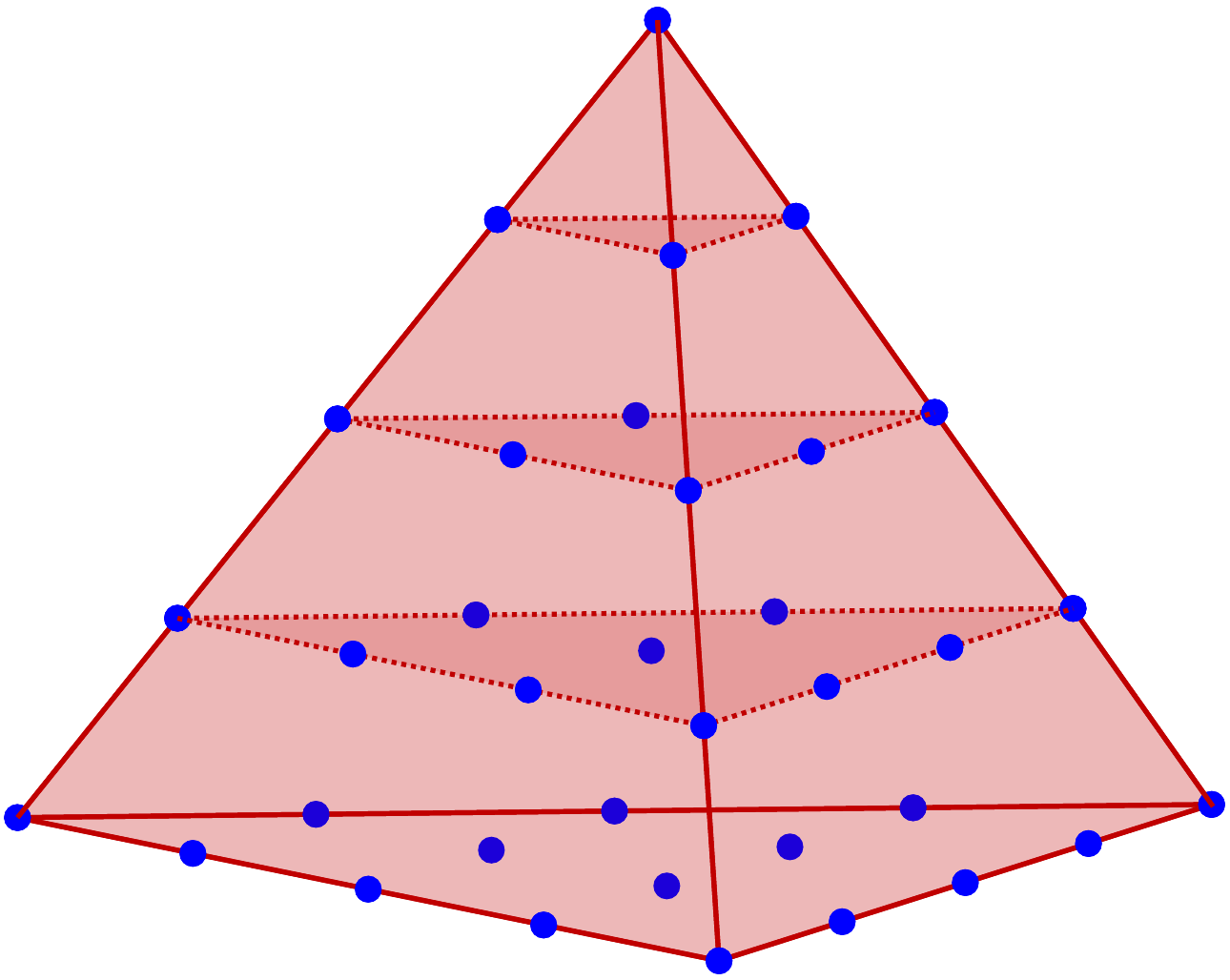}
    \label{fig: Tetrahedron_B}
    } 
   \caption{Lattice of Test Points for a $3$-Simplex}
    \label{fig: Tetrahedron_Test_Points}
\end{figure}

\subsubsection{Mapping test points to a simplex in the simplicial complex that triangulates the boundary of a polytope}
The $\Lm$ coordinates generated on the standard simplex as described above are stored as rows of a matrix, $\M \in \R^{\Lm \times n}$. Note that this representation is in the standard basis. We use the $m$ subscript to emphasize the dependence on $\ss_m$, and note that $\M$ can be computed a priori for different $m$. To obtain the lattice points on the simplex of interest, $\Si$, we perform the following linear transformation, $T : \M  \rightarrow \R^{\Lm \times n}$ given as
\begin{align} \label{Eq: Transformation}
    T(\M) = \M L_{\Si}.
\end{align}
Let
\begin{align}
    \TP_m = \{ x \in \R^n \st x \text{ is a row of } T(\M) \},
\end{align}
be the set of all proposed test points on $\Si$. For instance, Fig.~\ref{fig: Tetrahedron_Test_Points} shows $\TP_1$ and $\TP_2$ for a $3$-simplex. Note that a set of $3$-simplices would triangulate the boundary of a 4-dimensional RFIS.

We now prove that the generated test lattice is indeed on the simplex, and that it is sufficient to check for a variant of the \BdryC~at only these lattice points to guarantee that the simplex is invariant.

\begin{lemma} \label{Lem: Lattice_Is_A_Subset_Of_Simplex}
    $\TP_m \subset \Si$. 
\end{lemma}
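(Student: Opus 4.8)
The plan is to show that every row of $T(\M) = \M L_\Si$ is a convex combination of the vertices $\V{\Si}$, so that each proposed test point lies in $\Si = \Conv(\V{\Si})$. The key starting observation is that the standard simplex $\Lambda$ has the standard basis vectors of $\R^n$ as its vertices, so that for any point of $\Lambda$ the ordinary coordinates in $\R^n$ coincide with its barycentric coordinates. Consequently, the row of $\M$ that records a lattice test point $a = (a_1, \ldots, a_{n-1})$ is exactly the barycentric coordinate vector $(\ss_m a_0, \ss_m a_1, \ldots, \ss_m a_{n-1})$ produced in the proof of the preceding lemma, where $a_0 = (1 - \ss_m \sum_{i=1}^{n-1} a_i)/\ss_m$.

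First I would verify that this coordinate vector is genuinely a convex weight vector. By the definition of a lattice test point, $a_i \in \N_0$ and $a \in \Lambda$, and the characterization from the preceding lemma gives $\sum_{i=1}^{n-1} a_i \in [0, \invss_m]$. This forces $\ss_m a_i \geq 0$ for every $i$: nonnegativity for $i \geq 1$ is immediate, and for $i = 0$ it is precisely the upper bound $\sum_{i=1}^{n-1} a_i \leq \invss_m$, since $\ss_m a_0 = 1 - \ss_m \sum_{i=1}^{n-1} a_i$. The same identity shows $\sum_{i=0}^{n-1} \ss_m a_i = 1$. Hence each row $p$ of $\M$ satisfies $p_i \geq 0$ and $\sum_i p_i = 1$.

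Next I would unwind the linear transformation. Writing $L_\Si = [v_0, \ldots, v_{n-1}]^T$ so that its $i$-th row is $v_i^T$, the product of a single row $p$ of $\M$ with $L_\Si$ is $p L_\Si = \sum_{i=0}^{n-1} p_i v_i^T = \sum_{i=0}^{n-1} (\ss_m a_i) v_i^T$. Read back as a point of $\R^n$, this is $\sum_{i=0}^{n-1} (\ss_m a_i) v_i$, a convex combination of the vertices of the target simplex $\Si$ by the previous step. Since $\Si = \Conv(\V{\Si})$, this point lies in $\Si$. As the argument applies to every row, each element of $\TP_m$ lies in $\Si$, giving $\TP_m \subset \Si$.

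The main point to get right, and essentially the only place where care is needed, is the bookkeeping between the two coordinate systems: confirming that the entries stored in $\M$ really are the barycentric weights $\ss_m a_i$ (which hinges on the vertices of $\Lambda$ being the standard basis vectors), and that $L_\Si$ is precisely the map sending a barycentric weight vector to the corresponding convex combination of $\V{\Si}$. Once this identification is made explicit, the convexity constraints transfer verbatim and the conclusion is immediate; no estimates or further topological input are required.
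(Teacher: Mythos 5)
Your proof is correct and follows essentially the same route as the paper's: both identify each row of $\M$ as a convex weight vector (nonnegative entries summing to one) and observe that multiplying by $L_\Si$ produces the corresponding convex combination of $\V{\Si}$, hence a point of $\Si$. The only difference is that you explicitly derive the convexity constraints on the entries of $\M$ from the lattice-point conditions of the preceding lemma, whereas the paper asserts them by construction; this is a welcome bit of extra care but not a different argument.
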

\begin{proof}
    See Appendix A.
\end{proof}

Thus, all test points calculated on $\Lambda$ map to the desired simplex, $\Si$ under the transformation $T(\cdot)$. These test points can be used to determine whether or not a given simplex is invariant, using the following theorem.

\begin{theorem} \label{Thm: Bdry_Condition_Discretized}
    Let $\Si$ be a given simplex with $\V{\Si} = \{v_0, \cdots v_{n-1}\}$, $r = \max_{i,j}{\| v_i - v_j \|}$, and $\TP_m$ be the set of test points on $\Si$ for a given $\ss_m$. Suppose that
    \begin{align} \label{Eq: Bdry_Condition_Discretized}
        \langle F(x), \Normal{\Si} \rangle \leq - r \ss_m \ell, \quad \forall x \in \TP_m
    \end{align}
    where $\ell$ is the Lipschitz constant for the system dynamics given by $F$. Then, $\Si$ is an invariant simplex.
\end{theorem}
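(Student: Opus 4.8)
The plan is to show that the finite set of inequalities assumed at the test points forces the \InvC, namely $\langle F(x),\Normal{\Si}\rangle \le 0$, to hold at \emph{every} point $x\in\Si$; once this is established, Corollary~\ref{Cor: No_Escape_from_Simplex} immediately yields that $\Si$ is an invariant simplex. The bridge from the finitely many points of $\TP_m$ to all of $\Si$ is Theorem~\ref{Thm: Lipschitz_Means_Finite}: reading that result with $\eps = r\ss_m\ell$, the assumed inequality $\langle F(x_0),\Normal{\Si}\rangle \le -r\ss_m\ell$ at a test point $x_0$ guarantees $\langle F(x),\Normal{\Si}\rangle\le 0$ for every $x$ in the Euclidean ball of radius $\eps/\ell = r\ss_m$ about $x_0$. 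Thus each test point \enquote{certifies} a ball of radius $r\ss_m$, and the theorem reduces to a purely geometric covering statement.

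Consequently, the key step I would isolate is the following net property: every $x\in\Si$ lies within Euclidean distance $r\ss_m$ of some test point $x_0\in\TP_m$. Granting this, I take an arbitrary $x\in\Si$, pick a test point $x_0$ with $\|x-x_0\|\le r\ss_m$, apply the hypothesis~\eqref{Eq: Bdry_Condition_Discretized} at $x_0$, and invoke Theorem~\ref{Thm: Lipschitz_Means_Finite} to conclude $\langle F(x),\Normal{\Si}\rangle\le 0$. Since $x$ was arbitrary, the \InvC~holds on all of $\Si$, and Corollary~\ref{Cor: No_Escape_from_Simplex} finishes the argument.

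To prove the net property I would use the barycentric description of the lattice behind Lemma~\ref{Lem: Lattice_Is_A_Subset_Of_Simplex}: $\TP_m$ consists exactly of the points $\sum_{i=0}^{n-1}\ss_m a_i v_i$ with $a_i\in\N_0$ and $\sum_i a_i = \invss_m$, i.e. the points of $\Si$ whose barycentric coordinates are integer multiples of $\ss_m$. These points are the vertices of the standard lattice subdivision of $\Si$, so any $x\in\Si$ lies in some cell $C$ of that subdivision; taking $x_0$ to be any vertex of $C$ gives $\|x-x_0\|\le \mathrm{diam}(C)$, and it remains to bound $\mathrm{diam}(C)$ by $r\ss_m$. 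The natural route is to examine the edges of $C$: moving one lattice unit of barycentric mass from $v_j$ to $v_i$ is the displacement $\ss_m(v_i-v_j)$, whose length is at most $\ss_m\max_{i,j}\|v_i-v_j\| = r\ss_m$.

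I expect this last diameter bound to be the main obstacle. In dimensions $n\le 3$ the cells of the lattice subdivision are scaled copies of $\Si$ (a subdivided edge, or the up/down triangles of a subdivided triangle), so each cell edge is literally some $\ss_m(v_i-v_j)$ and $\mathrm{diam}(C)\le r\ss_m$ is immediate. For larger $n$, however, the interior cells are no longer similar to $\Si$, and an edge of $C$ can a priori be a sum of several unit displacements rather than a single $\ss_m(v_i-v_j)$; one must then verify, through the partial-sum (order-simplex) coordinate description of the subdivision, that such a sum telescopes back to a single edge vector $\ss_m(v_a-v_b)$. This telescoping is exactly the point I would scrutinize most carefully: if it were to fail, the certified balls of radius $r\ss_m$ might not cover $\Si$ in high dimensions, in which case the constant appearing in~\eqref{Eq: Bdry_Condition_Discretized} would have to be sharpened rather than the conclusion abandoned.
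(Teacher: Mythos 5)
Your overall strategy coincides with the paper's: invoke Theorem~\ref{Thm: Lipschitz_Means_Finite} with $\eps = r\ss_m\ell$ so that each test point certifies the \InvC~on the ball $\B_{r\ss_m}(x_0)$, and reduce the theorem to the claim that these balls cover $\Si$. The paper realizes the covering step through the adjacency sets $\A(w)$: by Lemma~\ref{Lem: Lattice_Is_Complete} every point of $\Conv(\A(w))$ lies within $r\ss_m$ of the \emph{single} test point $w$, and the proof then asserts $\bigcup_{w\in\TP_m}\Conv(\A(w)) = \Si$. This device is strictly more robust than the cell-diameter bound you propose: a cell with a long diagonal is still handled provided all of its vertices are adjacent to one common lattice point. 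Indeed, your bound $\mathrm{diam}(C)\le r\ss_m$ already fails for the interior octahedron of a subdivided tetrahedron (opposite edge-midpoints $\tfrac12(v_0+v_2)$ and $\tfrac12(v_1+v_3)$ can be nearly $2r\ss_m$ apart for a flattened tetrahedron), yet each piece of that octahedron still sits inside some $\Conv(\A(m_{ij}))$, so the paper's covering survives there.

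The step you flag as the main obstacle is, however, a genuine gap, and it cannot be closed in the form you propose. The telescoping fails outright: for a $4$-simplex, the Kuhn-type cell with vertices $u_0$, $u_k = u_{k-1} + \ss_m\bigl(v_{\pi(k)} - v_{\pi(k)-1}\bigr)$ generated by the move order $\pi = (2,4,1,3)$ has \emph{no} vertex adjacent to all the others, so neither your diameter bound nor the paper's common-star device applies to it. More decisively, the net property itself is false in high dimension: for the regular $(n-1)$-simplex the test points are a dilate of the root lattice $A_{n-1}$, whose minimal vector length is $r\ss_m$ but whose covering radius is $\ss_m\sqrt{a(n-a)/n}$ with $a=\lfloor n/2\rfloor$, and this exceeds $r\ss_m = \ss_m\sqrt{2}$ once $n\ge 9$. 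A deep hole translated into the interior of $\Si$ is then farther than $r\ss_m$ from every test point, and the $\ell$-Lipschitz field $F(x) = \bigl\{\ell\bigl(d(x,\TP_m)-r\ss_m\bigr)\Normal{\Si}\bigr\}$ satisfies~\eqref{Eq: Bdry_Condition_Discretized} while violating the \InvC~at that hole. So your closing diagnosis is the correct one: the constant in~\eqref{Eq: Bdry_Condition_Discretized} must be sharpened in general (e.g.\ to $-\rho_m\ell$ with $\rho_m$ the covering radius of $\TP_m$ in $\Si$, or to a dimension-dependent multiple of $r\ss_m$). Be aware that the paper's own proof does not escape this: the identity $\bigcup_{w}\Conv(\A(w)) = \Si$ is asserted there with only the remark that the vertices of $\Si$ are lattice points, and it is subject to exactly the same obstruction.
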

\begin{proof}
See Appendix B.
\end{proof}
A test point that satisfies~\eqref{Eq: Bdry_Condition_Discretized} is said to satisfy the \BdryC. Theorem~\ref{Thm: Bdry_Condition_Discretized} is a tool to test whether $\Si$ is an invariant simplex. This process of testing whether a given simplex is invariant is summarized in Algorithm~\ref{Alg: Simplex_Invariance_Test}. The algorithm is named \texttt{BCD\_Test} since it generates the set of test points on the simplex and checks whether they satisfy the \BdryC. It takes as input the simplex, which it identifies by its vertices. Additionally, the system dynamics $F$, Lipschitz constant $\ell$, and the maximum number of iterations to run the algorithm $m_{\max}$ are also input to the algorithm. Further, the algorithm also has access to the constants $\ss_m$, $M_m$, $\Lm$ which are computed a priori and stored in memory.  The algorithm computes the ratio of test points for which the \BdryC~is not satisfied to the total number of test points when $\ss_m = 2^{-m_{\max}}$, denoted $\NTP$. Note that $\Si$ is invariant if $\NTP = 0$. 

Line~\ref{L: L_Si} of Algorithm~\ref{Alg: Simplex_Invariance_Test} computes the matrix $L_{\Si}$, and line~\ref{L: r} computes $r = \max_{i,j} \|v_i - v_j\|$, which is the length of the longest 1-face of the simplex. Line~\ref{L: N} computes the normal vector to the simplex using~\eqref{Eq: Normal_Of_Simplex}. The invariance condition is then checked on the test points that successively get finer in each iteration $m$. Observe from~\eqref{Eq: Bdry_Condition_Discretized} that as $m$ increases - so do the number of points that must satisfy the~\BdryC, but the condition on how negative the inner product $\langle F(x), \Normal{\Si} \rangle$ must be is progressively relaxed. Thus, iteration $m$ of the \textbf{for} loop in line~\ref{L: for_A1} begins by initializing $\NTP_m = 0$, where $\NTP_m$ counts the number of test points that violate the~\BdryC. Note that checking the \BdryC~at finitely many points is equivalent to matrix-vector multiplication. Since each row of the matrix $T(\M)$ computed in line~\ref{L: T(M)} is a test point, we denote by $F(T(\M))$ the matrix whose corresponding row is the result of evaluating $F(\cdot)$ at these test points. Since the normal vector is the same for all these test points, the inner product can be computed as in line~\ref{Line: Inner_Prod_as_Matrix_Vector}. The result is a vector $\mathcal{IP}$ containing the values of the inner product at each test point. If all entries are sufficiently negative for any $m$ as determined by Theorem~\ref{Thm: Bdry_Condition_Discretized}, then we are guaranteed that the simplex is invariant. Thus, the \textbf{foreach} loop in line~\ref{L: foreach_A1} counts the number of test points that violate the~\BdryC. If this number is $0$ for any $m$, then the algorithm returns $\NTP = 0$ and declares the simplex to be invariant. If not, then it increments $m$ until the maximum allowed iteration $m_{\max}$ is reached, and outputs $\NTP(\Si)$ as the ratio of test points that violate the~\BdryC~to the total number of test points $\Lm$ when $m = m_{\max}$, as computed in line~\ref{L: Ratio}. Here, $0 \leq \NTP(\Si) \leq 1$ is a measure of how close $\Si$ is to being invariant.

\begin{algorithm}[t]
\DontPrintSemicolon
\caption{\texttt{BCD\_Test} (\BdryC~Test)} \label{Alg: Simplex_Invariance_Test}
	\SetKwInOut{Inputs}{Inputs}
	\SetKwInOut{Constants}{Constants}
    \SetKwInOut{Outputs}{Outputs}
	\Inputs{$\V{\Si}$, $F$, $\ell$, $m_{\max}$}
	\Constants{$\ss_m$, $M_m$, $\Lm \ \forall \ 1 \leq m \leq m_{\max}$}
	\Outputs{$\NTP(\Si)$}
	Compute $L_{\Si}$ from $\V{\Si}$ using~\eqref{Eq: L_Delta} \label{L: L_Si}\;
    Compute $r = \max \| v_i - v_j \|$, where $v_i,v_j \in \V{\Si}$ \label{L: r}\;
    Compute $\Normal{\Si}$ using~\eqref{Eq: Normal_Of_Simplex} \label{L: N} \tcp*{Normal vector}
    \For{$m \in \{0,1,\cdots, m_{\max} \}$ \label{L: for_A1} }{
        Set $\NTP_m = 0$ \label{L: NTP}\; 
        Compute $T(\M)$ using~\eqref{Eq: Transformation} \label{L: T(M)}\;
        Compute $\mathcal{IP} = F(T(\M)) \Normal{\Si}$ \label{Line: Inner_Prod_as_Matrix_Vector} \tcp*{Vector of //inner products at all test points}
        \ForEach{component $x$ of $\mathcal{IP}$ \label{L: foreach_A1}}{
            \If{$x > -r \ss_m \ell$}{
                $\NTP_m = \NTP_m + 1$ \label{L: NTP+1}\;
            }
        }
        \If{$\NTP_m == 0$}{
            \textbf{return} $\NTP(\Si) = 0$ \; 
            \textbf{end program} \;
        }
        }
    \textbf{return} $\NTP(\Si) = \frac{\NTP_{m}}{\Lm} $ \label{L: Ratio} \tcp*{$0 \leq \NTP(\Si) \leq 1$}
\end{algorithm}

The next section invokes this simplex invariance test to perform geometric deformations that attempt to transform convex polytopes to robust forward invariant sets.


\section{RFIS Computation Algorithm} \label{S: Algorithm}


In this section, we use Algorithm~\ref{Alg: Simplex_Invariance_Test} as a tool in developing a polytopic RFIS computation algorithm. 
\subsection{Proposed Vertex Map} \label{SS: Vertex Map}

We are interested in systems with a non-trivial minimal RFIS, $\mRPI$ and maximal RFIS, $\MRPI$. Thus, $\mRPI$ is not just a point, and $\MRPI$ does not enclose infinite volume. We begin with a convex polytope $\P$, such that $\mRPI \subset \P$, and triangulate $\P$ by a simplicial complex $\SiCo$, such that $\V{\SiCo} = \{v_0,\cdots,v_N\}$. 

Consider a point $\o \in \interior(\mRPI)$. Note that although $\mRPI$ is not known, $\o$ may easily be chosen. If the RFIS is due to a known equilibrium point, then $\o$ can be chosen as this point. Else, if it is due to a known limit cycle, $\o$ can be chosen as any point in the interior of the cycle. We consider deformations where all vertices but one are mapped to themselves. Specifically, $\H_{j}: \V{\SiCo} \rightarrow \V{\SiCoII}$ is defined as
\begin{align} \label{Eq: Vertex_Map}
    \H_{j}(v_i) = 
    \begin{cases}
        v_i, & i \neq j \\
        (1 - \decay)\o + \decay v_i, & i = j
    \end{cases}
    ,
\end{align}
where $\decay > 0$ is a growth/decay parameter. Hence, the vertex map $\H_{i}$ only perturbs the vertex $v_i$ by moving it along a ray $\ray_i(\o)$ emanating from $\o$ and containing $v_i$: 
    \begin{align}
         \ray_i(\o) = \{\o + \lambda (v_i - \o) \st \lambda > 0\}.
    \end{align}
An illustration is provided for the case where a simplicial complex of $1$-simplices triangulates the boundary of a $2$-dimensional polytope (a pentagon) in Fig.~\ref{fig: Perturb}. The vertex $v_0$ is mapped to $\H_0(v_0)$ which is constrained on the ray $\ray_{0}(c)$.  
\begin{figure}[t]
	\centering
	\subfigure[$\decay < 1$.]
	{
		\includegraphics[width=0.227 \textwidth]{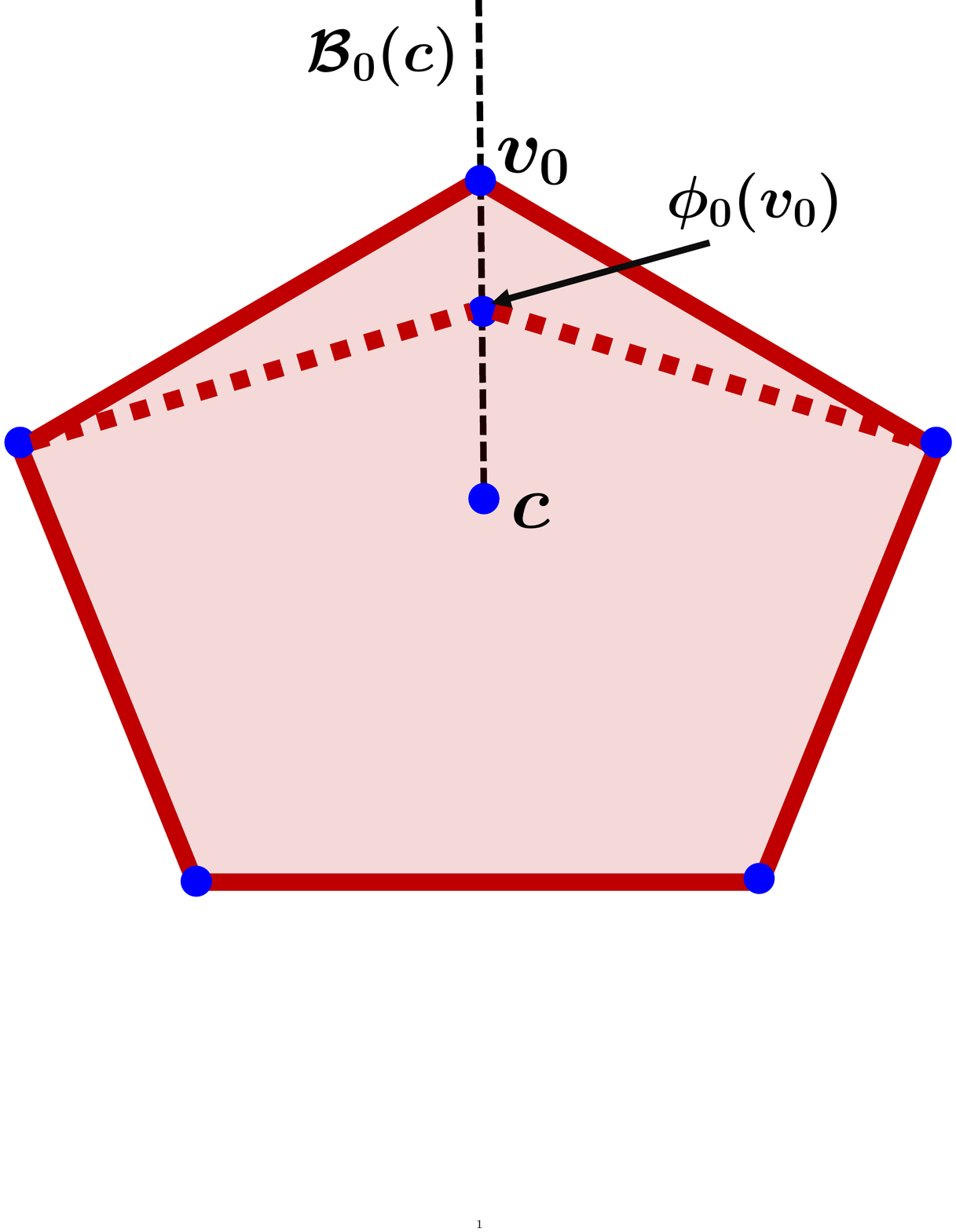}
		\label{fig: P_Decay}
	}
	\hfil
	\subfigure[$\decay > 1$.]
	{
        \includegraphics[width=0.227 \textwidth]{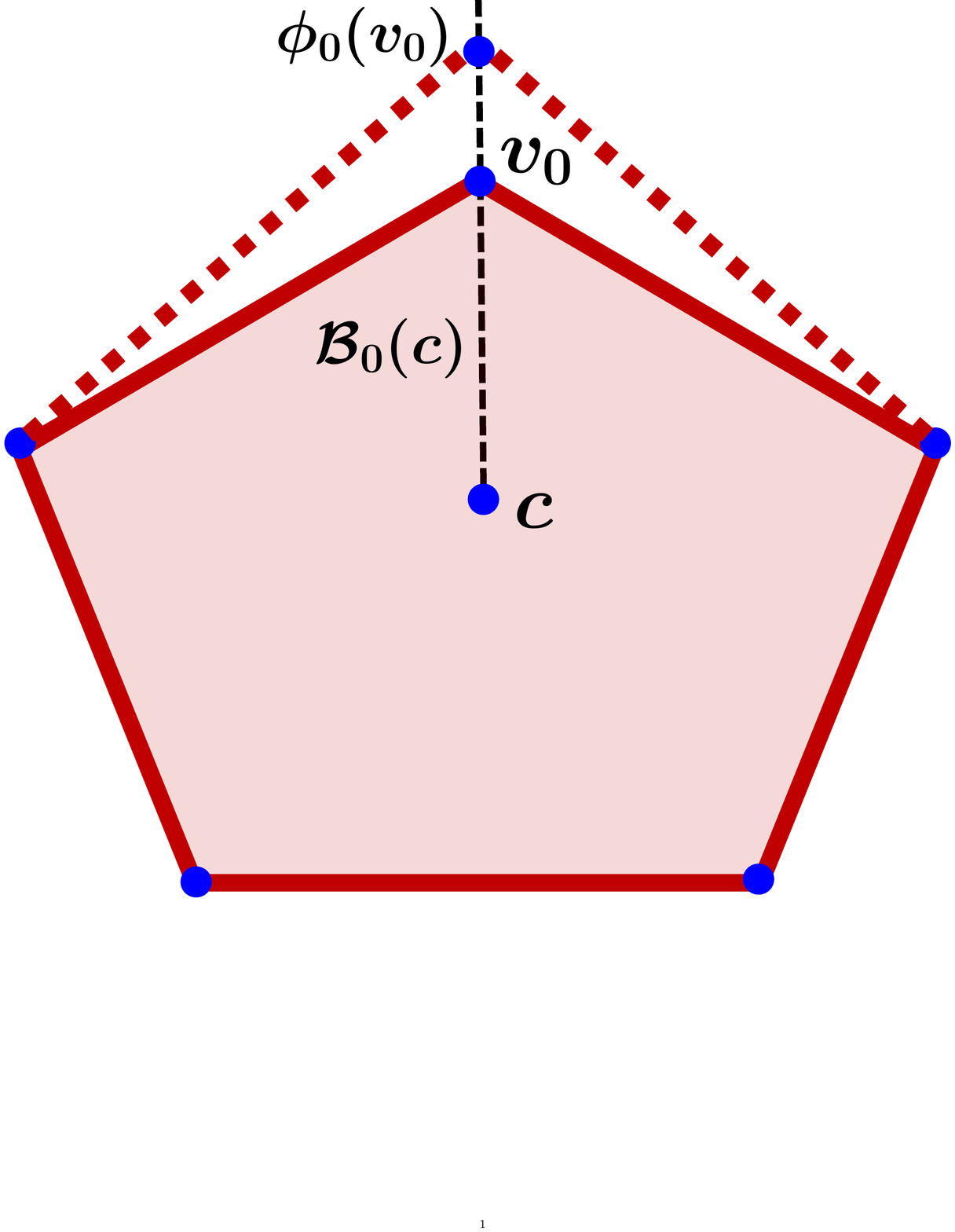}
		\label{fig: P_Grow}
	}
	\caption{Vertex map $\H_0$ when $\decay < 1$ and $\decay > 1$.}
	\label{fig: Perturb}
\end{figure}

Observe that only the points in the closed star of the perturbed vertex are changed due to the vertex map. It follows that $\CH_{j}(x) = x$ if $x \notin \star_{\SiCo}(v_j)$, where $\CH_{j}: |\SiCo| \rightarrow |\SiCoII|$ is the induced map due to $\H_j$, as defined in~\eqref{Eq: Cont_Homeo}. We now show that the deformed polytope is homeomorphic to the initial polytope.

\begin{theorem} \label{Thm: Deformation_is_Homeomorphism}
    Let $\CH_j$ be the induced map due to $\H_j$ as defined in~\eqref{Eq: Vertex_Map}. If $\decay > 0$, then $\CH_j$ is a homeomorphism $\forall j$.
\end{theorem}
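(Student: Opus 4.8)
The plan is to prove the statement by showing that $\H_j$ is a \emph{simplicial isomorphism} and then invoking the fact recalled earlier in the excerpt: if a vertex map $\H_j$ is bijective and $\H_j^{-1}$ is again a vertex map, then the induced map $\CH_j$ is a homeomorphism between $|\SiCo|$ and $|\SiCoII|$. Thus I would reduce the claim to three verifications: (i) the target $\SiCoII$, obtained by replacing $v_j$ with $\H_j(v_j)$ while keeping every combinatorial incidence unchanged, is a genuine geometric simplicial complex; (ii) $\H_j:\V{\SiCo}\to\V{\SiCoII}$ is a bijection; and (iii) $\H_j^{-1}$ is itself a vertex map. The case $\decay=1$ is trivial, since then $\H_j$ is the identity, so I may assume $\decay\neq 1$.

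Two of these are routine. For (ii), $\H_j$ fixes every vertex except $v_j$, which it sends to $\H_j(v_j)=\o+\decay(v_j-\o)$, a point on the open ray $\ray_j(\o)$. Since $\o\in\interior(\mRPI)\subset\interior(\P)$, the boundary $\bdry\P$ is star-shaped about $\o$, so distinct triangulation vertices lie along distinct rays from $\o$; as $\H_j(v_j)$ lies on $\ray_j(\o)$ and differs from $v_j$ (because $\decay\neq 1$), it can equal no $v_i$, giving injectivity and hence bijectivity onto $\V{\SiCoII}$. For (iii), $\H_j^{-1}$ carries $\H_j(v_j)$ back to $v_j=\o+\decay^{-1}(\H_j(v_j)-\o)$, i.e.\ it has exactly the form~\eqref{Eq: Vertex_Map} with parameter $\decay^{-1}>0$, so the argument establishing (i) for $\H_j$ applies verbatim to $\H_j^{-1}$.

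The heart of the proof, and the step I expect to be the main obstacle, is (i): that moving $v_j$ an arbitrary positive amount along $\ray_j(\o)$ neither degenerates a simplex nor causes the deformed boundary to self-intersect, for \emph{every} $\decay>0$. My tool is radial projection $\pi(x)=(x-\o)/\|x-\o\|$ onto the unit sphere $\S^{n-1}$ centered at $\o$. Because $\H_j(v_j)=\o+\decay(v_j-\o)$ has the same direction from $\o$ as $v_j$ whenever $\decay>0$, a short cone computation shows that each deformed simplex $\Si'=\{\H_j(v_j),w_1,\dots,w_{n-1}\}$ projects onto exactly the same spherical simplex as its original $\Si=\{v_j,w_1,\dots,w_{n-1}\}$: the direction set $\{(y-\o)/\|y-\o\|:y\in\Si'\}$ is the set of directions in the cone generated by the $v_i-\o$, which is invariant under positive radial rescaling of the vertices. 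Since $\pi$ restricted to $\bdry\P=|\SiCo|$ is a homeomorphism onto $\S^{n-1}$ by star-shapedness, and the projected spherical complexes of $\SiCo$ and $\SiCoII$ coincide, each $\Si'$ is nondegenerate (its projection is a full $(n-1)$-dimensional spherical simplex, so $\Si'$ cannot lie in an $(n-2)$-flat), and $\pi$ restricted to the union of the deformed simplices is a continuous bijection from a compact set onto the Hausdorff sphere $\S^{n-1}$, hence a homeomorphism. This exhibits that union as the boundary of a body star-shaped about $\o$ and certifies that the deformed simplices meet only along common faces, so $\SiCoII$ is a valid simplicial complex.

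With (i)--(iii) in hand, the cited simplicial-isomorphism theorem yields that $\CH_j$ is a homeomorphism. Alternatively, avoiding the abstract theorem, one can note that $\CH_j$ is continuous and equals the identity off $\star_{\SiCo}(v_j)$, and that the displacement formula $\CH_j(x)=x+b_j(x)(\decay-1)(v_j-\o)$ together with the star-shapedness established above makes $\CH_j$ an injective continuous map from the compact set $|\SiCo|$ into the Hausdorff space $\R^n$, hence a homeomorphism onto its image $|\SiCoII|$.
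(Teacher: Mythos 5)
Your proof is correct, and its core coincides with the paper's: the paper's entire argument consists of your steps (ii) and (iii). It writes down $\H_j^{-1}$ explicitly, observes that $\H_j(v_i)$ stays on the ray $\ray_i(\o)$ whenever $\decay>0$, and uses the fact that distinct vertices of a triangulation of the boundary of a convex polytope lie on distinct rays from an interior point $\o$ to conclude that $\H_j$ is a bijective vertex map whose inverse is again a vertex map, hence induces a homeomorphism. Where you genuinely go further is your step (i): the paper never verifies that $\SiCoII$ is a valid geometric simplicial complex (nondegenerate simplices, pairwise intersections only along common faces), even though the result it invokes about induced maps presupposes this; the issue of self-intersection is only addressed informally in a remark after the subsequent corollary. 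Your radial-projection argument --- that rescaling one vertex by a positive factor along its ray from $\o$ leaves the spherical projection of every affected simplex unchanged, so that nondegeneracy and the face-to-face property are inherited from $\SiCo$ via the homeomorphism $\pi|_{\bdry\P}$ --- supplies exactly the missing justification, and does so for every $\decay>0$ rather than only for small perturbations. The one detail worth spelling out is that injectivity of $\pi$ on each deformed simplex $\Si'$ requires $\o\notin\mathrm{aff}(\Si')$; this follows because otherwise $\pi(\Si')$ would lie in a great $(n-2)$-sphere, contradicting the fact that it equals the full $(n-1)$-dimensional spherical simplex $\pi(\Si)$.
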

\begin{proof}
    A vertex map induces a homeomorphism if and only if it has an inverse that is also a vertex map. Clearly, the map defined by
    \begin{align}
    \H_{j}^{-1}(v_i) = 
    \begin{cases}
        v_i, & i \neq j \\
        (1 - \decay^{-1})\o + \decay^{-1} v_i, & i = j
    \end{cases}
    ,        
    \end{align}
    is the inverse map if $\H_j(v_i) \neq \H_j (v_k) \ \forall j$ and $\forall i \neq k$. 
    Since $\decay > 0$, it follows that $\H_j(v_i) \in \ray_i(\o)$. Since $\o \in \interior(\P)$, and since the simplicial complex triangulates the boundary of the convex polytope $\P$, no two of its vertices lie on the same ray. Thus, $\ray_i(\o) \cap \ray_k(\o)$ is the empty set, and no two vertices are ever mapped to the same point, making $\H_j$ a bijective vertex map. Hence, its induced map, $\CH_j$ is a homeomorphism.  
\end{proof}

\begin{corollary}
    Let $\H : \V{\SiCo} \rightarrow \V{\SiCoII}$ be defined as
    \begin{align}
        \H(v_i) = \H_{j_k} \circ \cdots \circ \H_{j_1} (v_i), \label{Eq: Vertex_Map_Composed}
    \end{align}
    where $\H_{j_{l}}$ are vertex maps as defined in~\eqref{Eq: Vertex_Map}. Then, the geometric realization $|\SiCoII|$ is homeomorphic to the geometric realization $|\SiCo|$. 
\end{corollary}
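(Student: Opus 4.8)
The plan is to realize $\H$ as a chain of single-vertex deformations and to invoke Theorem~\ref{Thm: Deformation_is_Homeomorphism} at each link, using the fact that a finite composition of homeomorphisms is again a homeomorphism. Concretely, I would set $\SiCo_0 = \SiCo$ and let $\SiCo_l$ be the complex obtained after applying $\H_{j_1},\dots,\H_{j_l}$, so that $\SiCo_k = \SiCoII$. Each elementary map $\H_{j_l}: \V{\SiCo_{l-1}} \to \V{\SiCo_l}$ from~\eqref{Eq: Vertex_Map} induces a continuous map $\CH_{j_l}: |\SiCo_{l-1}| \to |\SiCo_l|$ via~\eqref{Eq: Cont_Homeo}. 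First I would record that the induced map of the composite vertex map $\H$ coincides with the composite of the induced maps, i.e. $\CH = \CH_{j_k} \circ \cdots \circ \CH_{j_1}$. This reduces to observing that all the $\SiCo_l$ share the same combinatorial structure (only vertex positions move), that each $\CH_{j_l}$ is affine on every simplex and hence preserves barycentric coordinates, and that the barycentric extension of a vertex map is unique; the two sides then agree vertex-by-vertex and therefore everywhere.

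The substance of the proof is to argue that Theorem~\ref{Thm: Deformation_is_Homeomorphism} applies at \emph{every} step, so that each $\CH_{j_l}$ is a homeomorphism. This is the main obstacle: Theorem~\ref{Thm: Deformation_is_Homeomorphism} was established assuming $\SiCo$ triangulates the boundary of a convex polytope $\P$ with $\o \in \interior(\P)$, but an intermediate complex $\SiCo_l$ with $l \geq 1$ need no longer triangulate a convex set, since a single outward perturbation with $\decay > 1$ can already destroy convexity. The resolution is to isolate the only geometric fact that the proof of Theorem~\ref{Thm: Deformation_is_Homeomorphism} actually uses, namely that distinct vertices lie on distinct rays emanating from $\o$ (this is what forces $\H_{j_l}$ to be injective and its inverse to be a vertex map). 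I would then show this distinct-ray property is an invariant of the deformation: each $\H_{j_l}$ moves a single vertex along the ray $\ray_{j_l}(\o)$ through its current position, which leaves that vertex's direction from the fixed centre $\o$ unchanged, while $\o$ itself is never moved. Hence the assignment $v_i \mapsto \ray_i(\o)$ is preserved, and pairwise distinctness of the rays in $\SiCo_{l-1}$ carries over to $\SiCo_l$. Since the rays are pairwise distinct in $\SiCo_0 = \SiCo$ (by convexity of $\P$ and $\o \in \interior(\P)$, exactly as in Theorem~\ref{Thm: Deformation_is_Homeomorphism}), a short induction on $l$ establishes the property for every $\SiCo_l$.

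With this invariant in hand, the argument of Theorem~\ref{Thm: Deformation_is_Homeomorphism} goes through verbatim at each step: $\H_{j_l}$ is a bijective vertex map whose inverse, obtained by replacing $\decay$ with $\decay^{-1}$ on the moved vertex, is again a vertex map, so $\CH_{j_l}$ is a homeomorphism of geometric realizations. Finally, since each $\CH_{j_l}: |\SiCo_{l-1}| \to |\SiCo_l|$ is a homeomorphism and the composite of finitely many homeomorphisms is a homeomorphism, $\CH = \CH_{j_k} \circ \cdots \circ \CH_{j_1}: |\SiCo| \to |\SiCoII|$ is a homeomorphism, which is precisely the assertion that $|\SiCoII|$ is homeomorphic to $|\SiCo|$.
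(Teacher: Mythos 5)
Your proposal is correct and follows essentially the same route as the paper: the paper's proof likewise observes that each elementary map keeps every vertex on its own ray $\ray_i(\o)$, that the rays are pairwise disjoint so each successive vertex map remains bijective and induces a homeomorphism, and then concludes by induction and closure of homeomorphisms under composition. Your version is somewhat more careful than the paper's in that you explicitly flag that intermediate complexes may fail to be convex and isolate the distinct-ray property as the invariant that actually matters, and you also verify that the induced map of the composite equals the composite of the induced maps, a step the paper takes for granted.
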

\begin{proof}
    First, note that the vertex map $\H(v_i)$ in~\eqref{Eq: Vertex_Map_Composed} entails the application of $\H_{j_1}$ followed by $\H_{j_2}$. If $v_i \in \ray_i(\o)$, then $\H_{j_1}(v_i) \in \ray_i(\o)$ by definition. Since $\ray_i(\o) \cap \ray_k(\o)$ is the empty set, $\H_{j_2}$ is also a bijective vertex map and induces a homeomorphism. Since the composition of two homeomorphisms is also a homeomorphism, and since any finite sequence of the vertex maps in~\eqref{Eq: Vertex_Map} restricts the image of $v_i$ to $\ray_i(\o)$, it follows by induction that $\H_{j_k} \circ \cdots \circ \H_{j_1}$ also induces a homeomorphism. Thus, $|\SiCo|$ and $|\SiCoII|$ are homeomorphic.
\end{proof}

The vertex map in~\eqref{Eq: Vertex_Map_Composed} deforms the polytope into a manifold that is homeomorphic to it. Therefore, the deformations preserve the topological structure of the polytope, so that the simplicial complex obtained by the sequence of deformations triangulates the boundary of a new polytope. This new polytope need not be convex, since homeomorphisms do not preserve convexity in general. However, it is not self-intersecting since $\H(v_i) \in \ray_i(\o) \ \forall i$. A vertex map is either kept or discarded depending on whether or not it serves the objective of approximating an RFIS. For this, we require a measure of how \enquote{close} a given simplex $\Si$ is to being invariant. One such measure is $\NTP(\Si)$, the ratio of test points on $\Si$ that violate the \BdryC, to the total number of test points on $\Si$. We propose that a vertex perturbation $\H: \V{\SiCo} \rightarrow \V{\SiCoII}$ be kept if
\begin{align} \label{Eq: Stop_Condition}
\left( \sum_{\Si \in \SiCoII}\NTP(\Si) < \sum_{\Si \in \SiCo}\NTP(\Si) \right)\ \textbf{\textrm{or}} \ \sum_{\Si \in \SiCoII}\NTP(\Si) = 0,
\end{align}
and discarded otherwise. 

\subsection{Polytopic RFIS Computation Algorithm} \label{SS: RFIS Computation Algorithm}
This section presents an algorithm to compute RFISs of different sizes for a given non-linear $\ell$-Lipschitz dynamical system with bounded additive disturbances. The essence of the algorithm is to repeatedly perturb vertices in the simplicial triangulation through vertex maps so that each test point in all the simplices satisfies the~\BdryC. A particularly interesting application of this is when $\P$ itself is an RFIS for the system obtained through any other method. Each deformation would result in a smaller or larger RFIS depending on whether $\decay < 1$ or $\decay > 1$, thereby creating an RFIS family. The appropriate RFIS may be considered for the application in concern. Once a stage is reached where no vertex can be perturbed further (all perturbations would violate~\eqref{Eq: Stop_Condition}), the simplices in the simplicial complex $\SiCoII$ are subdivided. Since the geometric realizations before and after the subdivision are the same, i.e. $|\S(\SiCoII)| = |\SiCoII|$, it follows that the application of vertex maps of the form in~\eqref{Eq: Vertex_Map_Composed} preserves homeomorphicity. On the other hand, since $\star_{\S(\SiCo)}(v) \subset \star_{\SiCo}(v)$, the deformations are finer, allowing the new polytope to better approximate the shape of the RFIS. This allows for a family of algorithms, each with different types and sequences of simplicial deformations and subdivisions. One such algorithm is presented in Algorithm~\ref{Alg: RFIS_Computation}, where the polytope is deformed by perturbing a single vertex and checking whether or not to discard the deformation immediately after. The next paragraph details the steps. 

Algorithm~\ref{Alg: RFIS_Computation} takes as its input the simplicial complex $\SiCo$ that triangulates the boundary of the initial convex polytope, the decay parameter $\decay$ which controls the rate with which the volume of the set is increased or decreased while searching for the RFIS, and the maximum number of permissible subdivisions $t_{\max}$. Since each iteration involves simplex subdivisioning, $t_{\max}$ bounds the maximum number of simplices in the simplicial complex that triangulates the boundary of the final polytopic set. The algorithm returns as output an ordered list of sets $\SiCoSet$, for which the $j$-th entry $\SiCoSet[j]$ is the simplicial complex that triangulates the boundary of the deformed polytope after $j$ iterations of the algorithm. Line~\ref{L2: Init_I} of the algorithm initializes the values of the output set $\SiCoSet[0]$. 

The idea behind Algorithm~\ref{Alg: RFIS_Computation} is to repeatedly perturb the vertices in a sequence until no vertex can be perturbed further, i.e. every vertex perturbation violates~\eqref{Eq: Stop_Condition}. Since this entails possibly perturbing a vertex more than once, we introduce an indicator Boolean variable $\status[j] \in \{0,1\}$ that is 0 if the perturbation $\H_j(v_j)$ in the previous iteration was discarded due to violating~\eqref{Eq: Stop_Condition}, and 1 otherwise. Since all vertices are initially candidates for being perturbed, line~\ref{L2: Init_status} initializes $\status[j] = 1 \ \forall j$. Consequently, each vertex is perturbed at least once through the vertex map $\H_j$ as outlined in line~\ref{L2: Perturb}. Whether or not a perturbation is successful in deforming the polytope is determined by computing~\eqref{Eq: Stop_Condition}. Since $\H_j(v_j)$ induces a homeomorphism that only affects points in $\star_{\SiCo}(v_j)$, it is sufficient to check for~\eqref{Eq: Stop_Condition} over only the simplices in the closed star of the perturbed vertex. Therefore, lines~\ref{L2: ForEach2}-\ref{L2: Compute_NTP} invoke Algorithm~\ref{Alg: Simplex_Invariance_Test} to compute $\NTP(\Si)$ - the ratio of test points that violate the \BdryC~to the total number of test points, for all the simplices in the closed stars of a vertex in $\SiCo$ (before perturbation) and in $\SiCoII$ (after perturbation). Based on this, line~\ref{L2: uIf} checks if~\eqref{Eq: Stop_Condition} is true and lines~\ref{L2: Assign}-\ref{L2: If_End} accordingly update the simplicial complex that triangulates the boundary of the polytope as follows: If a vertex deformation $\H_j$ satisfies~\eqref{Eq: Stop_Condition}, then it is kept as shown in line~\ref{L2: Assign}, and the variable $\status[j]$ is set to $1$ in line~\eqref{L2: stat1}, indicating that further perturbation of vertices may be possible. If the vertex deformation must be discarded due to not satisfying~\eqref{Eq: Stop_Condition}, then the variable $\status[j]$ is set to $0$ as shown in line~\eqref{L2: stat2}. Note that the stop condition for the \textbf{while} loop is that $\status[j] = 0 \ \forall j$ (line~\ref{L2: While_Main}), and that \emph{all} vertices are perturbed even if $\status[j] = 1$ for only some $j$. This tends to avoid local minima, since perturbation of a vertex affects its entire closed star, and thus neighboring vertices that previously did not satisfy $\eqref{Eq: Stop_Condition}$ may do so now upon perturbation. If~\eqref{Eq: Stop_Condition} is still not satisfied, then the vertex deformation is discarded. This process is iterated until a stage is reached where all further perturbations violate~\eqref{Eq: Stop_Condition}, which causes the algorithm to exit the \textbf{while} loop. The resulting simplicial complex which triangulates the boundary of the deformed polytope is stored as $\SiCoSet[t]$ for iteration $t$ in line~\ref{L2: Store}. Next, all simplices in the simplicial complex undergo barycentric subdivision as shown in line~\ref{L2: Subdivision}, and the vertex set of the simplicial complex is accordingly updated. This application of a sequence of vertex perturbations followed by a simplex subdivision is considered one iteration of the algorithm. The algorithm terminates when $t_{\max}$ iterations are completed. Since any polytopic RFIS satisfies $\NTP(\Si) = 0 \ \forall \Si \in \SiCo$, where $\SiCo$ triangulates the boundary of the polytopic RFIS, the algorithm checks whether the last simplicial complex $\SiCoSet[t-1]$ triangulates the boundary of a polytopic RFIS in line~\ref{L2: Check_RFIS}.

\begin{algorithm}[htbp]
\DontPrintSemicolon
\caption{RFIS Computation Algorithm} \label{Alg: RFIS_Computation}
	\SetKwInOut{Inputs}{Inputs}
    \SetKwInOut{Outputs}{Outputs}
	\Inputs{$\SiCo$, $\decay$, $F$, $\ell$, $m_{\max}$, $t_{\max}$}
	\Outputs{$\SiCoSet$}
	Set $\SiCoSet[0] = \SiCo$\; \label{L2: Init_I}
    \For{$t = 1,\cdots,t_{\max}$}{ \label{L2: For_Main}
        Initialize $\status[j] = 1 \ \forall j \in \{1,\cdots,N \}$ \tcp*[1]{$N$ is //the cardinality of $\V{\SiCo}$} \label{L2: Init_status}
        \While{$\status[j] \neq 0 \ \forall j$ \label{L2: While_Main}}{
            \ForEach{$j \in \{ 1,\cdots,N \}$ \label{L2: ForEach}}{ 
                Apply $\H_j: \V{\SiCo} \rightarrow \V{\SiCoII}$ on $v_j$ \label{L2: Perturb} \tcp*{use~\eqref{Eq: Vertex_Map}}
                \ForEach{$\Si \in \star_{\SiCo}(v_j) \cup \star_{\SiCoII}(v_j)$ \label{L2: ForEach2}}{
                Compute $\NTP(\Si) =$ \texttt{BCD\_Test}$(\V{\Si}$,$F$,$\ell$,$m_{\max})$ \label{L2: Compute_NTP}\; 
                }
                \uIf{\eqref{Eq: Stop_Condition} is \emph{true} \label{L2: uIf}}{
                $\SiCo = \SiCoII$ \label{L2: Assign} \;
                Set $\status[j] = 1$ \label{L2: stat1} \;
                }
                \Else{
                Set $\status[j] = 0$ \label{L2: stat2} \;                
                } \label{L2: If_End}
            } 
        } \label{L2: While_End}
        Store $\SiCoSet[t] = \SiCo$ \label{L2: Store}\;
        Set $\SiCo = \Bary(\SiCo)$ \tcp*{Barycentric subdivision} \label{L2: Subdivision}
    }
    \If{$\sum_{\Si \in \SiCoSet[t_{\max}]} \NTP(\Si) \neq 0$ \label{L2: Check_RFIS}}{
        \textbf{display} \enquote{RFIS Not Found} \label{L2: Display}\; 
    }
    \textbf{return} \(\SiCoSet = \{\SiCoSet[0], \cdots, \SiCoSet[t_{\max}]\} \) \label{L2: Return}\;
\end{algorithm}


In order to confirm that the generated polytopes are RFISs of different size, we can easily compute and track the volume enclosed by the polytope as the algorithm progresses, using only the simplicial complex that triangulates its boundary. This is also useful in proving that the algorithm terminates (Theorem~\ref{Thm: Convergence}) and for benchmarking our results with existing work (Section~\ref{S: Simulations}). Although equation~\eqref{Eq: Volume_Of_Simplex} only presents the formula to compute the volume of a simplex, it can be invoked to compute the volume of the polytope as follows. Since $\SiCoSet[k]$ triangulates the boundary of a polytope for all $k$, we construct a triangulation for the polytope itself as follows: If $\Si \in \SiCoSet[k]$, then the set 
\begin{align}
\!\!\!    \SiCoSet'[k] \!=\! \Big\{ \Si ' \st \V{\Si '} = \{\o\} \cup \V{\Si}, \ \forall \Si \in \SiCoSet[k] \Big\},
\end{align}
with $\o \in \mRPI$ triangulates the polytope. Thus, the volume enclosed by the polytope can be computed from the triangulation of its boundary, $\SiCoSet[k]$, as
\begin{align} \label{Eq: Volume_Enclosed}
    \Vol(\SiCoSet[k]) = \sum_{\Si' \in \SiCoSet'[k]} \Vol(\Si'),
\end{align}
where $\Vol(\Si')$ is computed using~\eqref{Eq: Volume_Of_Simplex}.

\begin{theorem} \label{Thm: Convergence}
    Let $\SiCo$ be a simplicial complex that triangulates the boundary of a convex polytope containing $\mRPI$ for a nonlinear $\ell$-Lipschitz continuous dynamical system $F$. Then, starting with $\SiCoSet[0] = \SiCo$ and parameter $\decay \in (0,\infty)$, algorithm~\ref{Alg: RFIS_Computation} terminates. Further, if $\SiCoSet[k]$ is an RFIS for any iteration $k$, then the sequence $(\SiCoSet[l])_{l \geq k}$ is a sequence of RFISs that successively reduce in volume if $\decay < 1$, and increase in volume if $\decay > 1$. 
\end{theorem}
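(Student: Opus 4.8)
The plan is to prove the two halves separately: that Algorithm~\ref{Alg: RFIS_Computation} halts, and that once a genuine RFIS appears at some iteration $k$, every later complex $\SiCoSet[l]$ triangulates an RFIS whose volume varies monotonically in $\decay$. Since the outer \textbf{for} loop runs exactly $t_{\max}$ times, termination reduces to showing each inner \textbf{while} loop halts. The quantity I would monitor is the total defect $\sum_{\Si}\NTP(\Si)$. Within one \textbf{while} loop the combinatorial structure is frozen (barycentric subdivision happens only afterwards, in line~\ref{L2: Subdivision}), so there are finitely many vertices, and each accepted perturbation~\eqref{Eq: Vertex_Map} simply scales the distance $\|v_j-\o\|$ of the moved vertex to $\o$ by the fixed factor $\decay$. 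By the acceptance rule~\eqref{Eq: Stop_Condition} the total defect is non-increasing along accepted steps, and since every $\NTP(\Si)$ lies in the finite set $\{0,\tfrac{1}{\Lm},\dots,1\}$, this sum takes finitely many values and can strictly decrease only finitely often.

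The remaining case is that the defect has reached $0$, so (by Theorem~\ref{Thm: Bdry_Condition_Discretized} and Corollary~\ref{Cor: No_Escape_from_Simplex}) the complex triangulates a genuine RFIS, and further steps are accepted only through the second clause of~\eqref{Eq: Stop_Condition}, i.e. they keep the set an RFIS while enlarging it ($\decay>1$) or contracting it ($\decay<1$). Here I would use the standing assumption that the system has a nontrivial $\mRPI$ and a bounded $\MRPI$: every reachable set is star-shaped about $\o\in\interior(\mRPI)$ and stays an RFIS, hence is trapped between $\mRPI$ and $\MRPI$, so each $\|v_j-\o\|$ is bounded above by the radius of $\MRPI$ and below by the radius of a ball $B(\o,\rho)\subset\mRPI$. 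Because every acceptance multiplies $\|v_j-\o\|$ by $\decay\neq1$, only finitely many acceptances per vertex are possible; combined with the previous paragraph this bounds the total number of accepted steps, so the loop exits as soon as a full sweep of the vertices accepts nothing. Pinning the reachable RFISs inside this $\mRPI$--$\MRPI$ envelope, so that the geometric factor $\decay$ forces finiteness, is the step I expect to be the main obstacle, since it amounts to showing the star-shaped deformation cannot leave the envelope while remaining invariant.

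For the second assertion I would induct on $l\geq k$ using two volume facts. First, barycentric subdivision obeys $|\Bary(\SiCo)|=|\SiCo|$, so it changes neither the enclosed set nor its volume~\eqref{Eq: Volume_Enclosed}; and because it preserves each normal $\Normal{\Si}$ while only shrinking the diameter $r$ (thereby relaxing the margin $-r\ss_m\ell$), it carries $\NTP=0$ to $\NTP=0$, so the subdivided complex still triangulates the same RFIS. Second, for a perturbation~\eqref{Eq: Vertex_Map} each boundary cone $\Si'$ over a simplex of $\star_{\SiCo}(v_j)$ has its column $v_j-\o$ in the matrix $B_{\Si'}$ from~\eqref{Eq: B_Delta} replaced by $\decay(v_j-\o)$, so by~\eqref{Eq: Volume_Of_Simplex} its volume is multiplied exactly by $\decay$ while all cones not meeting $v_j$ are unchanged; summing in~\eqref{Eq: Volume_Enclosed} then shows the enclosed volume strictly decreases for $\decay<1$ and strictly increases for $\decay>1$. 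Since acceptance keeps every perturbation inside the $\NTP=0$ regime, each $\SiCoSet[l]$ with $l\geq k$ is again an RFIS, and their volumes form a monotone sequence, non-increasing when $\decay<1$ and non-decreasing when $\decay>1$, that strictly changes at every accepted perturbation.

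The delicate point in the second half is the claim that subdivision preserves $\NTP=0$. The parent simplex being invariant gives only $\langle F(x),\Normal{\Si}\rangle\le0$ throughout, whereas the discrete test on a sub-simplex demands strict negativity at its own lattice points, which can fail exactly where the dynamics are tangent to the boundary. I would close this either by assuming a generic, non-tangential boundary, under which the strict margin from the parent test persists to all sub-simplices, or by observing that such tangencies form a measure-zero subset of $\bdry\P$ that leaves the enclosed set unaffected, and hence leaves both its status as an RFIS and the volume computation~\eqref{Eq: Volume_Enclosed} intact.
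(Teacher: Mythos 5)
Your proof follows essentially the same route as the paper's: termination of each \textbf{while} loop via the monotone, finitely-valued defect $\sum_{\Si}\NTP(\Si)$ combined with the geometric fact that each accepted perturbation scales $\|v_j-\o\|$ by the fixed factor $\decay\neq 1$ (so unboundedly many acceptances would drive a vertex to $\o\in\interior(\mRPI)$ or to infinity, contradicting the nontriviality of $\mRPI$ and the boundedness of $\MRPI$), and volume monotonicity via the determinant column-scaling argument showing each cone over the closed star of the perturbed vertex has its volume multiplied exactly by $\decay$. The one place you go beyond the paper --- worrying that barycentric subdivision may not carry $\NTP=0$ to $\NTP=0$ because the strict margin $-r\ss_m\ell$ required by the \BdryC~need not persist at the new lattice points of the sub-simplices --- is a genuine subtlety that the paper's own proof silently skips, so flagging it (and the need for a non-tangency or similar hypothesis to close it) is a point in your favor rather than a defect relative to the published argument.
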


\begin{proof}
    We show that the \textbf{while} loop (lines~\ref{L2: While_Main}-\ref{L2: While_End}) terminates, i.e. the halting criterion $\status[j] = 0 \ \forall j$ is achieved in a finite number of iterations. Assume to the contrary that $\forall N \in \N,  \exists \ k > N$ such that $\status[j] = 1$ for some $j$, in the $k$-th iteration of the \textbf{while} loop. Equivalently,~\eqref{Eq: Stop_Condition} is true atleast once in every iteration. Because $\NTP(\Si) \geq 0 \ \forall \Si \in \SiCo$, and because every strictly decreasing sequence of elements in a finite set converges to its least element, it follows that $\exists N$ such that $\NTP(\Si) = 0 \ \forall \Si \in \SiCo$ and $\forall k > N$. In other words, $\SiCo$ triangulates the boundary of an RFIS for all $k > N$. Since the \textbf{while} loop continues indefinitely, there is atleast one $j$ for which $\status[j] = 1$ infinitely often.
    
    However, since $v_j \in \ray_j(\o) \ \forall j$ by definition, and since
    \begin{align} \label{Eq: Perturbation_Norm}
    \| \H_j(v_j) - \o \|
    \begin{cases}
     < \| v_j - \o \|, & \decay < 1 \\ 
     > \| v_j - \o \|, & \decay > 1 \\ 
    \end{cases}
    ,
    \end{align}
    where $c$ was chosen to be in the interior of $\mRPI$, it follows that $\H_j$ maps $v_j$ in the limit as $k \rightarrow \infty$ to either $\o$ or to $\infty$ along $\ray_j(\o)$, depending on whether $\decay < 1$ or $\decay >1$ repsectively. But, $v_j \in \V{\SiCo}$ and $\SiCo$ triangulates the boundary of an RFIS $\forall k > N$. This contradicts $\o \in \interior(\mRPI)$ if $\decay < 1$, and $\Vol(\MRPI) < \infty$ if $\decay > 1$. We conclude that the \textbf{while} loop terminates in finitely many iterations.
    
    Next, we show that the sequence of sets generated due to the perturbations either increase or decrease monotonically in volume depending on the value of $\decay$. Let $\H_j: \V{\SiCo} \rightarrow \V{\SiCoII}$, where $\SiCo$ and $\SiCoII$ triangulate the boundaries of polytopes $\P_{\SiCo}$ and $\P_{\SiCoII}$ respectively. We show that 
    \begin{align}
    \Vol(\P_{\SiCoII})
        \begin{cases}
            < \Vol(\P_{\SiCo}), & \decay < 1 \\
            > \Vol(\P_{\SiCo}), & \decay > 1 \\
        \end{cases}
        .
    \end{align}
The polytopes $\P_{\SiCo}$ and $\P_{\SiCoII}$ are themselves triangulated by simplicial complexes $\SiCo'$ and $\SiCoII'$ respectively, where these complexes are defined similar to~\eqref{Eq: Volume_Enclosed}. Then,
\begin{align}
    \Vol(\P_{\SiCo}) &= \sum_{\Si \in \SiCo'} \frac{1}{n!}\det(B_{\Si}) \\
    &= \!\! \sum_{\Si \in \star_{\SiCo'}(v_j)} \!\!\!\!\! \frac{1}{n!}\det(B_{\Si}) +\!\! \sum_{\Si \notin \star_{\SiCo'}(v_j)}\!\!\!\!\! \frac{1}{n!}\det(B_{\Si}) \label{Eq: Vol_PL}
\end{align}
The second term in~\eqref{Eq: Vol_PL} is not affected by application of $\H_j$. However, for each simplex $\Si \in \star_{\SiCo'}(v_j)$, the volume after the perturbation is $\alpha$ times the volume before perturbation, as explained in what follows. Let $\V{\Si} = \{v_0,\cdots,v_{n-1},c\}$, so that 
\begin{align}
    \Vol(\Si) &= \frac{1}{n!} \! \left| \det  \begin{bmatrix} v_1 \!-\! v_0,\! & \cdots,\! & v_{n-1} \!-\! v_{0},\! & c \!-\! v_0 \end{bmatrix}  \right|\\ 
    &= \frac{1}{n!} \left| \det \begin{bmatrix} v_0 & v_1 & \cdots & v_{n-1} & c \\ 1 & 1 & \cdots & 1 & 1 \end{bmatrix} \right| \label{Eq: Vol_Mat}
\end{align}
Without loss of generality, consider the vertex map $\H_0: \V{\SiCo} \rightarrow \V{\SiCoII}$, and denote by $\Si_0$ the simplex after perturbation that corresponds to $\Si$. By definition of $\H_0$, it follows that
\begin{align}
    \! \Vol(\Si_1) & \!=\! \frac{1}{n!} \left| \det  \begin{bmatrix} \decay v_0 \!+\! (1\!-\!\alpha) \o &  \cdots & v_{n-1} & c \\ 1  & \cdots & 1 & 1 \end{bmatrix}  \right|. \label{Eq: Vol_Mat2}
\end{align}
But the matrix in~\eqref{Eq: Vol_Mat2} can be obtained using elementary column transformations on the matrix in~\eqref{Eq: Vol_Mat}. Since multiplying a column by $\decay$ multiplies the determinant by the same factor, it follows that $\Vol(\Si_0) = \decay \Vol(\Si) \ \forall \Si \in \star_{\SiCo'}(v_j) \ \forall j$. Thus, each vertex map on the simplicial complex increases or decreases the volume of the polytope it triangulates, depending on whether $\decay > 1$ or $\decay < 1$ respectively.
\end{proof}

\begin{corollary} \label{Cor: Start_Stay}
    If the initial set $\SiCoSet[0]$ is an RFIS, then $\SiCoSet[k]$ is an RFIS for all $k$.
\end{corollary}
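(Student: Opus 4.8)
The plan is to argue by induction over the elementary operations that Algorithm~\ref{Alg: RFIS_Computation} performs on the running complex, namely the single-vertex perturbations inside the \textbf{while} loop and the barycentric subdivision at the end of each outer iteration. Throughout I would identify the statement ``$\SiCo$ is an RFIS'' with the certificate $\sum_{\Si \in \SiCo} \NTP(\Si) = 0$, which by Theorem~\ref{Thm: Bdry_Condition_Discretized} together with the corollaries of Theorem~\ref{Thm: Bdry_Condition_General} is equivalent to every $\Si \in \SiCo$ being an invariant simplex, and hence to the underlying polytope being an RFIS. The base case is exactly the hypothesis $\sum_{\Si \in \SiCoSet[0]} \NTP(\Si) = 0$, so it suffices to show that each elementary operation preserves this quantity being zero.

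For the perturbation step the decisive observation is that $\NTP(\Si) \geq 0$ always, so $\sum_{\Si \in \SiCoII} \NTP(\Si) \geq 0$ for the perturbed complex. Consequently the first disjunct of the acceptance rule~\eqref{Eq: Stop_Condition}, which demands a \emph{strict} decrease below the current total $\sum_{\Si \in \SiCo}\NTP(\Si) = 0$, can never be satisfied. Hence a perturbation $\H_j$ is accepted only through the second disjunct $\sum_{\Si \in \SiCoII}\NTP(\Si) = 0$, in which case $\SiCoII$ again triangulates an RFIS; and if it is rejected the complex is left unchanged, so the certificate persists trivially. Since only the closed star $\star_{\SiCo}(v_j)$ is affected, evaluating the rule on those simplices (as the algorithm does) is enough, and the argument is unaffected. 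Iterating over the finitely many perturbations of the \textbf{while} loop, which terminates by Theorem~\ref{Thm: Convergence}, keeps the total at zero.

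For the subdivision step I would avoid re-examining the discrete test altogether and instead invoke the two structural properties of simplex subdivision recorded in Section~\ref{SS: Computational_Topology}: $|\Bary(\SiCo)| = |\SiCo|$ and the preservation of the normal direction of each simplex under subdivision. The first shows that the underlying polytope $\P$ with $\bdry \P = |\SiCo|$ is literally unchanged, and the second shows that the outward normal at every boundary point is unchanged; therefore the \InvC~$\langle F(x), \Normal{\Si} \rangle \leq 0$ holds at a given $x \in \bdry\P$ for a subdivided simplex exactly when it held for its parent. Since being an RFIS is an intrinsic property of the set $\P$ rather than of its triangulation, $\Bary(\SiCo)$ triangulates the same RFIS.

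The step I expect to be the main obstacle is precisely this reconciliation of the conservative, discrete certificate with genuine invariance across subdivision: a finer triangulation shrinks the diameter $r$ and introduces new test points, so the margin $-r\ss_m\ell$ in the \BdryC~changes, and it is not immediate that $\NTP = 0$ for a parent forces $\NTP = 0$ for each child, since a boundary point where the dynamics is exactly tangent ($\langle F(x), \Normal{\Si}\rangle = 0$) can make the strict margin fail. The clean way around this, which I would adopt, is to phrase the subdivision conclusion in terms of the underlying set, which subdivision does not touch, rather than in terms of $\NTP$, so that the discrete test does real work only in the perturbation step, where the nonnegativity of $\NTP$ makes the acceptance rule collapse to ``remain an RFIS.'' Combining the two preserved operations by induction then yields that $\SiCoSet[k]$ is an RFIS for all $k$.
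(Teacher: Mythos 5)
Your route is the one the paper intends: the corollary is stated there without its own proof, as an immediate consequence of the acceptance rule~\eqref{Eq: Stop_Condition} and Theorem~\ref{Thm: Convergence}, and your central observation --- that $\NTP \geq 0$ forces any \emph{accepted} perturbation of a complex with $\sum_{\Si\in\SiCo}\NTP(\Si)=0$ to again satisfy $\sum_{\Si\in\SiCoII}\NTP(\Si)=0$ --- is exactly the intended mechanism, with the locality to $\star_{\SiCo}(v_j)$ and the preservation of $|\SiCo|$ under $\Bary$ handled correctly.

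However, the obstacle you flag at the end is a genuine gap that your proposal does not close; it only relocates it. Your induction runs on the certificate $\sum\NTP=0$, but the corollary's hypothesis is only that $\SiCoSet[0]$ \emph{is an RFIS}, and Theorem~\ref{Thm: Bdry_Condition_Discretized} is a one-way implication: $\NTP(\Si)=0$ is sufficient for invariance, not necessary (a tangency point with $\langle F(x),\Normal{\Si}\rangle = 0$ defeats the strict margin $-r\ss_m\ell$ for every $m \leq m_{\max}$). So your base case is strictly stronger than what is assumed. The same mismatch recurs after each subdivision: your set-level argument shows $\Bary(\SiCo)$ triangulates the same RFIS, but the next round of perturbations again needs the \emph{certificate} for the current complex --- if $\sum\NTP > 0$ on the subdivided complex, rule~\eqref{Eq: Stop_Condition} can accept a strictly-decreasing-but-nonzero perturbation, and nothing then guarantees the perturbed set is still invariant. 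Closing this requires either (a) reading the hypothesis as ``$\SiCoSet[0]$ passes the \BdryC~test'' and additionally proving that $\NTP(\Si)=0$ for a parent forces $\NTP(\Si')=0$ for its barycentric children --- which does not follow from Theorem~\ref{Thm: Lipschitz_Means_Finite} alone, since the parent's certificate only yields $\langle F(x),\Normal{\Si}\rangle \leq 0$ on all of $\Si$, not the strict margin at the new, finer test points --- or (b) accepting the paper's own unproved assertion in Section~\ref{SS: RFIS Computation Algorithm} that any polytopic RFIS satisfies $\NTP(\Si)=0$, which is the converse of Theorem~\ref{Thm: Bdry_Condition_Discretized} and is precisely the point at issue. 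The gap is arguably inherited from the paper, but your write-up should state explicitly that it establishes the corollary only under the stronger, certificate-level reading of the hypothesis.
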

Theorem~\ref{Thm: Convergence} suggests that if $t_{\max}$ is sufficiently large, the sequence $(\SiCoSet[k])$ will converge either to $\mRPI$ or $\MRPI$ respectively, depending on whether $\decay < 1$ or $\decay > 1$. However, the algorithm is greedy and may get stuck at a local minima if the deformations do not significantly change the volume enclosed by $\SiCoSet[k]$. We simulate this for systems with known $\mRPI$ in two and three dimensions, and observe that when $\decay < 1$, a good choice of initial polytope results in fast convergence to a polytopic approximation of $\mRPI$, with $t_{\max}$ as less as 6.

It is worth mentioning that the entire algorithm can be implemented knowing only $\V{\Si} \ \forall \Si \in \SiCo$. The quantities $\Normal{\Si}$, $\Vol(\Si)$ and $\TP_m$ are directly computed using the vertex coordinates, and each vertex map replaces a row of $L_{\Si}$ for each $\Si$ that shares the vertex. Further, simplex subdivisioning simply replaces a given $L_{\Si}$ with multiple such matrices, where the rows (vertices) are computed as outlined in~\cite{Edelsbrunner_Book_2010}.

\section{Simulations} \label{S: Simulations}


This section presents the results of applying Algorithm~\ref{Alg: RFIS_Computation} to various nonlinear dynamical systems. In all the figures, blue arrows and red curves represent the vector field and system trajectories respectively. The computation time for each iteration of the algorithm, and the volume enclosed by the resulting simplicial complex is presented in Table~\ref{tab: Computation}. Section~\ref{SS: 2D} presents the simulation results for two-dimensional systems, and considers systems with and without additive disturbances. Further, Section~\ref{SS: 3D} presents the simulation results for three-dimensional systems. The algorithm terminates in an RFIS for each presented case, so that $\NTP(\Si) = 0$ for all the simplices in the covering of the final polytope boundary. All computations were performed on a Dell Workstation with an Intel Xeon processor @ 3.30 GHz and 16 GB of RAM.


\begin{table*}
   \centering
   \caption{Computation Time and Enclosed Volume after Each Iteration for Various Systems}
    \begin{tabular}{|r||r|r||r|r||r|r||r|r||r|r||r|r|}
    \hline
    \multicolumn{1}{|c||}{\multirow{2}{*}{Iteration} } & \multicolumn{2}{c||}{Van der Pol (VdP)} & \multicolumn{2}{c||}{Fitzhugh-Nagumo} & \multicolumn{2}{c||}{Curve Tracking} & \multicolumn{2}{c||}{Reversed VdP} & \multicolumn{2}{c||}{Phytoplankton Growth} &    \multicolumn{2}{c|}{Thomas' Attractor} \\
    \cline{2-13}
    & \multicolumn{1}{c|}{Time [s]} & \multicolumn{1}{c||}{Volume} & \multicolumn{1}{c|}{Time [s]} & \multicolumn{1}{c||}{Volume} & \multicolumn{1}{c|}{Time [s]} & \multicolumn{1}{c||}{Volume} & \multicolumn{1}{c|}{Time [s]} & \multicolumn{1}{c||}{Volume} & \multicolumn{1}{c|}{Time [s]} & \multicolumn{1}{c||}{Volume} & \multicolumn{1}{c|}{Time [s]} & \multicolumn{1}{c|}{Volume} \\
    \hline \hline
    0 &       & 59.051 &       & 33     &        & 0.0431 &       & 0.3717 &       & 0.5827 &       & 8000   \\
    \hline
    1 & 0.152 & 54.851 & 0.121 & 11.811 & 0.627  & 0.0431 & 6.55  & 0.1579 & 1.81  & 0.0028 & 1.404 & 371.22 \\
    \hline
    2 & 0.281 & 31.104 & 0.122 & 10.051 & 0.679  & 0.0278 & 12.6  & 0.0681 & 8.65  & 0.0014 & 6.661 & 303.4  \\
    \hline
    3 & 0.787 & 25.102 & 0.341 & 7.4795 & 1.319  & 0.0228 & 27.9  & 0.0373 & 271   & 0.0009 & 204.5 & 163.04 \\
    \hline
    4 & 2.752 & 17.738 & 1.051 & 7.1519 & 3.725  & 0.0205 & 65.9  & 0.0347 &       &        &       &        \\
    \hline
    5 & 10.37 & 16.961 & 3.739 & 6.3052 & 12.07  & 0.0199 & 174   & 0.0347 &       &        &       &        \\
    \hline
    6 & 41.62 & 16.771 & 14.18 & 6.1651 & 42.67  & 0.0198 &       &        &       &        &       &        \\
    \hline
    7   &        &       & 56.02 & 6.1651 & 160.99 & 0.0198 &       &        &     &        &       &        \\
    \hline
    \end{tabular}
  \label{tab: Computation}
\end{table*}

\subsection{Two-State systems} \label{SS: 2D}
\subsubsection{Van der Pol Oscillator} \label{SS: Van Der Pol Oscillator}
The Van der Pol (VdP) oscillator is modeled by the system of equations:
\begin{align}
    \dot{x}_1 &= x_2,\\
    \dot{x}_2 &= \mu (1 - x_1^2) x_2 - x_1, 
\end{align}
where $x_1,x_2 : [0, \infty) \rightarrow \R$ are the states and the parameter $\mu$ indicates the strength of damping. The VdP oscillator has a stable limit cycle, which is also its minimal RFIS.  Starting with a conservative initial polytopic invariant set with 6 vertices, $\SiCoSet[0]$, and $\o$ as the origin, the results of Algorithm~\ref{Alg: RFIS_Computation} with $\mu = 1$, $t_{\max} = 6$ are shown in Figs.~\ref{fig: VdP}~and~\ref{fig: VdP_Expand} for $\decay < 1$ and $\decay > 1$  respectively. Clearly, $\decay > 1$ corresponds to increasing the enclosed volume of the RFIS in every iteration. The more interesting application is when $\decay < 1$, since the algorithm attempts to approximate the minimal RFIS, $\mRPI$, and the result can be compared with the limit cycle of the VdP oscillator. The time taken for each iteration, along with the volume enclosed by $\SiCoSet[k] \ \forall \ 0 \leq k \leq 6$ is shown in Table~\ref{tab: Computation}. Note that $\SiCoSet[k]$ is an RFIS $\forall k$.

\begin{figure}[t]
    \centering
    \includegraphics[width = 0.49 \textwidth]{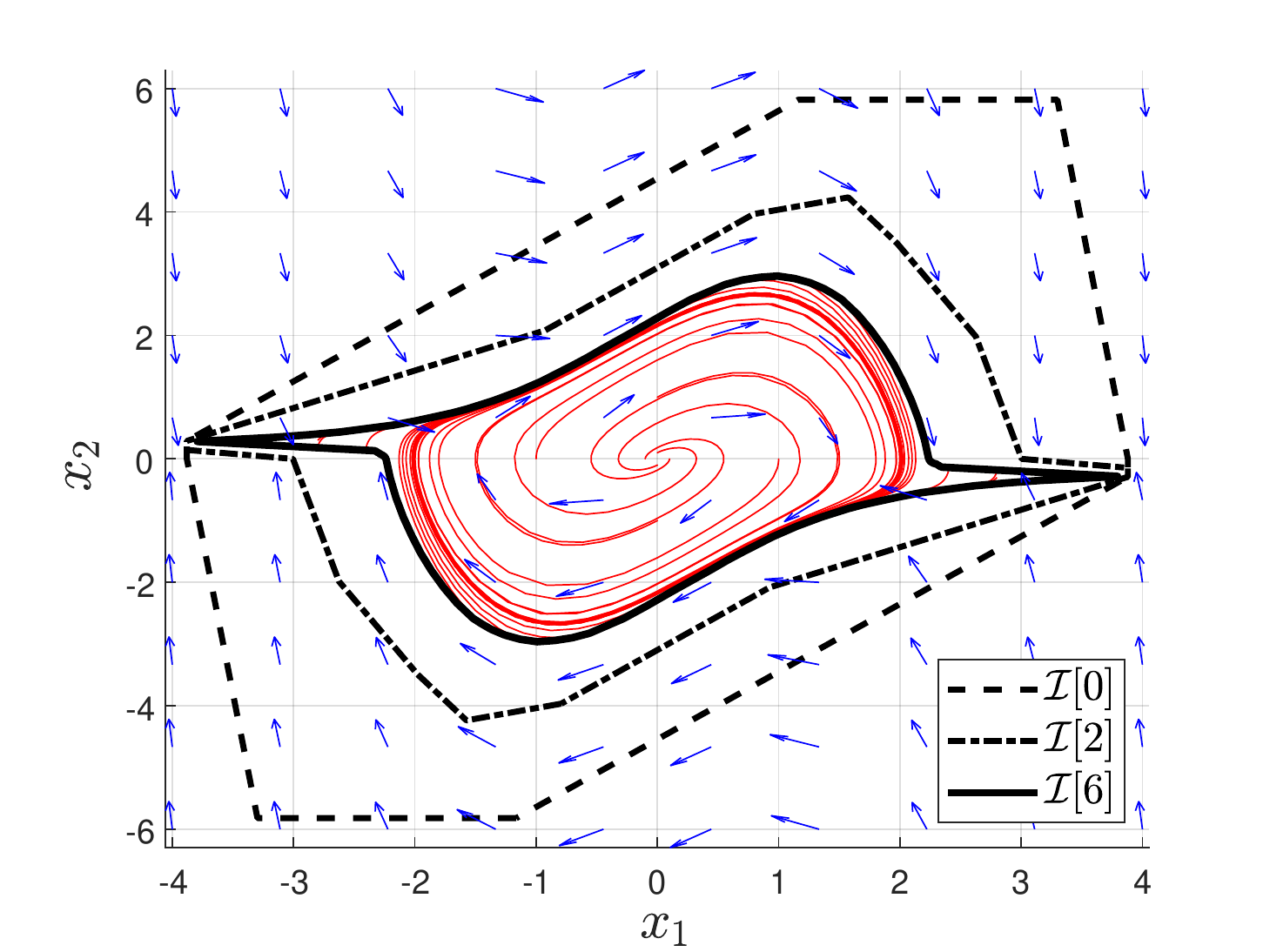}
    \caption{Van der Pol Oscillator, $\decay = 0.98$.}
    \label{fig: VdP}
\end{figure}

\begin{figure}[t]
    \centering
    \includegraphics[width = 0.49 \textwidth]{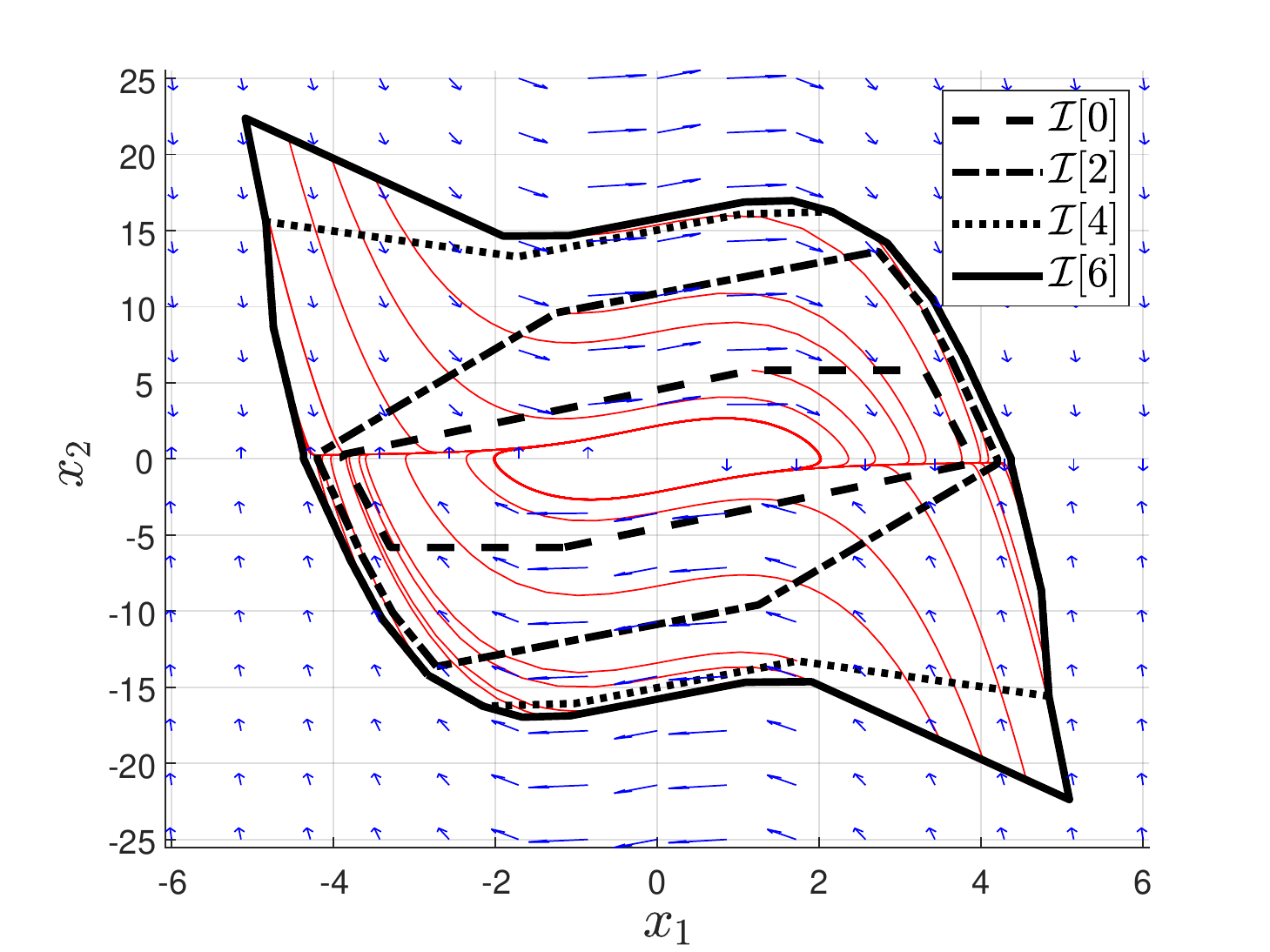}
    \caption{Van der Pol Oscillator, $\decay = 1.02$.}
    \label{fig: VdP_Expand}
\end{figure}

\subsubsection{Fitzhugh-Nagumo Neuron Model} \label{SS: Fitzhugh-Nagumo Neuron Model}
The Fitzhugh-Nagumo system models the activity of an excitable system such as a neuron. For comparison, we use the same choice of parameters as~\cite{Sassi_Automatica_2012}, for which the system is modeled as:
\begin{align}
    \dot{x}_1 &= x_1 - \frac{1}{3}x_1^3 - x_2 + \frac{7}{8},\\
    \dot{x}_2 &= 0.08 (x_1 + 0.7 - 0.8 x_2). 
\end{align}
We begin with an initial polytopic set (a convex quadrilateral) that is a conservative RFIS for the system. Setting $\decay = 0.95$, $t_{\max} = 7$ and $\o = [0 \ 1]^T$, it can be observed that Algorithm~\ref{Alg: RFIS_Computation} converges to a polytopic approximation of $\mRPI$, and there is no change in volume over the last iteration. Figure~\ref{fig: Fitzhugh} shows a subset of the sequence of RFISs obtained using the algorithm. These polytopic approximations are much tighter than previous works such as Fig.~1 in~\cite{Sassi_CDC_2014} and Fig.~4 in~\cite{Sassi_Automatica_2012}.

\begin{figure}[t]
    \centering
    \includegraphics[width = 0.49 \textwidth]{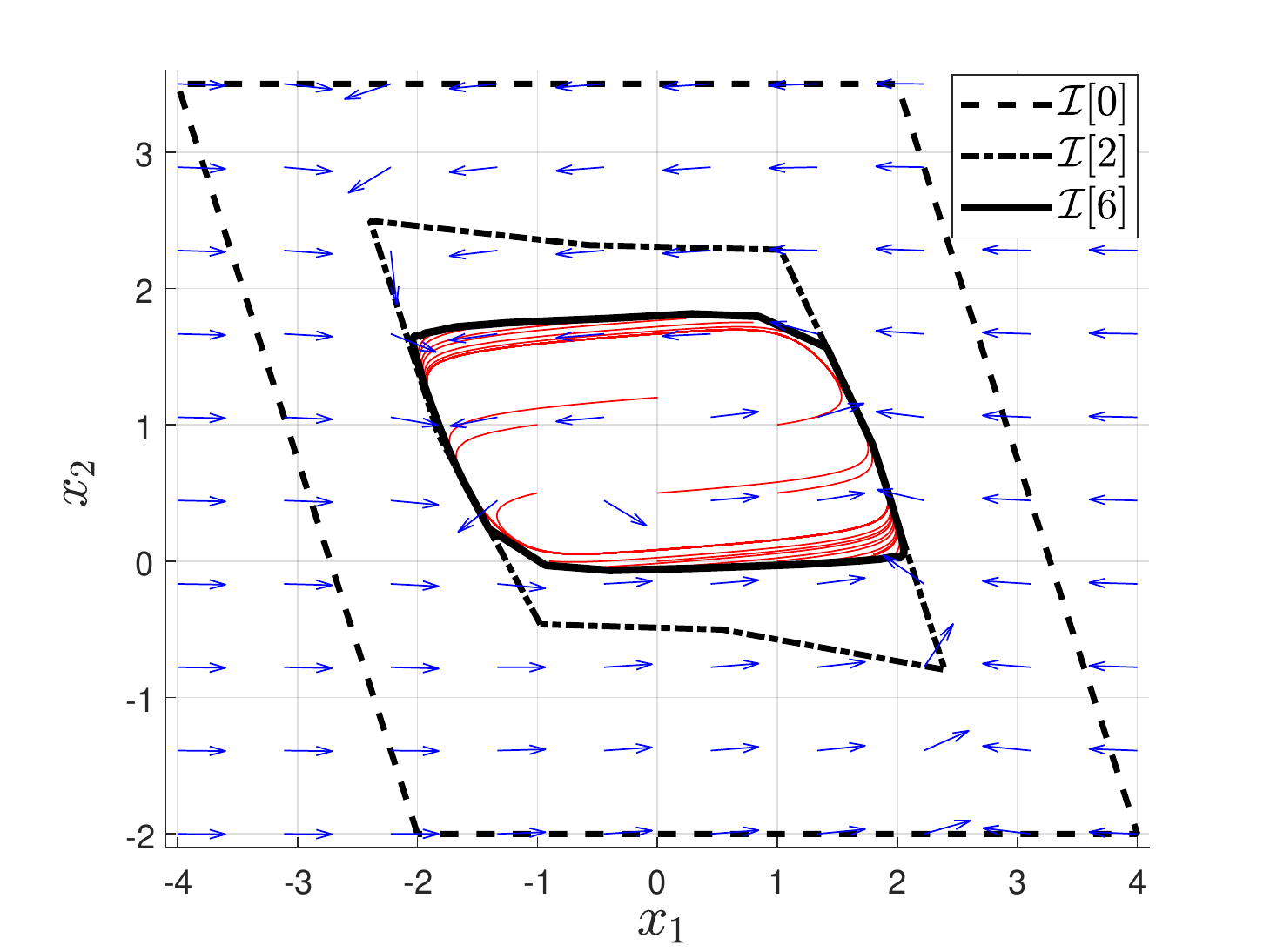}
    \caption{Fitzhugh-Nagumo Neuron Model}
    \label{fig: Fitzhugh}
\end{figure}

\subsubsection{Curve Tracking Problem} \label{SS: Curve Tracking Problem}
The curve tracking problem \cite{Malisoff_TAC_2011} is considered in the presence of disturbances, with $\noise = [\noise_1 \ \noise_2]^T : [0, \infty) \rightarrow \Noise$, with the noise space $\Noise = \{0\} \times [-1.5,1.5] \subset \R^2$. The dynamics are modeled as:
\begin{align}
    \dot{x}_1 &= -\sin(x_2) + \noise_1,\\
    \dot{x}_2 &= (x_1 - \rho)\cos(x_2) - \mu \sin(x_2) + \noise_2. 
\end{align}
The results in Fig.~\ref{fig: Curve Tracking} show the set valued map $F$ with the extreme rays obtained by using $\noise_2(t) = \pm 0.15$. The blue cones in the figure represent the set of all directions in which the trajectory may move due to the disturbance. Some trajectories starting at the boundary of the obtained RFIS are also displayed, for a disturbance function of the form $\noise_1(t) = 0$, $\noise_2(t) = 0.15 \sin(t)$, with $\rho = 1$ and $\mu = 6.42$. For comparison, the noise function and system parameters are chosen to be the same as~\cite{Mukhopadhyay_ACC_2014, Mukhopadhyay_Thesis_2014, Mukhopadhyay_AJC_2020}, where the minimal RFIS is computed using a different method. The agreement between the results confirms the validity of Algorithm~\ref{Alg: RFIS_Computation}, which converges in $6$ iterations as evident from Table~\ref{tab: Computation}. 

\begin{figure}[t]
    \centering
    \includegraphics[width = 0.49 \textwidth]{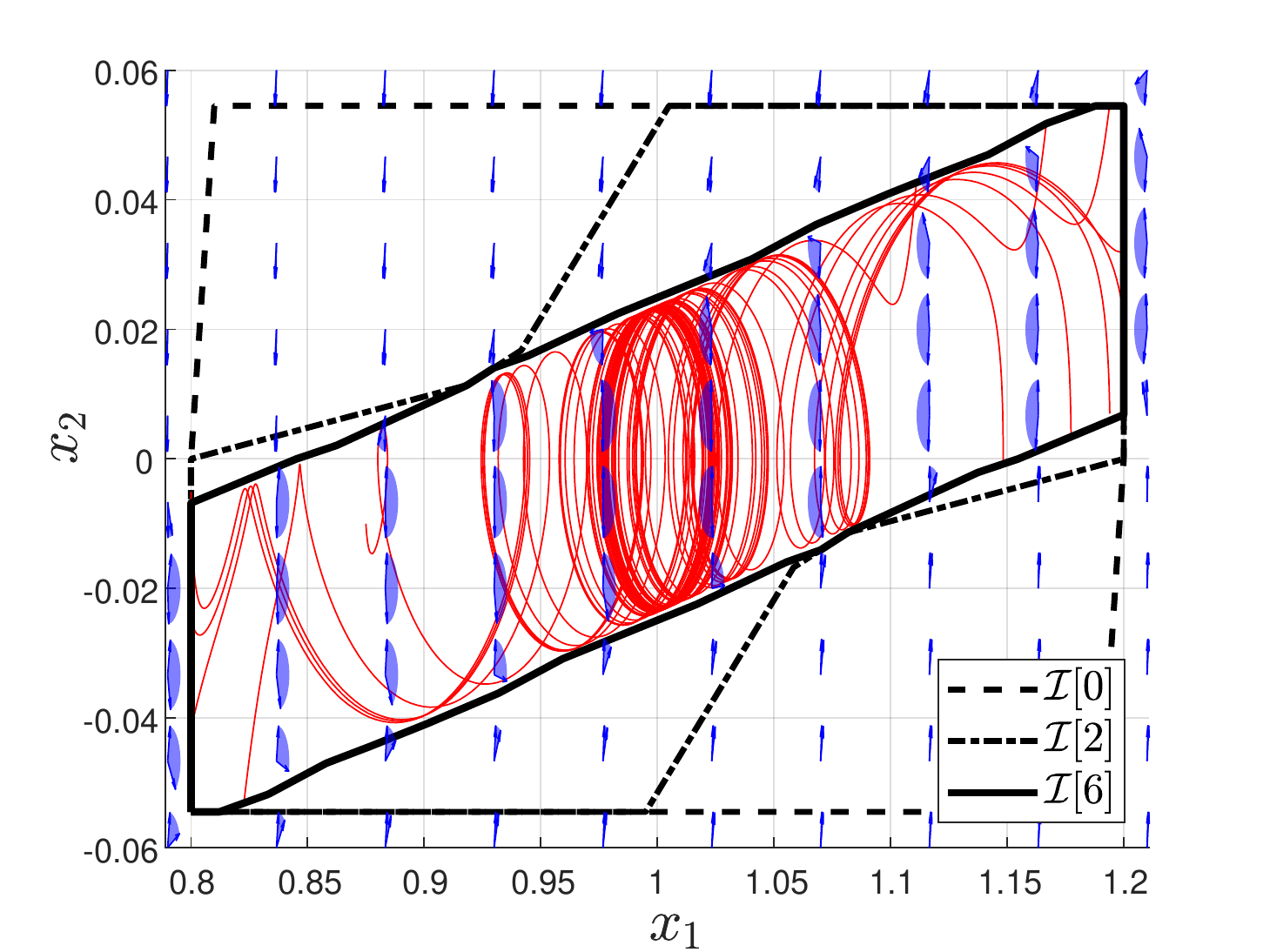}
    \caption{Curve Tracking Problem}
    \label{fig: Curve Tracking}
\end{figure}

\subsubsection{Reversed Van der Pol Oscillator} \label{SS: Reversed Van Der Pol Oscillator}
The reversed Van der Pol oscillator is modeled by:
\begin{align}
 \dot{x}_1 &= -x_2 + \noise_1,\\
 \dot{x}_2 &= x_1 - x_2 + x_2^3 + \noise_2.
\end{align}
We consider a polytopic noise space $\Noise = [-0.03,0.03] \times [-0.03,0.03]$. Figure~\ref{fig: RVdP} shows the set valued map $F$ at different points in the state space, with the blue cones representing the set of directions that the trajectory may move depending on the disturbance. The values of $\dot{x}$ obtained by using the vertices of $\Noise$, $(\pm 0.03, \pm 0.03)$ are shown as vectors, in addition to three sets of system trajectories at a fixed set of initial points on the boundary of the RFIS obtained through the system. The first two sets of system trajectories use constant disturbances $[\noise_1(t) \ \noise_2(t)]^T = [0.03 \ -0.03]^T$ and $[-0.03 \ 0.03]^T$ respectively, while the third set uses $\noise_1(t) = 0.01\sin(2t)+0.005\sin(\pi t)+0.015\sin(6.53t)$, and $\noise_2(t) = - 0.01\cos(0.2 t)+0.02\sin(5\pi t)$. Using $\o = [0 \ 0]^T$ and $\decay = 0.99$, Algorithm~\ref{Alg: RFIS_Computation} converges in $4$ iterations as evident from Table~\ref{tab: Computation}.

\begin{figure}[t]
    \centering
    \includegraphics[width = 0.49 \textwidth]{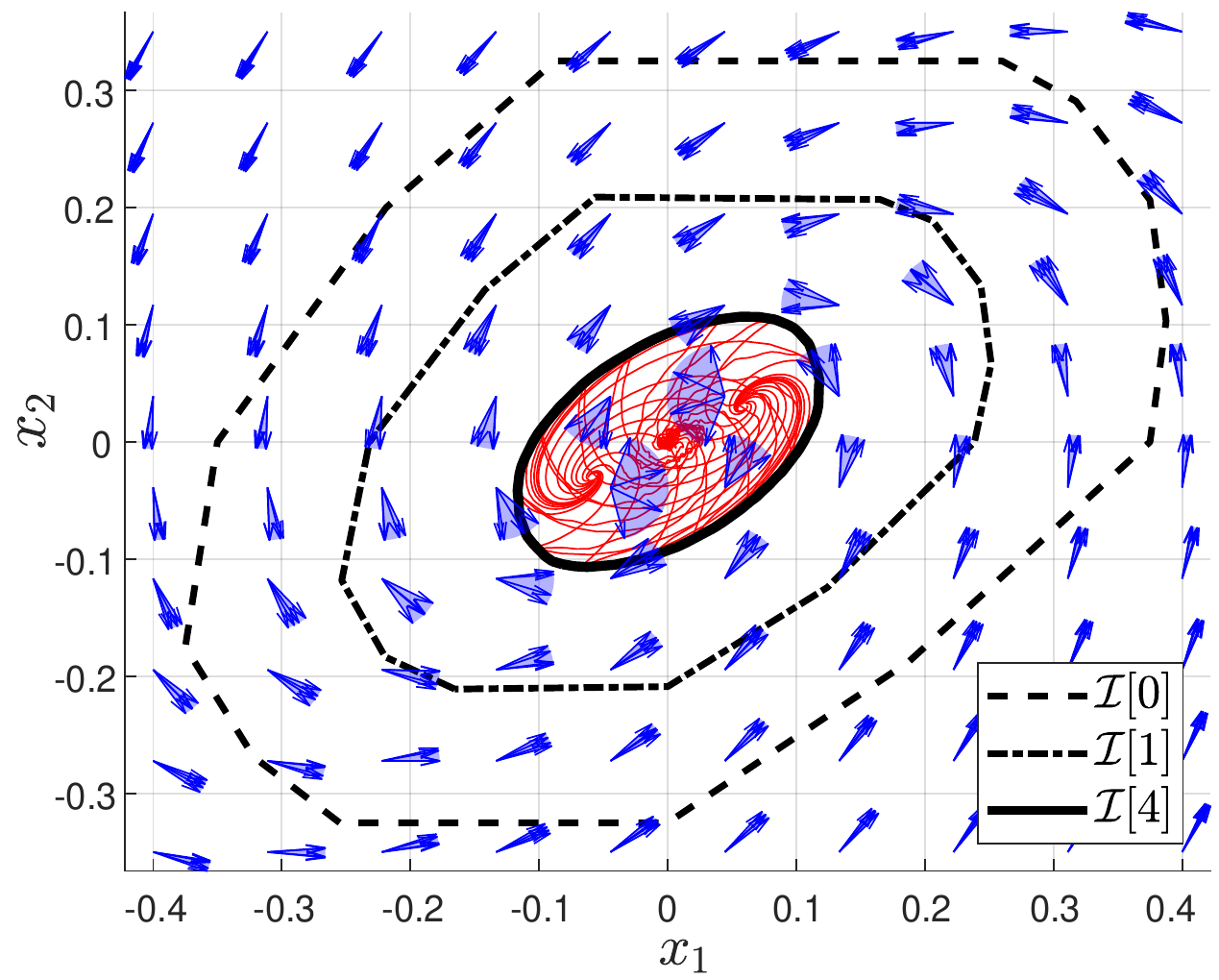}
    \caption{Reversed Van der Pol Oscillator}
    \label{fig: RVdP}
\end{figure}


\begin{figure*}[t]
	\centering
	\subfigure[$\SiCoSet \lbrack 0 \rbrack$.]
	{
		\includegraphics[width=0.315 \textwidth]{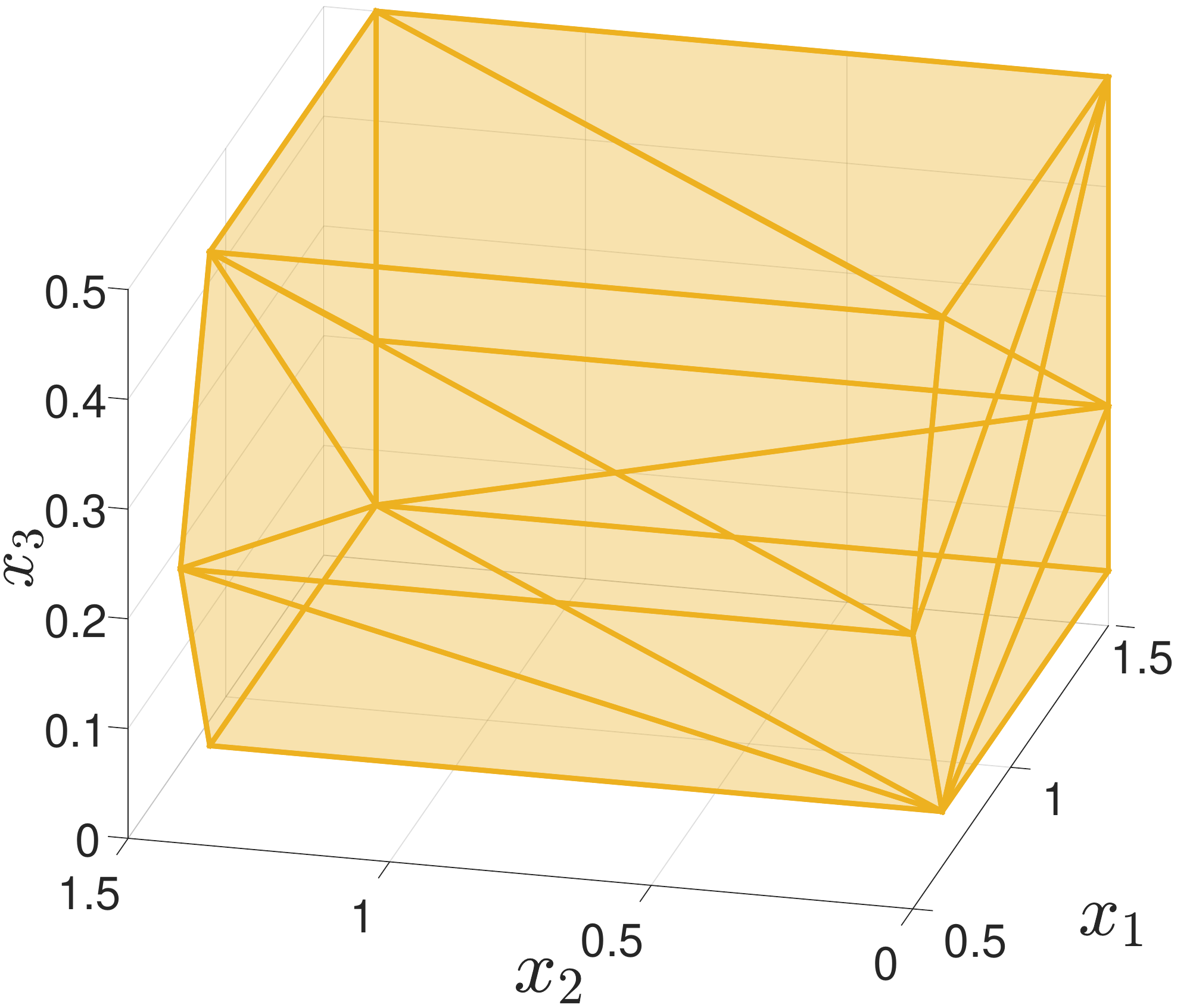}
		\label{fig: Phyto_A}
	}
	\hfil
	\subfigure[$\SiCoSet \lbrack 3 \rbrack$.]
	{
        \includegraphics[width=0.315 \textwidth]{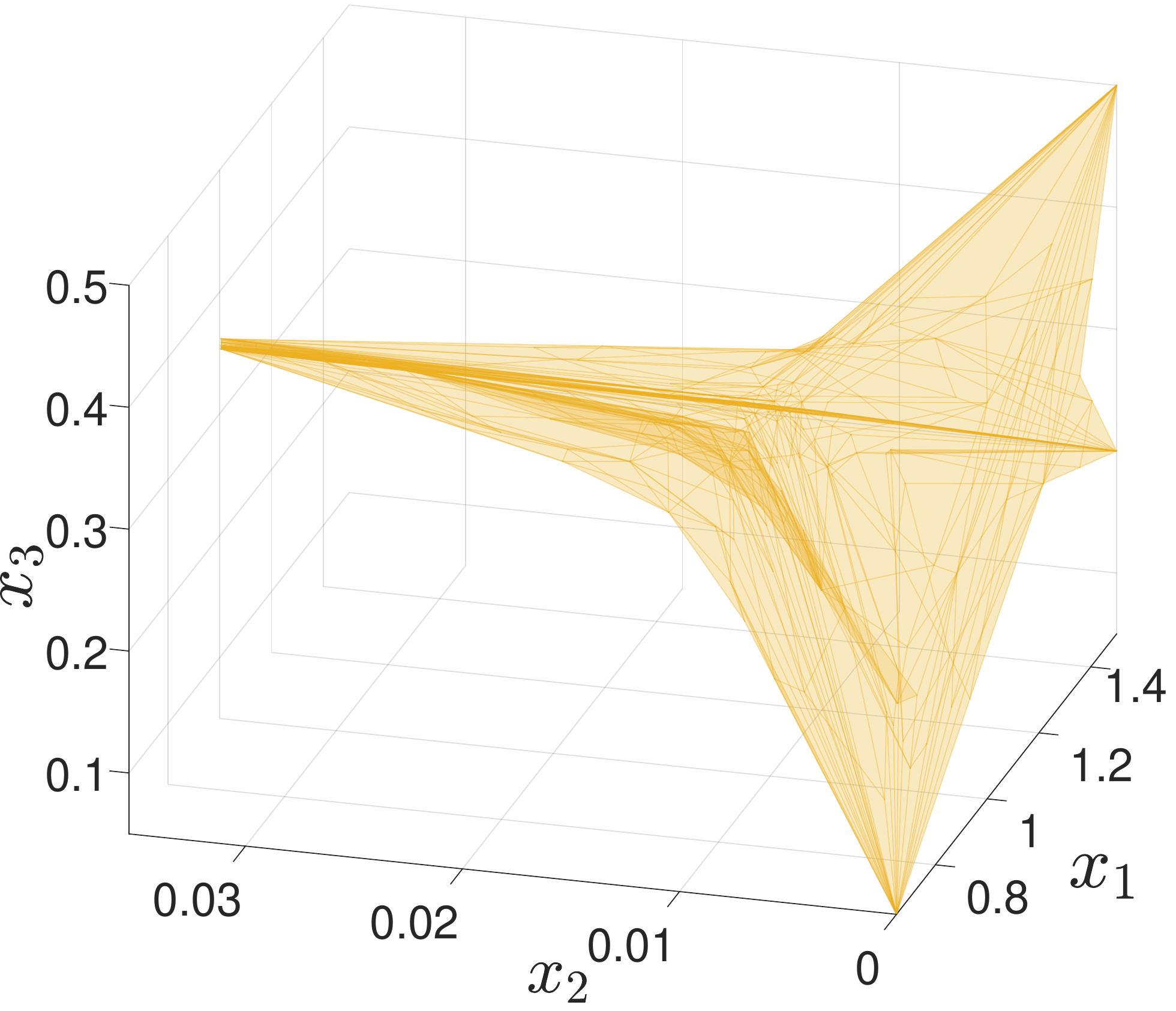}
		\label{fig: Phyto_B}
	}
	\hfil
	\subfigure[Trajectories.]
	{
        \includegraphics[width=0.315 \textwidth]{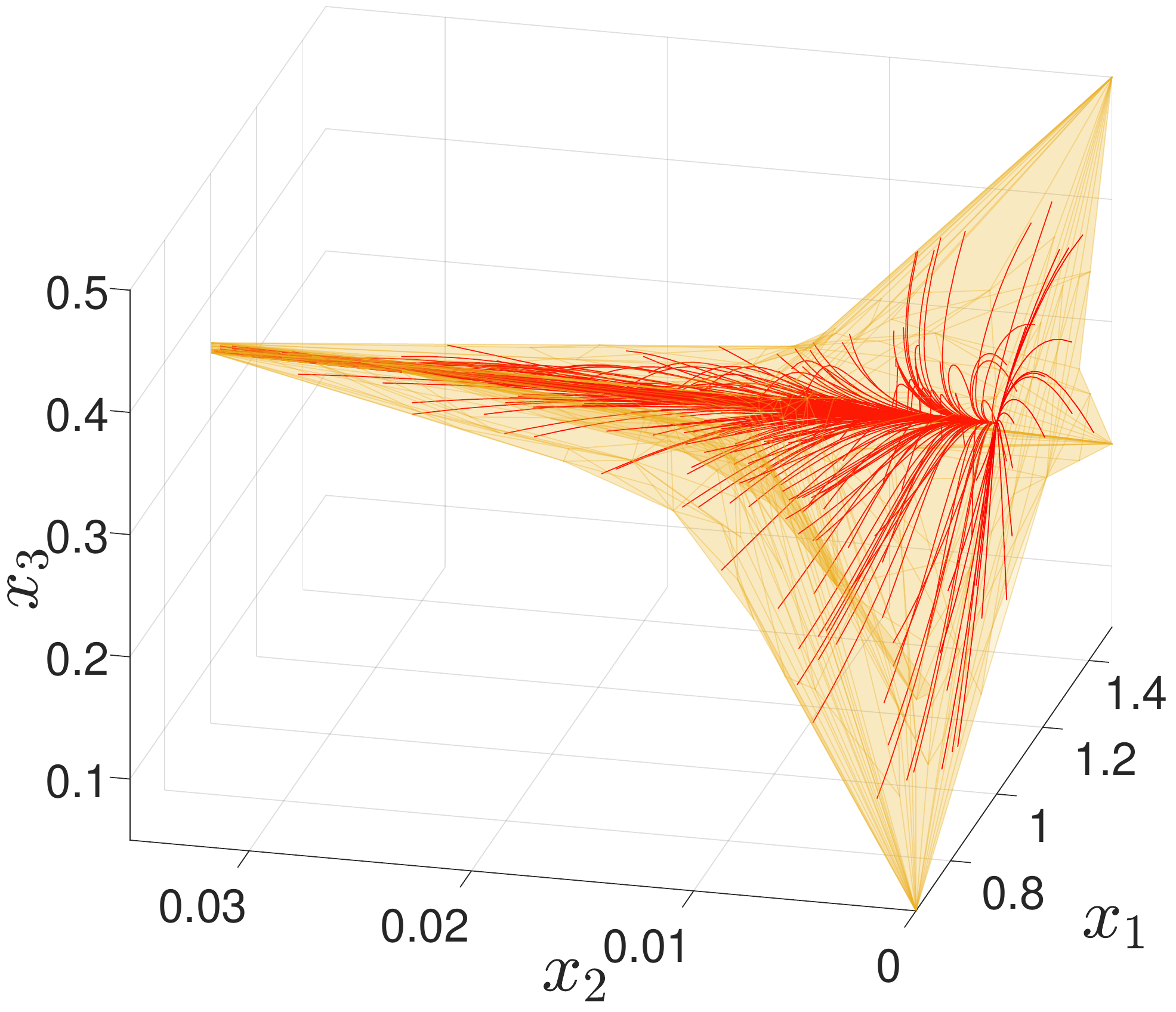}
		\label{fig: Phyto_Traj}
	}
	\caption{Phytoplankton Growth Model.}
	\label{fig: Phytoplankton}
\end{figure*}
\subsection{Three-state systems} \label{SS: 3D}
\subsubsection{Phytoplankton Growth Model} \label{SS: Phytoplankton Growth Model}
Phytoplankton growth is modeled by the dynamical system: 
\begin{align}
    \dot{x}_1 &= 1 - x_1 - \frac{1}{4}x_1 x_2, \\
    \dot{x}_2 &= (2x_3 - 1) x_2,  \\
    \dot{x}_3 &= \frac{x_1}{4} - 2 x_3^2. 
\end{align}
The authors in~\cite{Sassi_Automatica_2012} use optimization methods to obtain a conservative polytopic FIS for this model (see Fig.~5 in~\cite{Sassi_Automatica_2012}). Starting with an approximate recreation of this polytopic RFIS as $\SiCoSet[0]$, and setting $\o = [0.9969 \ 0.01 \ 0.3571]^T$, $\decay = 0.9$, we obtain a much tighter approximation of $\mRPI$ for the system. In just $3$ iterations of Algorithm~\ref{Alg: RFIS_Computation}, the polytope is deformed into a new polytopic RFIS, $\SiCoSet[3]$ that is a much smaller RFIS than previous work, and encloses just $0.15\%$ of the volume that $\SiCoSet[0]$ did. Figure~\ref{fig: Phytoplankton} shows the invariant sets $\SiCoSet[0]$ and $\SiCoSet[3]$, and Fig.~\ref{fig: Phyto_Traj} shows some system trajectories starting at the surface of the obtained RFIS. As expected, the trajectories move inward and never escape the RFIS.

\subsubsection{Thomas' Cyclically Symmetric Attractor} \label{SS: Thomas' Cyclically Symmetric Attractor}

\begin{figure*}[t]
	\centering
	\subfigure[$\SiCoSet \lbrack 0 \rbrack$.]
	{
		\includegraphics[width=0.315 \textwidth]{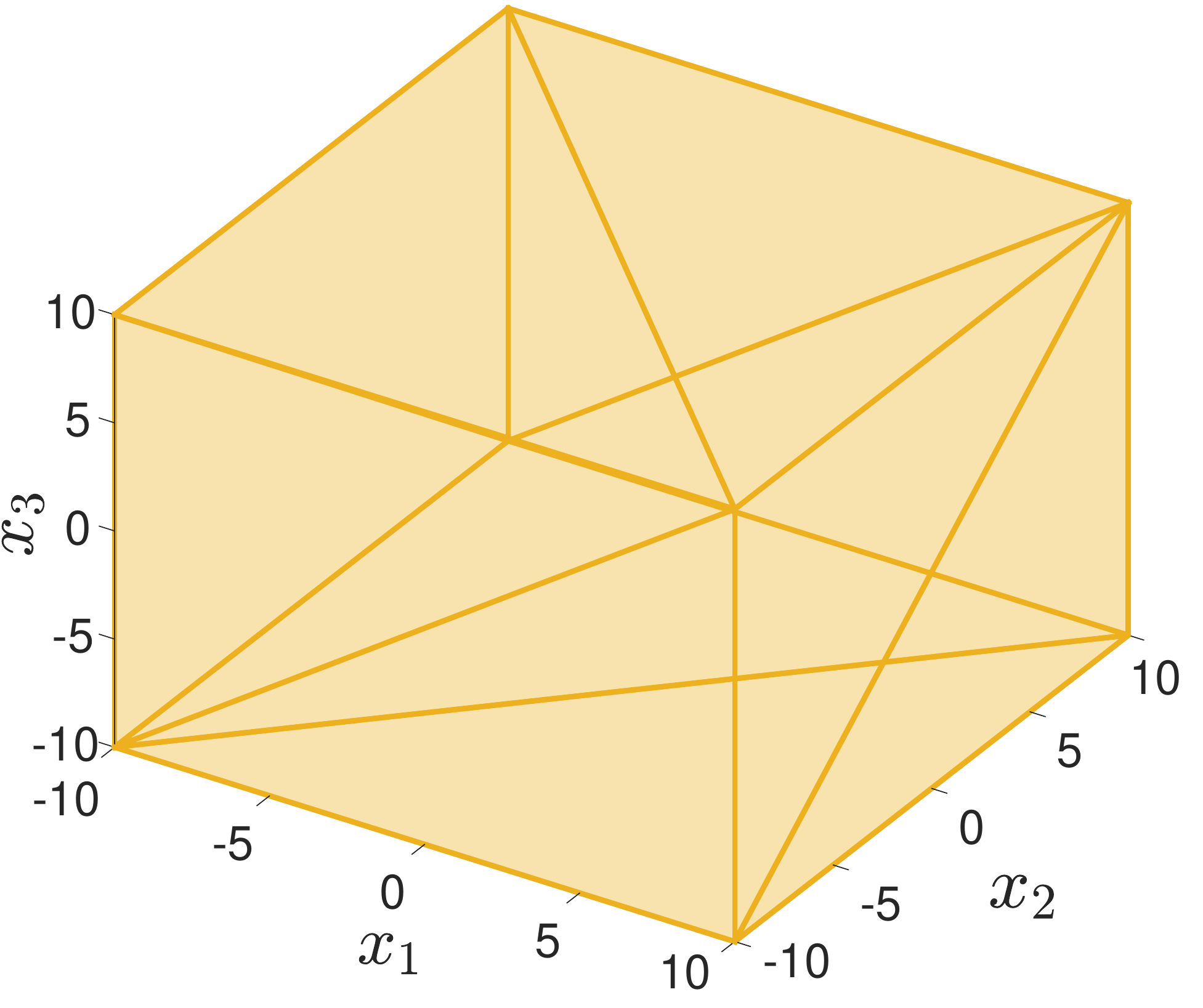}
		\label{fig: Thomas_A}
	}
	\hfil
	\subfigure[$\SiCoSet \lbrack 3 \rbrack$.]
	{
        \includegraphics[width=0.315 \textwidth]{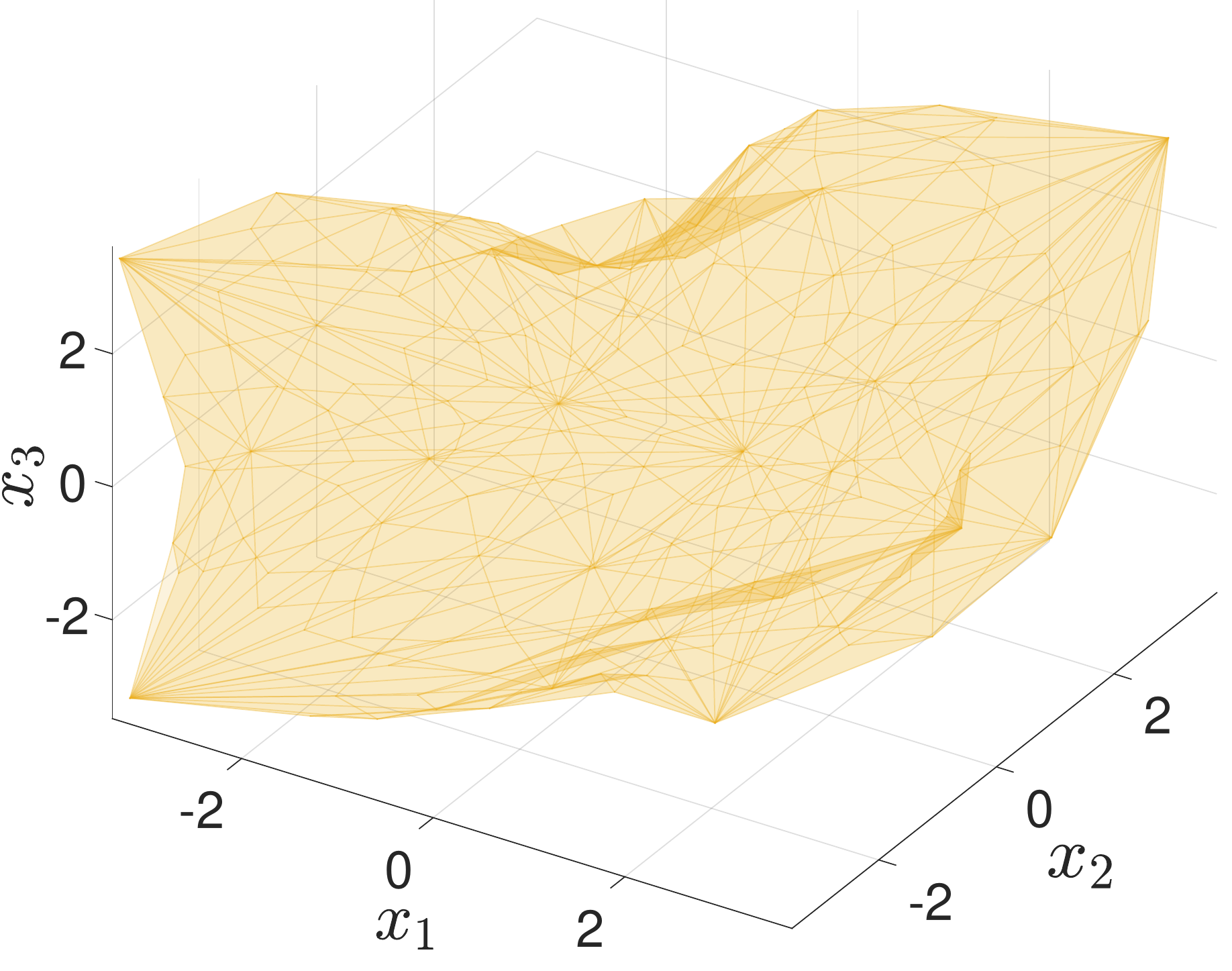}
		\label{fig: Thomas_B}
	}
	\hfil
	\subfigure[Trajectories.]
	{
        \includegraphics[width=0.315 \textwidth]{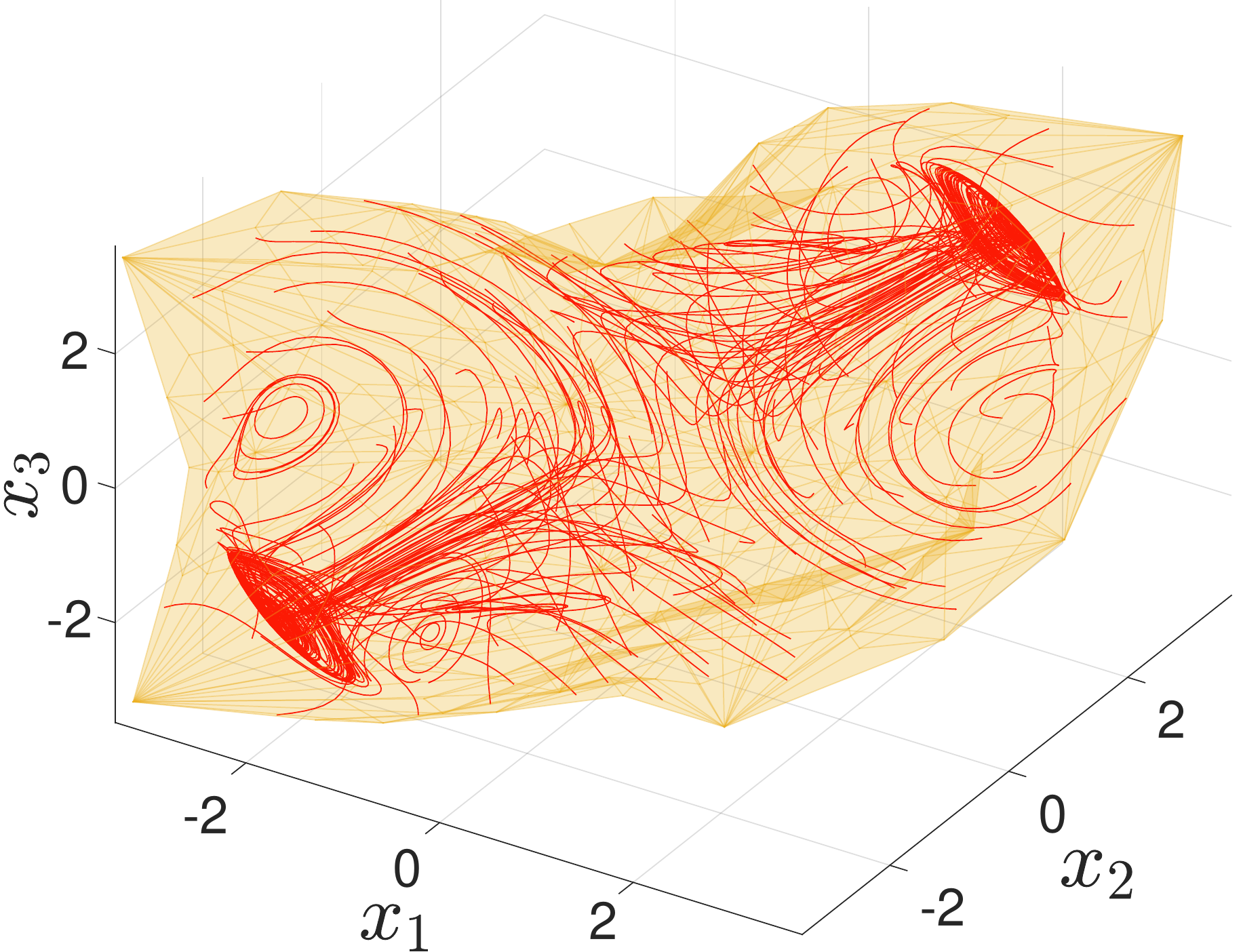}
		\label{fig: Thomas_Traj}
	}
	\caption{Thomas' Cyclically Symmetric Attractor, $b = 0.3$.}
	\label{fig: Thomas Attractor}
\end{figure*}

\begin{figure*}[t]
	\centering
	\subfigure[$\SiCoSet \lbrack 0 \rbrack$.]
	{
		\includegraphics[width=0.315 \textwidth]{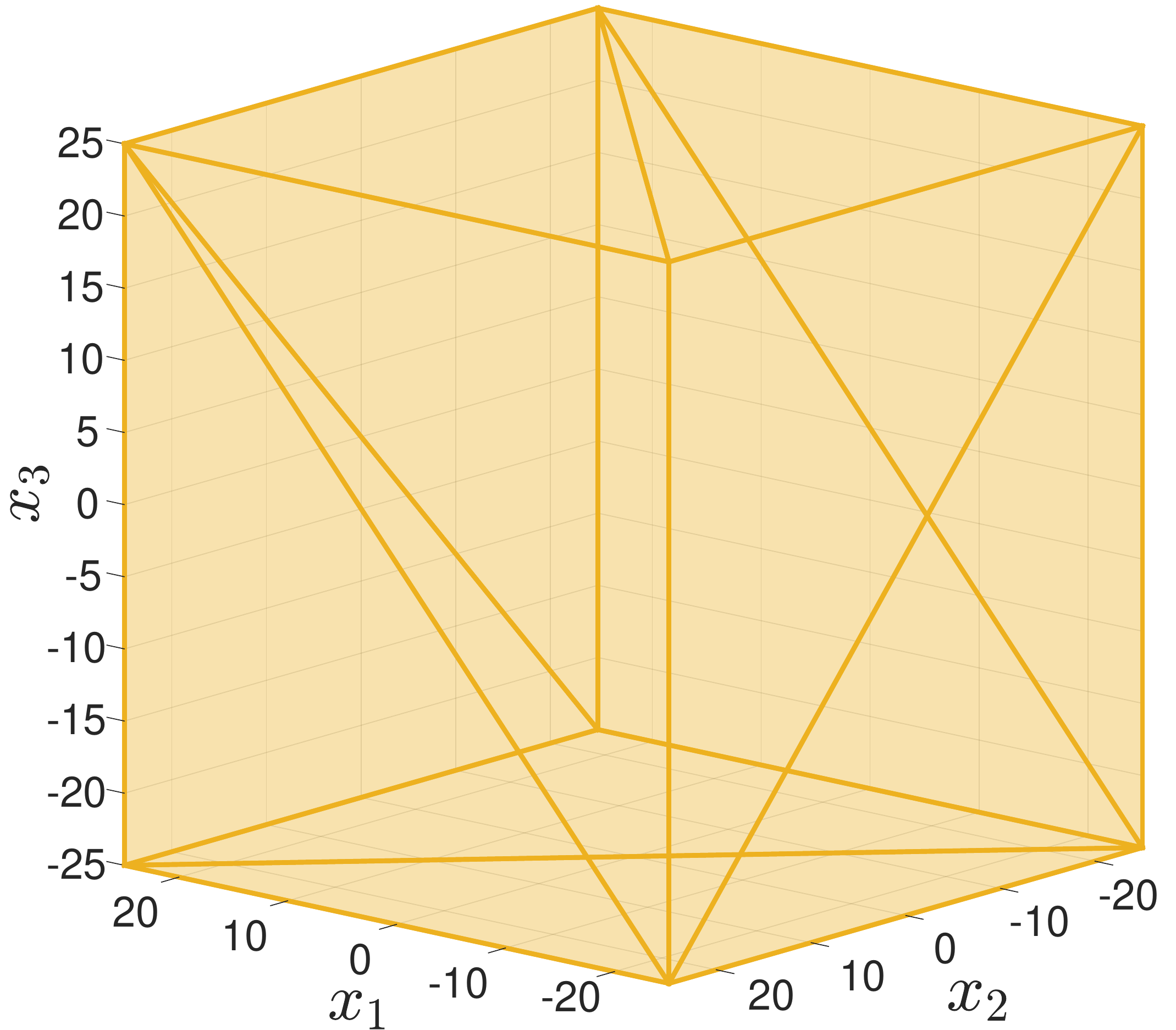}
		\label{fig: Thomas_Chaotic_A}
	}
	\hfil
	\subfigure[$\SiCoSet \lbrack 4 \rbrack$.]
	{
        \includegraphics[width=0.315 \textwidth]{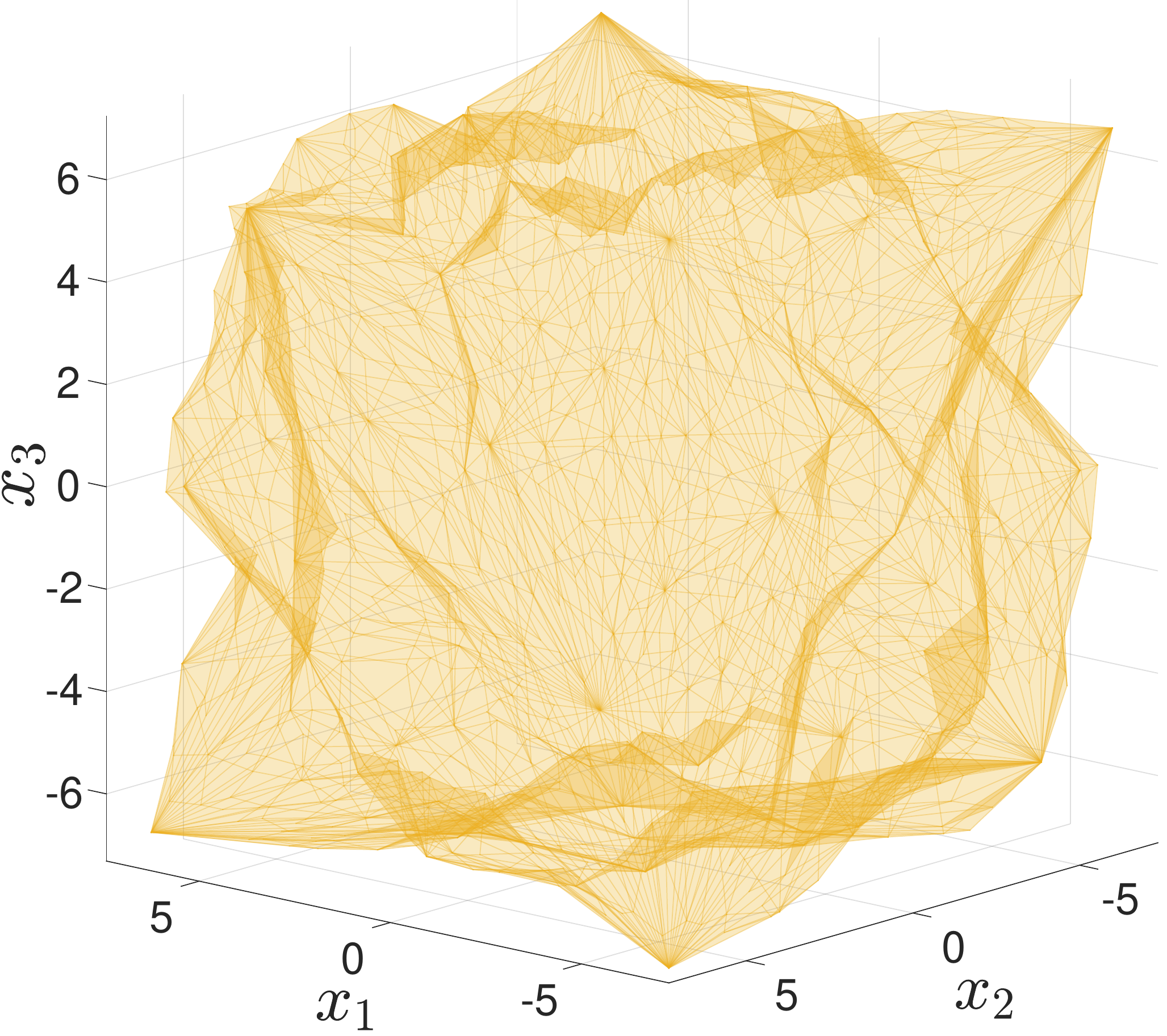}
		\label{fig: Thomas_Chaotic_B}
	}
	\hfil
	\subfigure[Trajectories.]
	{
        \includegraphics[width=0.315 \textwidth]{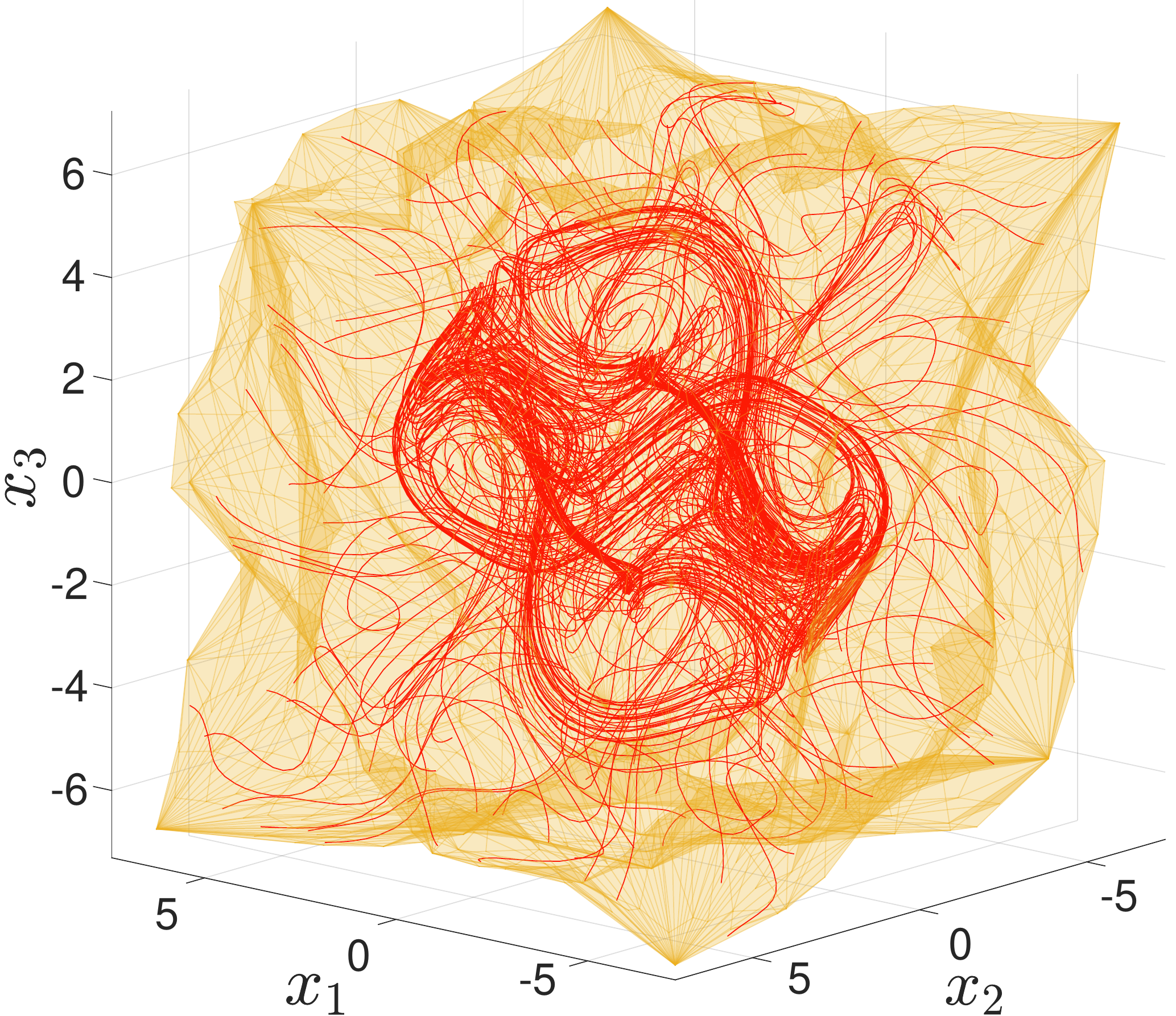}
		\label{fig: Thomas_Chaotic_Traj}
	}
	\caption{Thomas' Cyclically Symmetric Attractor, $b = 0.14$.}
	\label{fig: Thomas Attractor Chaotic}
\end{figure*}

Thomas' Attractor is described by the dynamical system:
\begin{align}
    \dot{x}_1 &= \sin(x_2) - b x_1, \\
    \dot{x}_2 &= \sin(x_3) - b x_2,  \\
    \dot{x}_3 &= \sin(x_1) - b x_3.
\end{align}

Here, $b$ is a constant and acts as a bifurcation parameter. We consider $b = 0.3$, for which the system has two stable attractive limit cycles. Setting $\SiCoSet[0]$ as a triangulation of the boundary of the cube with vertices $(\pm 10, \pm 10, \pm 10)$, the algorithm converges to an FIS in three iterations. The sets $\SiCoSet[0]$ and $\SiCoSet[3]$ are shown in Fig.~\ref{fig: Thomas Attractor}, with Fig.~\ref{fig: Thomas_Traj} showing some system trajectories.  Finally, when $b$ is reduced to below $\approx 0.202816$, the  limit cycle undergoes a period doubling cascade and becomes chaotic. We apply Algorithm~\ref{Alg: RFIS_Computation} on the attractor when $b = 0.14$, and summarize the results in Fig.~\ref{fig: Thomas Attractor Chaotic}, where $\SiCoSet[0]$, $\SiCoSet[4]$ and some system trajectories are shown.

\section{Future Work and Conclusion} \label{S: Conclusion}

\subsection{Future Work}
Our proposed approach opens new avenues for future research. In particular, a comprehensive study of vertex maps and the effect of their induced homeomorphisms on faster convergence of the algorithm would be interesting. For instance, vertex maps that map multiple vertices to different points, and adaptive adjustment of the parameters such as $\decay$ and $\o$, based on the behavior of the vector field at the boundary of the polytope is expected to speed up the algorithm. Further, if the homeomorphism is represented as a sequence of vertex maps as in our work, the algorithm may be further optimized by studying the ideal sequence in which to apply the maps, since homeomorphisms in general do not commute. The effect of simplex subdivisioning and the interplay between perturbing a vertex (through simplicial maps) and creating a vertex (through simplex subdivisioning) is another way to optimize the algorithm. Other measures of how close a simplex is to being invariant, besides $\NTP(\Si)$, may also help aid this process. Thus, our algorithm can in fact be extended to a family of computational algorithms that search for an RFIS using (possibly variants of) the~\BdryC. All such algorithms may be used on any Lipschitz continuous nonlinear dynamical system with bounded additive disturbances. Similar approaches may also be extended to compute control invariant sets for nonlinear systems.

\subsection{Conclusion}
This paper developed an algorithm to generate a sequence of Robust Forward Invariant Sets of different sizes. First, we showed that if a nonlinear system's dynamics are Lipschitz continuous, then testing for invariance of a given (polytopic) set can be easily done by checking for a~\BdryC~on finitely many points, amounting to matrix-vector multiplication. Based on this, we developed an algorithm to deform a polytope into an RFIS, through a sequence of homeomorphisms induced by vertex maps on a simplicial complex that triangulates the boundary of the polytope. The geometric nature of the approach allows the algorithm to be used on any Lipschitz continuous nonlinear dynamical system in the presence of additive bounded disturbances. Application of the algorithm to a variety of nonlinear systems showed fast convergence and resulted in a sequence of RFIS of different sizes, that converged to approximate the minimal RFIS of the system.

\appendix

\subsection{Proof of Lemma~\ref{Lem: Lattice_Is_A_Subset_Of_Simplex}}
    We show that the linear transformation $T(\M)$ from the standard simplex $\Lambda$ to the simplex of interest $\Si$ ensures that the reference test points on $\Lambda$ map to points on $\Si$.
    Consider $T(\M) = \M L_{\Si}$, and let $x \in \TP_m$, so that $x$ is a row of $T(\M)$ and thus a row combination of $\M$ and $L_{\Si}$. But the rows of $L_{\Si}$ are just vertex coordinates of $\Si$, and each row of $\M$ is a lattice test point on $\Lambda$ by construction. Further, $y \in \Lambda \implies \sum_{i = 1}^{n}y_i = 1$. The matrix multiplication can be carried out as follows.
    \begin{align}
        M_m L_{\Si} &= \begin{bmatrix}
        M_{1,1} & M_{1,2} & \cdots & M_{1,n} \\
        M_{2,1} & M_{2,2} & \cdots & M_{2,n} \\
        \vdots & \vdots & \ddots & \vdots \\
        M_{\Lm,1} & M_{\Lm,2} & \cdots & M_{\Lm,n}
        \end{bmatrix}
        \begin{bmatrix}
            v_0^T \\ v_1^T \\ \vdots \\ v_{n-1}^T
        \end{bmatrix} \\
        &= \begin{bmatrix}
            \sum_{j= 1}^{n} M_{1,j}v_{j-1}^T \\
            \sum_{j= 1}^{n} M_{2,j}v_{j-1}^T \\
            \vdots \\
            \sum_{j= 1}^{n} M_{\Lm,j}v_{j-1}^T
        \end{bmatrix}
    \end{align}
    Since $x$ is a row of this matrix, and since $0 \leq M_{i,j} \leq 1 \ \forall i,j$, and $\sum_{j = 1}^{n}M_{i,j} = 1 \ \forall i$, it follows that $x$ is a convex combination of the vertices of $\Si$, and hence, $x \in \Si$. Since this is true for all $x \in \TP_m$, we conclude that $\TP_m \subset \Si$.

\subsection{Proof of Thereom~\ref{Thm: Bdry_Condition_Discretized}}
We now prove Theorem~\ref{Thm: Bdry_Condition_Discretized}, which essentially says that testing for the~\BdryC~at a subset of test points is sufficient to check the invariance of the simplex. To establish this, we introduce the notion of adjacent lattice points.

\begin{customdef}{B.1}[Adjacent Lattice Point]
Let $w_1 \in \TP_m \subset \Si$ be a lattice test point. Another lattice point $w_2 \in \TP_m$ is said to be adjacent to $w_1$ if $w_2 = w_1 \pm \ss_m (v_i - v_j)$, for any $v_i$, $v_j \in \V{\Si}$, where $\ss_m$ is defined in~\eqref{Eq: ss_m}. If $w$ is a lattice point, then $\A(w)$ denotes the set consisting of $w$ and all its adjacent points.
\end{customdef}

Adjacent points are easily visualized from Fig.~\ref{fig: Adj_Test_Points}, where the points in $\A(w_1)$ are colored blue. The convex hull of $\A(w_1)$ is shaded and denoted by $\Conv(\A(w_1))$. 
Next, we show that these test points are appropriately spaced apart from each other. Specifically, there is a bound on the maximum distance a test point can be from all other test points. 

\begin{figure}[t]
	\centering
	\subfigure[Adjacent Test Points.]
	{
		\includegraphics[width=0.23 \textwidth]{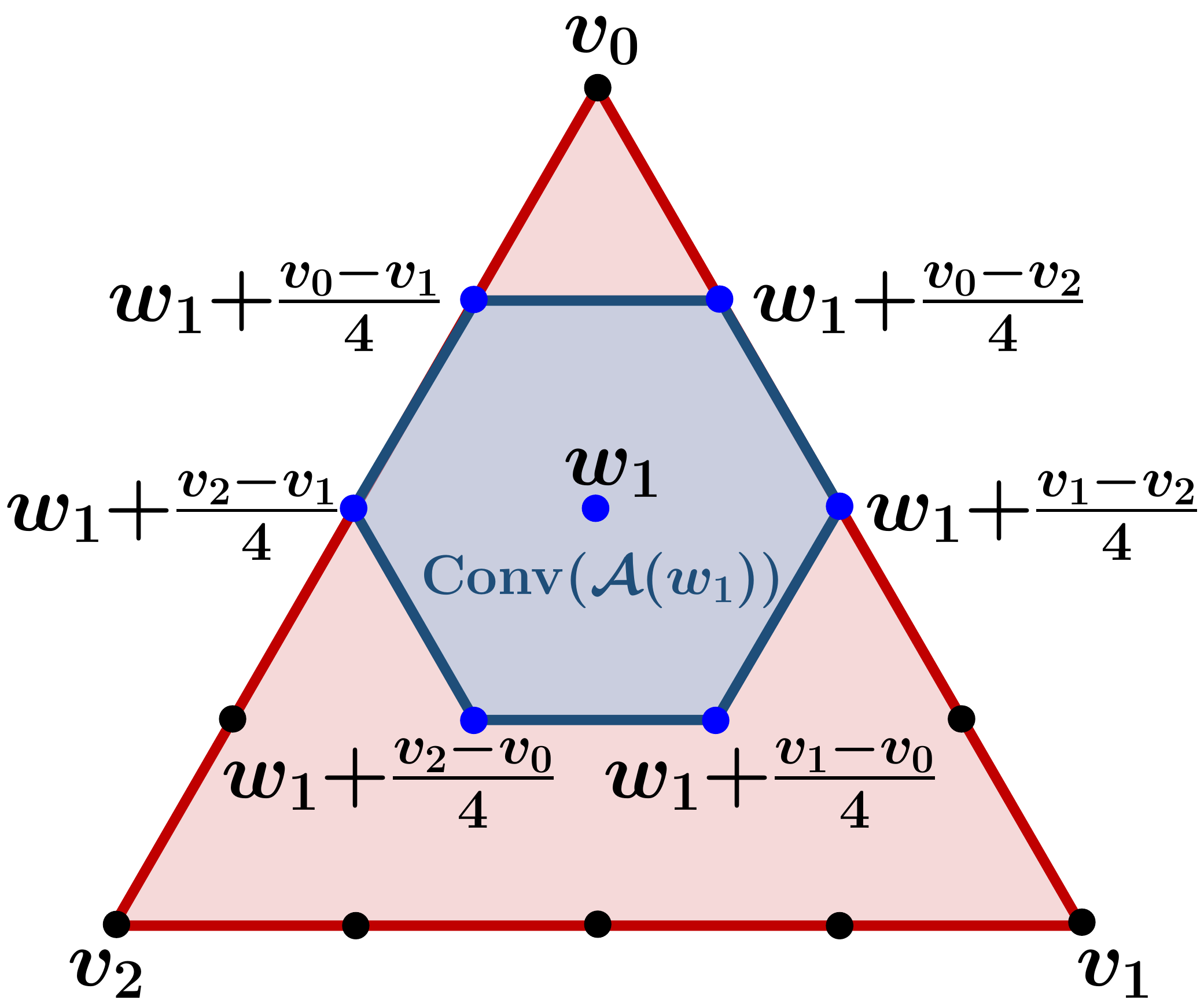}
		\label{fig: Adj_Test_Points}
	}
	\hfil
	\subfigure[Balls on Simplex.]
	{
        \includegraphics[width=0.22 \textwidth]{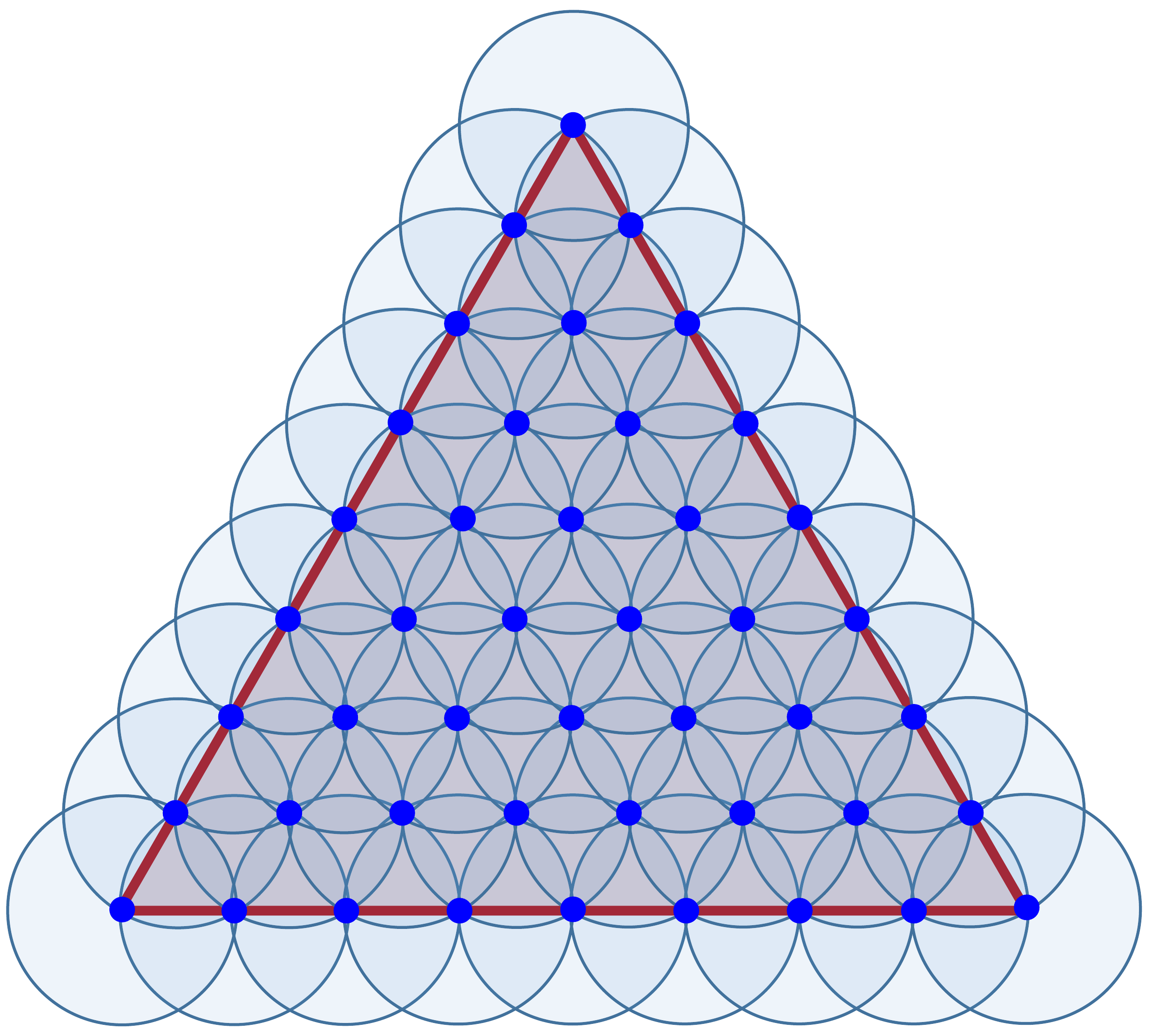}
		\label{fig: Balls_On_Simplex}
	}\\
	\caption{Simplex Proof Illustration.}
	\label{fig: Simplex_Proof_Concepts}
\end{figure}

\begin{customlem}{B.2} \label{Lem: Lattice_Is_Complete}
Let $r = \max_{i,j}{\| v_i - v_j \|}$, where $v_i, v_j \in \V{\Si}$. Further, let $w_i \in \TP_m$ be any lattice test point. Then, $\ w_j \in \A(w_i) \implies \| w_i - w_j \| \leq r \ss_m$.
\end{customlem} 

\begin{proof}
    Let $w_i\in \TP_m$ be a lattice test point on $\Si$. Let $w_j \in \A(w_i)$ so that $w_j = w_i \pm \ss_m (v_i - v_j)$ by definition. Further, it follows for all $w
    _j$ that
    \begin{align}
        \|w_i - w_j \| & = \ss_m  \|v_i - v_j \| \\
                      & \leq \ss_m \max_{i,j} \|v_i - v_j \| \\
                      & \leq  r\ss_m  ,
    \end{align}
    as desired. 
\end{proof}
Lemma~\ref{Lem: Lattice_Is_Complete} shows that a ball of radius $r \ss _m$ around a given lattice test point always contains all adjacent lattice test points. This property of the test points allows us to propose the invariance test in Theorem~\ref{Thm: Bdry_Condition_Discretized}, repeated here as Theorem~\ref{Thm: Bdry_Condition_Discretized_Apx} for convenience.

\begin{customthm}{B.3} \label{Thm: Bdry_Condition_Discretized_Apx}
    Let $\Si$ be a given simplex with $\V{\Si} = \{v_0, \cdots v_{n-1}\}$, $r = \max_{i,j}{\| v_i - v_j \|}$, and $\TP_m$ be the set of test points on $\Si$ for a given $\ss_m$. Suppose that
    \begin{align} \label{Eq: Bdry_Condition_Discretized_Apx}
        \langle F(x), \Normal{\Si} \rangle \leq - r \ss_m \ell, \quad \forall x \in \TP_m
    \end{align}
    where $\ell$ is the Lipschitz constant for the system dynamics given by $F$. Then, $\Si$ is an invariant simplex.
\end{customthm}

\begin{proof}
    Let $\B_k(z)$ represent the ball of radius $k$ centered at $z$. We prove that if~\eqref{Eq: Bdry_Condition_Discretized_Apx} is satisfied for all lattice test points, then every point on $\Si$ satisfies the \BdryC.
    
    From Lemma~\ref{Lem: Lattice_Is_Complete}, we know that $\B_{r\ss_m}(x)$ contains all \emph{adjacent} lattice points. Thus, $\A(x) \subset \B_{r\ss_m}(x)$, and since $\B_{r \ss_m}(x)$ is a convex set, we also have $\Conv (\A(x)) \subset \B_{r \ss_m}(x)$.
    
    Taking the union over all lattice test points, we have
    \begin{align}\label{Eq: Union_of_Hulls}
        & \bigcup_{x \in \TP_m} \Conv(\A(x)) \subset \bigcup_{x \in \TP_m} \B_{r\ss_m}(x) 
    \end{align}
   Since the vertices of the simplex are always lattice test points by construction, it follows that the left hand side of~\eqref{Eq: Union_of_Hulls} is the simplex, $\Si$ itself. Further, from Theorem~\ref{Thm: Lipschitz_Means_Finite}, we know that the~\BdryC~is satisfied at all points in each of the balls, and therefore on all subsets of their union. Since $\Si$ is one such subset, we conclude that $\Si$ is an invariant simplex.
\end{proof}
 A visualization of this proof is provided in Fig.~\ref{fig: Balls_On_Simplex}. The right hand side of~\eqref{Eq: Union_of_Hulls} is the union of the balls on the simplex which contains the simplex itself. 


\bibliographystyle{IEEEtran.bst}
\bibliography{references}


\end{document}